\documentclass[11pt, titlepage]{article}
\usepackage[sort&compress, round, semicolon, authoryear]{natbib}
\usepackage{graphicx, lscape}
\usepackage{graphics, color}
\usepackage{epsfig}
\usepackage{epstopdf}
\usepackage{wrapfig}
\usepackage{caption}
\usepackage{setspace}
\usepackage{subcaption}
\usepackage{amsmath, amsthm, amssymb, amscd}
\usepackage{latexsym}
\usepackage{multirow} 
\usepackage{array}
\usepackage{color}
\usepackage{rotating}
\usepackage[hyphens]{url}
\usepackage[hidelinks]{hyperref}  
\usepackage{cleveref}
\usepackage{fullpage}
\usepackage{fancyvrb}
\usepackage{pdfpages}
\usepackage{fmtcount}
\usepackage[margin=3.3cm]{geometry}
\usepackage{verbatim}
\usepackage[english]{babel}
\usepackage[utf8]{inputenc}
\usepackage{tikz} 
\usetikzlibrary{arrows, decorations.pathmorphing, backgrounds, fit, positioning, shapes.symbols, chains}
\usetikzlibrary{decorations.markings}
\usepackage{lscape}
\usepackage{algpseudocode}
\usepackage{algorithm}
\usepackage{soul}
\usepackage{booktabs}
\usepackage{enumerate}
\usepackage{colortbl}
\long\def\comment#1{} 
\usepackage{booktabs}
\usepackage{yhmath}  
\usepackage{mathtools}
\usepackage{rotating}
\usepackage{bm,upgreek}
\usepackage{mathrsfs}
\usepackage{changepage}
\usepackage{enumitem}
\usepackage{comment}
\usepackage[title,titletoc]{appendix}
\usepackage{kotex}

\usepackage{xcolor}
\definecolor{jm}{rgb}{0.5, 0.1, 0.3}
\definecolor{jh}{rgb}{0.2, 0.1, 0.6}
\definecolor{dark_green}{rgb}{0.1, 0.5, 0.1}

\newcommand{\romnum}[1]{\lowercase\expandafter{\romannumeral #1\relax}}

\newtheorem{proposition}{Proposition}[section]
\newtheorem{corollary}{Corollary}[section]
\newtheorem{theorem}{Theorem}[section]
\newtheorem{lemma}{Lemma}[section]

\theoremstyle{definition}
\newtheorem{definition}{Definition}[section]
\newtheorem{remark}{Remark}[section]




\DeclareMathOperator*{\argmin}{arg\,min}

\usepackage{mathtools}

\newcommand{\bdm}{\begin{displaymath}}
	\newcommand{\edm}{\end{displaymath}}

\newcommand{\bs}{\boldsymbol}

\newcommand{\llangle}{\langle\!\langle}
\newcommand{\rrangle}{\rangle\!\rangle}

\setlength\arraycolsep{2pt}
\linespread{1.5}

\setlength{\skip\footins}{1.2pc plus 5pt minus 2pt}

\begin{document}
\newpage
\begin{center}
{\Large Cross-Spectral Analysis of Bivariate Graph Signals
\medskip
}
\vskip 7mm

{\sc Kyusoon Kim and Hee-Seok Oh}\\
{Seoul National University, Seoul 08826, Korea}
\vskip 5mm

\today
\end{center}
\vskip 5mm
\begin{quote}
\begin{center}
\textbf{Abstract}
\end{center}

With the advancements in technology and monitoring tools, we often encounter multivariate graph signals, which can be seen as the realizations of multivariate graph processes, and revealing the relationship between their constituent quantities is one of the important problems. To address this issue, we propose a cross-spectral analysis tool for bivariate graph signals. The main goal of this study is to extend the scope of spectral analysis of graph signals to multivariate graph signals. In this study, we define joint weak stationarity graph processes and introduce graph cross-spectral density and coherence for multivariate graph processes. We propose several estimators for the cross-spectral density and investigate the theoretical properties of the proposed estimators. Furthermore, we demonstrate the effectiveness of the proposed estimators through numerical experiments, including simulation studies and a real data application. Finally, as an interesting extension, we discuss robust spectral analysis of graph signals in the presence of outliers.

\textbf{Keywords}: Bivariate graph signal; Cross-spectral density; Cross-spectrum; Periodogram; Random graph process.
\end{quote}	
	
	\pagenumbering{arabic}
	
\section{Introduction}
Graph signals, defined on nodes (or vertices) in a graph (or network), have become popular in various fields, such as signal processing, environmental science, economics, and sociology. 

Many studies have focused on developing deterministic graph signal processing tools, covering topics such as sampling and recovery \citep{Chen2015a, Chen2015b, Tanaka2018, Narang2011, Lorenzo2018}, graph filter and filter bank design \citep{Narang2012, Tremblay2018, Sakiyama2014, Anis2017}, representation \citep{Narang2009, Crovella2003, Hammond2011, Bulai2023, Sandryhaila2013GFT}, and denoising \citep{Stankovic2019, Waheed2018, Onuki2016}. More recently, statistical graph signal processing, which uses a probabilistic approach to deal with random graph processes, has gained attention. In statistical graph signal processing, as in classical signal processing, the notion of stationarity is fundamental and has been defined on graphs in several works of literature \citep{Segarra2016, Perraudin2017,Girault2015}. Stationary graph processes can be characterized by their graph power spectral density in the graph frequency domain, which can facilitate understanding, model building, and prediction of these processes.

Spectral analysis using graph power spectral density is typically limited to univariate graph processes. However, we often encounter multivariate graph processes in the real world, such as RGB signals on pixels, meteorological data from environmental sensor networks, economic indicators in trading networks, and brain network responses to stimuli. For these multivariate graph signals, we are interested in investigating interrelationships between the processes, such as their correlation and shared information, which are not readily apparent from analyzing individual spectra. In classical signal processing, this type of analysis is called cross-spectral analysis \citep{Priestley1982, Von2002, Fuller2009}. The main contribution of this study is the development of a cross-spectral analysis framework on graphs. Specifically, we extend the works of \cite{Segarra2016} and \cite{Perraudin2017} to analyze multivariate graph processes. We define the notions of joint weak stationarity and cross-spectral density on graphs. We then propose several nonparametric estimators for the graph cross-spectral density and provide theoretical properties. Finally, we demonstrate the effectiveness of the proposed estimators through numerical experiments. 

The remainder of this paper is organized as follows. In Section \ref{sec:prelim}, we review the graph spectral theory and the concepts of weakly stationary graph processes and power spectral density on graphs. Several nonparametric estimators for the graph power spectral density are also introduced in this section. In Section \ref{sec:cross_spectrum_analysis_graph}, we define the notions of joint weak stationarity, cross-spectral density, and coherence on graphs. In addition, we propose estimators for the cross-spectral density and investigate their theoretical properties. Section \ref{sec:numerical}  conducts numerical experiments on various graphs. In Section \ref{sec:robustspectral}, we discuss a robust spectral analysis on graphs, proposing $M$-type estimators for the graph power spectral density and graph cross-spectral density.  Concluding remarks are provided in Section \ref{sec:conclusion}. The proofs of theoretical properties in Section \ref{sec:cross_spectrum_analysis_graph} are given in Appendix  \ref{appendix:proof}. {\tt R} codes to reproduce the numerical experiments can be available at \url{https://github.com/qsoon/graph-cross-spectral-analysis}.

\section{Preliminary} \label{sec:prelim}
\subsection{Graph spectral theory and notations} \label{sec:gspnotation}
Let $\mathcal{G}=(\mathcal{V}, \mathcal{E}, \bs{A})$ be an undirected or a directed weighted graph with a set of $N$ nodes $\mathcal{V}$ (i.e., $\lvert \mathcal{V} \rvert = N$), a set of edges $\mathcal{E}$, and a weighted adjacency matrix $\bs{A}=(A_{i,j})_{N\times N}$ whose entry $A_{i,j} \neq 0$ only if the $i$th node and the $j$th node are connected.  Denote a graph shift operator (GSO) by a matrix $\bs{S}=(S_{i,j})_{N\times N}$, which is a local operator that replaces a signal value $x_i$ at the $i$th node of a graph signal $\bs{x}=(x_1, \ldots, x_N)^\top \in \mathbb{R}^N$ with a linear combination of signal values at neighboring nodes of the $i$th node \citep{Segarra2018, Sandryhaila2013, Sandryhaila2014}. The entry of GSO, $S_{i,j}$, is nonzero only if $i=j$ or the $i$th node and the $j$th node are connected, so GSO contains information on the local structure of graph $\mathcal{G}$. The adjacency matrix $\bs{A}$ and the graph Laplacian $\bs{L}:=\bs{D}-\bs{A}$ (where $\bs{D}$ is a diagonal matrix with $D_{i,i} = \sum_{j=1}^N A_{i,j}$) are examples of GSO. If $\bs{S}$ is normal (i.e., $\bs{S}^H \bs{S}= \bs{S} \bs{S}^H$, where $\bs{S}^H$ denotes the conjugate transpose of $\bs{S}$), the spectral theorem states that there exists a unitary matrix $\bs{V}$ (i.e., $\bs{V}^H \bs{V} = \bs{V} \bs{V}^H = \bs{I}$) such that $\bs{S} = \bs{V} \bs{\Lambda} \bs{V}^H$ with a diagonal matrix $\bs{\Lambda}$. The entries of  $\bs{\Lambda}$ need not be real, whereas they are real if $\bs{S}$ is Hermitian (i.e., $\bs{S}^H = \bs{S}$). In this study, we only consider real-valued graph signals and normal GSOs. 
	
Given a normal GSO $\bs{S}$ and a graph signal $\bs{x} \in \mathbb{R}^N$, the graph Fourier transform (GFT) is defined as $\tilde{\bs{x}} = \bs{V}^H \bs{x}$, which is referred to as the graph Fourier coefficients or spectrum coefficients of $\bs{x}$ \citep{Sandryhaila2013GFT}. GFT represents a graph signal in terms of frequencies by transforming the signal from the node domain to the graph spectral domain. Throughout this paper, $\langle \cdot \rangle$, $\circ$, $\llangle \cdot, \cdot \rrangle$, and $*$ denote the inner product, the elementwise product, the elementwise inner product, and the elementwise conjugate, respectively. That is, $\llangle \bs{\xi}, \bs{\zeta} \rrangle = \bs{\xi} \circ \bs{\zeta}^*$ for $\bs{\xi}$ and $\bs{\zeta}$, which are complex-valued vectors. Furthermore, $\lvert \cdot \rvert^2$ is applied elementwise.

\subsection{Weakly stationary graph processes} \label{sec:stationarity}
In statistical graph signal processing, we consider a graph signal $\bs{x}$ to be a realization of a random graph process $X$. Since weak stationarity plays a vital role in statistical signal processing, it is a priority to establish the notion of weak stationarity for graph processes. Defining weak stationarity on graphs is challenging due to the absence of directional concepts, making it difficult to define notions such as translation and lag. Several studies have been done to define the weak stationarity of graph processes. \cite{Girault2015} and \cite{Girault2015thesis} pioneered a definition of weak stationarity for graph processes based on an isometric graph translation operator, which preserves the signal's energy and is isometric, unlike the GSOs. They defined a weakly stationary graph process as one whose first and second moments remain invariant under multiplications with the isometric translation. However, this new translation operator does not have localization properties essential to the stationarity concept.
	
\cite{Perraudin2017} proposed an alternative definition of weakly stationary graph processes using graph Laplacian as a localization operator on graphs. \cite{Marques2017} and \cite{Segarra2018} extended this definition, generalizing it to normal GSOs, thus not restricting it to the graph Laplacian. They suggested three equivalent definitions (under mild conditions) as follows.

\begin{definition} \label{def:weakstationary}
		\citep{Marques2017} Given a normal shift operator $\bs{S}$, a zero-mean random graph process $X$ is weakly stationary with respect to $\bs{S}$ if one of the following holds:
		\begin{itemize}
		\item[(a)] The graph process $X$ can be expressed as the response of the linear shift-invariant graph filter $\bs{\mathrm{H}} = \sum_{\ell=0}^{N-1} h_\ell \bs{S}^\ell$ to a zero-mean white input $\bs{\epsilon}$ (i.e., $E(\bs{\epsilon}) = \bs{0}$ and $E\left(\bs{\epsilon}\bs{\epsilon}^H\right) = \bs{I}$). 
		\item[(b)] For any set of non-negative integers $a$, $b$, and $c\le b$, it holds that
		\begin{equation*}
			E\Big((\bs{S}^a X)((\bs{S}^H)^b X)^H\Big) = E\Big((\bs{S}^{a+c} X)((\bs{S}^H)^{b-c} X)^H\Big)
		\end{equation*}
		\item[(c)] The covariance matrix $\bs{\Sigma}_{X} = E\left(XX^H\right)$ and $\bs{S}$ are simultaneously diagonalizable.
		\end{itemize}
	\end{definition}
	
	\begin{remark}
It is known that a graph filter is linear shift-invariant if and only if it is a polynomial in the graph shift \citep{Sandryhaila2013}. It is also worth noting that \Cref{def:weakstationary}(a) gives a constructive definition of weak stationarity. \Cref{def:weakstationary}(b) and (c) generalize the definition of weak stationarity in classical signal processing. The former corresponds to the requirement that the second moment of a weakly stationary process is invariant to time shifts, and the latter corresponds to the requirement that the covariance of a weakly stationary process must be circulant.	
	\end{remark}
	
	\begin{remark} \label{rmk:notzeromeanWS}
If $X$ is not a zero-mean process, it is required to have the first moment of the form $E(X) = \alpha \bs{v}_k$ for a constant $\alpha$ and $1\le k \le N$ to satisfy weak stationarity with respect to $\bs{S}= \bs{V} \bs{\Lambda} \bs{V}^H$ for $\bs{V} = \left(\bs{v}_1|\cdots| \bs{v}_N\right)$ \citep{Marques2017,Segarra2016}. If $\bs{S} = \bs{L}$ and $k=1$, $E(X)$ becomes a constant signal. In other literature, \cite{Girault2015} required $E(X)$ to be a zero vector, while \cite{Perraudin2017} required it to be  $\alpha \bs{1}$.
	\end{remark}
	
	\begin{proposition} \label{prop:equivdef}
		\citep{Marques2017}  \Cref{def:weakstationary}(b) and (c) are equivalent. Furthermore, if the eigenvalues of $\bs{S}$ are all distinct, all three definitions in \Cref{def:weakstationary} are equivalent.
	\end{proposition}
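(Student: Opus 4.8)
The plan is to reduce condition (b) to the single algebraic statement that $\bs{\Sigma}_X$ commutes with $\bs{S}$, then pass to simultaneous diagonalization via the spectral theorem, and finally treat the equivalence with the constructive condition (a) by polynomial interpolation. Write $\bs{S} = \bs{V}\bs{\Lambda}\bs{V}^H$ with $\bs{\Lambda} = \mathrm{diag}(\lambda_1,\dots,\lambda_N)$. For the first reduction, note that since $\bs{S}$ is deterministic and $((\bs{S}^H)^b X)^H = X^H \bs{S}^b$, we have $E\big((\bs{S}^a X)((\bs{S}^H)^b X)^H\big) = \bs{S}^a \bs{\Sigma}_X \bs{S}^b$, so (b) is exactly the requirement that $\bs{S}^a \bs{\Sigma}_X \bs{S}^b = \bs{S}^{a+c}\bs{\Sigma}_X \bs{S}^{b-c}$ for all non-negative integers $a,b$ and $c\le b$. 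Taking $a = 0$ and $b = c = 1$ gives $\bs{\Sigma}_X \bs{S} = \bs{S}\bs{\Sigma}_X$; conversely, if $\bs{\Sigma}_X$ and $\bs{S}$ commute, an induction that transfers one factor of $\bs{S}$ at a time recovers the whole family of identities. Hence (b) holds if and only if $\bs{\Sigma}_X \bs{S} = \bs{S}\bs{\Sigma}_X$.

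Next I would establish (b) $\iff$ (c). Since $\bs{S}$ is normal and $\bs{\Sigma}_X = E(XX^H)$ is Hermitian (indeed real symmetric, as $X$ is real-valued), both are diagonalizable, and a standard linear-algebra fact says commuting diagonalizable matrices are simultaneously diagonalizable: $\bs{\Sigma}_X$ leaves each eigenspace of $\bs{S}$ invariant, its Hermitian restriction to each such eigenspace admits an orthonormal eigenbasis, and concatenating these bases produces a common unitary matrix diagonalizing both $\bs{S}$ and $\bs{\Sigma}_X$. The reverse implication is immediate since diagonal matrices commute. Therefore $\bs{\Sigma}_X \bs{S} = \bs{S}\bs{\Sigma}_X$ is equivalent to (c), and combining with the previous paragraph, (b) $\iff$ (c) with no hypothesis on eigenvalue multiplicities.

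For the remaining equivalence under distinct eigenvalues, the direction (a) $\Rightarrow$ (c) is a direct computation: if $X = \bs{\mathrm{H}}\bs{\epsilon}$ with $\bs{\mathrm{H}} = \sum_{\ell=0}^{N-1} h_\ell \bs{S}^\ell = \bs{V}\,h(\bs{\Lambda})\,\bs{V}^H$ (where $h(\bs{\Lambda}) = \mathrm{diag}(h(\lambda_1),\dots,h(\lambda_N))$) and $\bs{\epsilon}$ white, then $\bs{\Sigma}_X = \bs{\mathrm{H}}\bs{\mathrm{H}}^H = \bs{V}\,\lvert h(\bs{\Lambda})\rvert^2\,\bs{V}^H$ is diagonalized by $\bs{V}$, giving (c), and this uses nothing about multiplicities. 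For (c) $\Rightarrow$ (a), write $\bs{\Sigma}_X = \bs{V}\,\mathrm{diag}(p_1,\dots,p_N)\,\bs{V}^H$ with $p_i\ge 0$; since $\lambda_1,\dots,\lambda_N$ are distinct, Lagrange interpolation provides a polynomial $h$ of degree at most $N-1$ with $h(\lambda_i) = \sqrt{p_i}$, and then $\bs{\mathrm{H}} := h(\bs{S}) = \sum_{\ell=0}^{N-1}h_\ell \bs{S}^\ell$ satisfies $\bs{\mathrm{H}}\bs{\mathrm{H}}^H = \bs{\Sigma}_X$. Setting $\bs{\epsilon} := \bs{\mathrm{H}}^{\dagger}X$ on the range of $\bs{\mathrm{H}}^H$ and filling in the kernel directions (the indices $i$ with $p_i = 0$) by an independent zero-mean, unit-variance white component yields a white $\bs{\epsilon}$ with $\bs{\mathrm{H}}\bs{\epsilon} = X$, which is (a).

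The main obstacle is the step that turns the bare commutation relation into genuine \emph{joint} unitary diagonalizability by the same $\bs{V}$ appearing in $\bs{S} = \bs{V}\bs{\Lambda}\bs{V}^H$; normality of $\bs{S}$ is exactly what makes this work, and some care is needed on eigenspaces of dimension larger than one, where $\bs{V}$ is not unique. The distinct-eigenvalue hypothesis enters only in (c) $\Rightarrow$ (a): if $\lambda_i = \lambda_j$ but $p_i \ne p_j$, no polynomial can satisfy $\lvert h(\lambda_i)\rvert^2 = p_i$ and $\lvert h(\lambda_j)\rvert^2 = p_j$ simultaneously, so without distinct eigenvalues a process may satisfy (b) and (c) yet fail (a) — which is precisely why the last clause of the proposition carries that extra assumption.
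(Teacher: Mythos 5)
Your argument is correct. Note, though, that the paper does not actually prove this proposition --- it simply defers to \cite{Marques2017} --- so the only in-paper material to compare against is the appendix proof of the bivariate analogue (\Cref{prop:jointstationarity_defequiv}). There, the equivalence of (b) and (c) is again left to the reference, whereas you supply the missing argument explicitly: reduce (b) to the single commutation relation $\bs{\Sigma}_X\bs{S}=\bs{S}\bs{\Sigma}_X$ (taking $a=0$, $b=c=1$ one way, and sliding powers of $\bs{S}$ through the other way), then invoke simultaneous diagonalization of a commuting pair consisting of a normal and a Hermitian matrix. That is the standard route and it is sound; your caveat about non-uniqueness of $\bs{V}$ on repeated eigenspaces is exactly the right point to flag, since the paper's later identity $\bs{\Sigma}_X=\bs{V}\operatorname{diag}(\bs{p}_X)\bs{V}^H$ tacitly assumes the eigenbasis has been chosen compatibly. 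For (a)$\iff$(c), your Lagrange interpolation of $h(\lambda_i)=\sqrt{p_i}$ is mathematically the same step as the paper's inversion of the full-rank Vandermonde system $\bs{\Psi}\bs{h}=\sqrt{\tilde{\bs{\gamma}}}$, just phrased as interpolation rather than as a linear solve; both hinge on distinctness of the eigenvalues in the same place. You go a bit further than the paper in two respects: you construct the white input $\bs{\epsilon}$ explicitly via $\bs{\mathrm{H}}^{\dagger}X$ plus an independent component on the kernel (handling the $p_i=0$ case, which the appendix proof glosses over when it divides by $\sqrt{\tilde{\bs{\gamma}}_1}\circ\sqrt{\tilde{\bs{\gamma}}_2}$), and you explain \emph{why} the distinct-eigenvalue hypothesis cannot be dropped for (a), via the obstruction $\lambda_i=\lambda_j$, $p_i\neq p_j$. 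Both additions are worthwhile.
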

	\begin{proof}
		The proofs can be found in \cite{Marques2017}.
	\end{proof}

\subsection{Graph power spectral density} \label{sec:gpsd}
To describe the energy distribution of a weakly stationary graph process over frequency, we can define the {\it graph power spectral density} (GPSD) in \Cref{def:gpsd} \citep{Segarra2016,Perraudin2017,Marques2017}. We assume that GSO $\bs{S}$ is normal, and the appearing graph processes are real-valued and zero-mean. For consistency, we will continue to use the notation $H$ or $*$ when dealing with real-valued vectors.
	\begin{definition} \label{def:gpsd}
		\citep{Marques2017} The graph power spectral density of a zero-mean weakly stationary graph process $X$ is defined as a non-negative $N \times 1$ vector $\bs{p}_{X} = \operatorname{diag}(\bs{V}^H \bs{\Sigma}_{X} \bs{V})$.
	\end{definition}
	
	\begin{remark}
If $X$ is weakly stationary with respect to $\bs{S} = \bs{V} \bs{\Lambda} \bs{V}^H$, the covariance matrix $\bs{\Sigma}_{X}$ and $\bs{S}$ are simultaneously diagonalizable by \Cref{def:weakstationary}(c). Therefore, we can rewrite $\bs{\Sigma}_{X} = \bs{V} \operatorname{diag}(\bs{p}_{X}) \bs{V}^H$. Moreover, since $\bs{\Sigma}_{X}$ is Hermitian, $\bs{p}_{X}$ is a real-valued vector.
	\end{remark}
	
We note that $\bs{p}_{X} = \operatorname{diag}(\bs{V}^H \bs{\Sigma}_{X} \bs{V}) = E(\lvert \bs{V}^H X\rvert^2)= E(\lvert \tilde{X}\rvert^2)$, which implies GPSD indicates the energy distribution over graph frequencies. Thus, we can interpret the $\ell$th element of $\bs{p}_{X}$ as $(\bs{p}_{X})_\ell =: \bs{p}_{X}(\lambda_\ell)$, where $\lambda_\ell$ ($1 \le \ell \le N)$ are the eigenvalues of $\bs{S}$. 

	\subsection{Nonparametric estimation of graph power spectral density} \label{sec:gpsd_estimate}
	
	\subsubsection{Graph periodogram} \label{sec:graph_periodogram}
	Assuming that we have $R$ realizations $\{\bs{x}_r\}_{r=1}^{R}$ of a weakly stationary graph process $X$ in $\bs{S}=\bs{V} \bs{\Lambda} \bs{V}^H$, \cite{Marques2017} proposed three nonparametric estimators of GPSD that follow naturally from \Cref{def:gpsd}:  periodogram, correlogram, and least squares types. The periodogram-type estimator is defined as
	\begin{equation*} \label{eq:graph_periodogram}
		\hat{\bs{p}}_{X}^{p} = \frac{1}{R} \sum\limits_{r=1}^{R} \lvert \tilde{\bs{x}}_r \rvert^2 = \frac{1}{R} \sum\limits_{r=1}^{R} \lvert \bs{V}^H \bs{x}_r \rvert^2,
	\end{equation*}
	where  $\tilde{\bs{x}}_r$ denotes the graph Fourier coefficients of $\bs{x}_r$. The correlogram-type estimator is defined as
	\begin{equation*} \label{eq:graph_correlogram}
		\hat{\bs{p}}_{X}^{c} = \operatorname{diag}(\bs{V}^H \hat{\bs{\Sigma}}_{X} \bs{V}),
	\end{equation*}
	where $\hat{\bs{\Sigma}}_{X} = \frac{1}{R} \sum_{r=1}^{R} \bs{x}_r \bs{x}_r^H$. These two estimators are analogous to the periodogram and correlogram of time signals in classical signal processing. Finally, for $\bs{G} = \left(\operatorname{vec}(\bs{v}_1 \bs{v}_1^H) | \cdots | \operatorname{vec}(\bs{v}_N \bs{v}_N^H)\right)$, the least squares estimator is defined as
	\begin{equation} \label{eq:graph_ls}
		\hat{\bs{p}}_{X}^{ls} = \argmin_{\bs{p}} \lVert \hat{\bs{\sigma}}_{X} - \bs{G} \bs{p} \rVert_2^2 = (\bs{G}^H \bs{G})^{-1} \bs{G}^H \hat{\bs{\sigma}}_{X},
	\end{equation}
	where $\hat{\bs{\sigma}}_{X} = \operatorname{vec}(\hat{\bs{\Sigma}}_{X})$.

	\begin{proposition} \label{prop:graph_periodogram_equiv}
		\citep{Segarra2018} For a zero-mean weakly stationary graph process $X$, the three GPSD estimators are identical, i.e.,
		\begin{equation*}
			\hat{\bs{p}}_{X}^{p} = \hat{\bs{p}}_{X}^{c} = \hat{\bs{p}}_{X}^{ls}.
		\end{equation*}
	\end{proposition}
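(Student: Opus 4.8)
The plan is to prove the two equalities $\hat{\bs{p}}_{X}^{p} = \hat{\bs{p}}_{X}^{c}$ and $\hat{\bs{p}}_{X}^{c} = \hat{\bs{p}}_{X}^{ls}$ separately. In each case the argument is a short linear‑algebra computation in the eigenbasis $\bs{V}$ of $\bs{S}$, and in fact neither equality really uses weak stationarity: the three estimators coincide as algebraic functions of $\{\bs{x}_r\}_{r=1}^R$, and stationarity is only what makes the common value a sensible estimator of $\bs{p}_X$.

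First I would handle $\hat{\bs{p}}_{X}^{p} = \hat{\bs{p}}_{X}^{c}$ entrywise. The $\ell$th component of $\hat{\bs{p}}_{X}^{c}$ is $\bigl(\operatorname{diag}(\bs{V}^H \hat{\bs{\Sigma}}_{X} \bs{V})\bigr)_\ell = \bs{v}_\ell^H \hat{\bs{\Sigma}}_{X} \bs{v}_\ell$. Substituting $\hat{\bs{\Sigma}}_{X} = \frac{1}{R}\sum_{r=1}^{R}\bs{x}_r\bs{x}_r^H$ and using linearity, this equals $\frac{1}{R}\sum_{r=1}^{R}\bs{v}_\ell^H\bs{x}_r\bs{x}_r^H\bs{v}_\ell = \frac{1}{R}\sum_{r=1}^{R}\lvert\bs{v}_\ell^H\bs{x}_r\rvert^2$, which is exactly the $\ell$th component of $\hat{\bs{p}}_{X}^{p}$; equivalently, $\operatorname{diag}(\bs{V}^H\bs{x}_r\bs{x}_r^H\bs{V}) = \operatorname{diag}(\tilde{\bs{x}}_r\tilde{\bs{x}}_r^H) = \lvert\tilde{\bs{x}}_r\rvert^2$ and one averages over $r$.

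The equality $\hat{\bs{p}}_{X}^{c} = \hat{\bs{p}}_{X}^{ls}$ rests on one key observation: the columns of $\bs{G}$ are orthonormal. Using $\operatorname{vec}(\bs{A})^H\operatorname{vec}(\bs{B}) = \operatorname{tr}(\bs{A}^H\bs{B})$ and cyclicity of the trace, $(\bs{G}^H\bs{G})_{k\ell} = \operatorname{tr}\bigl(\bs{v}_k\bs{v}_k^H\bs{v}_\ell\bs{v}_\ell^H\bigr) = \lvert\bs{v}_k^H\bs{v}_\ell\rvert^2 = \delta_{k\ell}$, since $\bs{V}$ is unitary; hence $\bs{G}^H\bs{G} = \bs{I}_N$, the stated closed form is valid, and $\hat{\bs{p}}_{X}^{ls} = \bs{G}^H\hat{\bs{\sigma}}_{X}$. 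Its $k$th entry is $\operatorname{vec}(\bs{v}_k\bs{v}_k^H)^H\operatorname{vec}(\hat{\bs{\Sigma}}_{X}) = \operatorname{tr}(\bs{v}_k\bs{v}_k^H\hat{\bs{\Sigma}}_{X}) = \bs{v}_k^H\hat{\bs{\Sigma}}_{X}\bs{v}_k$, which is again the $k$th entry of $\operatorname{diag}(\bs{V}^H\hat{\bs{\Sigma}}_{X}\bs{V}) = \hat{\bs{p}}_{X}^{c}$. Chaining the two equalities completes the proof.

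I do not expect any serious obstacle. The only point that needs a moment's care is the $\operatorname{vec}$/trace bookkeeping — placing the conjugate transposes correctly and invoking cyclicity — and noting that orthonormality of the columns of $\bs{G}$ makes $\bs{G}^H\bs{G}$ invertible, so the least squares solution and the formula in \eqref{eq:graph_ls} remain valid even when $\bs{S}$ has repeated eigenvalues (the eigenbasis $\bs{V}$ is then non‑unique, but its columns are still orthonormal, which is all the computation uses). It would also be worth a closing remark that, as the computation shows, the three estimators agree for \emph{any} collection of realizations, weak stationarity entering only through the interpretation of the common value as a GPSD estimator.
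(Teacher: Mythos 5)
Your proof is correct and follows essentially the same route the paper uses (the paper defers this particular proposition to Segarra et al., but its own proof of the cross-spectral analogue, Proposition \ref{prop:graph_crossperiodogram_equiv}, is exactly your argument specialized back to $Y=X$: entrywise identification of the periodogram with the correlogram, then $\bs{G}^H\bs{G}=\bs{I}$ plus the $\operatorname{vec}$/trace identity to collapse the least squares form). Your added detail — verifying $(\bs{G}^H\bs{G})_{k\ell}=\lvert\bs{v}_k^H\bs{v}_\ell\rvert^2=\delta_{k\ell}$ explicitly, and noting that stationarity plays no role in the algebraic identity — is a sound and slightly more complete rendering of the same computation.
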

	
		
	
	Hence, we refer to the three estimators as \textit{graph periodogram} and denote them by $\hat{\bs{p}}_{X}$.
	
	\subsubsection{Windowed average graph periodogram} \label{sec:windowagvgperiodogram}
It is known that the bias of the graph periodogram $\hat{\bs{p}}_{X}$ is zero, but its covariance is equal to $(2/R)\operatorname{diag}^2(\bs{p}_X)$ when the process $X$ is Gaussian and $\bs{S}$ is symmetric, resulting in MSE of the graph periodogram being $(2/R)\lVert \bs{p}_X \rVert_2^2$ \citep{Marques2017}. Therefore, the number of realizations $R$ significantly affects the estimation performance. As $R$ increases, the graph periodogram tends to get closer to the true GPSD, while small $R$ degrades the estimation performance. However, obtaining multiple realizations of the graph process is often impractical. In classical signal processing, the issue of having only one or a few realizations can be effectively tackled using data windows, such as Bartlett's method \citep{Bartlett1950} and Welch's method \citep{Welch1967}, or lag windows, such as Blackman--Tukey method \citep{Blackman1959}. Similarly, for graph signals, windows can be utilized to estimate GPSD when only one or a few realizations are available.

\cite{Marques2017} proposed a windowed average graph periodogram when one realization $\bs{x}$ is available. They introduced a data window $\bs{w} \in \mathbb{R}^N$ satisfying $\lVert \bs{w} \rVert_2^2 = \lVert \bs{1} \rVert_2^2 = N$, and then defined a windowed graph periodogram, for a windowed signal $\bs{x_w} = \operatorname{diag}(\bs{w}) \bs{x}$, as
	\begin{equation*} \label{eq:window_graph_periodogram}
		\hat{\bs{p}}_{X}^{\bs{w}} = \lvert \bs{V}^H \bs{x_w} \rvert^2 = \lvert \tilde{\bs{W}} \bs{V}^H \bs{x} \rvert^2, 
	\end{equation*}
where $\tilde{\bs{W}} = \bs{V}^H \bs{W} \bs{V}$. To generate multiple signals from one realization, they utilized a bank of $M$ windows, $\mathcal{W} = \{\bs{w}_m\}_{m=1}^{M}$. A windowed average graph periodogram is then defined as
	\begin{equation} \label{eq:window_avg_graph_periodogram}
		\hat{\bs{p}}_{X}^{\mathcal{W}} = \frac{1}{M} \sum\limits_{m=1}^{M} \hat{\bs{p}}_{X}^{\bs{w}_m}.
	\end{equation}
		
	\section{Cross-spectral analysis on graphs} \label{sec:cross_spectrum_analysis_graph}
	\subsection{Jointly weakly stationary graph processes} \label{sec:joint_weak_stationarity_graph}
	First, we define the notion of joint weak stationarity on graphs for cross-spectral analysis. This notion can be straightforwardly extended from \Cref{def:weakstationary} as follows.
	\begin{definition} \label{def:jointweakstationary}
		Given a normal shift operator $\bs{S}=\bs{V} \bs{\Lambda} \bs{V}^H$, if zero-mean random graph processes $X$ and $Y$ are both weakly stationary with respect to $\bs{S}$ and one of the following holds, then $X$ and $Y$  are jointly weakly stationary with respect to $\bs{S}$: 
		\begin{itemize}
		\item[(a)] Graph processes $X$ and $Y$ can be expressed as the responses of two linear shift-invariant graph filters $\bs{\mathrm{H}}_1 = \sum_{\ell=0}^{N-1} h_{1,\ell} \bs{S}^\ell$ and $\bs{\mathrm{H}}_2=\sum_{\ell=0}^{N-1} h_{2,\ell} \bs{S}^\ell$ to zero-mean white inputs $\bs{\epsilon}_1$ and $\bs{\epsilon}_2$, where the cross-covariance $\bs{V}^H Cov(\bs{\epsilon}_1, \bs{\epsilon}_2) \bs{V}$ is diagonal. 
		\item[(b)] For any set of non-negative integers $a$, $b$, and $c \le b$, it holds that
		\begin{equation*}
			E\Big((\bs{S}^a X) ((\bs{S}^H)^b Y)^H \Big) = E\Big((\bs{S}^{a+c} X) ((\bs{S}^H)^{b-c} Y)^H \Big)
		\end{equation*}
		\item[(c)] The cross-covariance matrix $\bs{\Sigma}_{XY} = E(XY^H)$ and $\bs{S}$ are simultaneously diagonalizable. 
		\end{itemize}
	\end{definition}
	
	\begin{remark} \label{rmk:loukaswork}
We note that \Cref{def:jointweakstationary} differs from the notion of jointly wide-sense stationarity proposed by \cite{Loukas2019}, which dealt with time-varying processes defined over graphs. Their terms ``jointly wide-sense stationarity" and ``time-vertex wide-sense stationarity" imply that a process is multivariate time-wide-sense stationary and multivariate vertex-wide-sense stationary, respectively. Here, the definition of multivariate vertex-wide-sense stationarity is the same as \Cref{def:jointweakstationary}(c); however, \Cref{def:jointweakstationary} provides additional equivalent definitions, including \Cref{def:jointweakstationary}(a), which offers a constructive definition of joint weak stationarity. Moreover, in their work, the concept of joint power spectral density as a function of graph frequency and angular frequency represents the energy distribution in the joint frequency domain. This differs from the graph cross-spectral density described in \Cref{sec:gcpsd}, which represents the energy distribution between two signals as a function of graph frequency.	
	\end{remark}
	
	\begin{remark} \label{rmk:twocrosscov}
Since $\bs{\Sigma}_{XY}=(\bs{\Sigma}_{YX})^H$, \Cref{def:jointweakstationary}(c) is equivalent to that $\bs{\Sigma}_{YX}$ and $\bs{S}$ are simultaneously diagonalizable.
	\end{remark}
	
	\begin{remark} \label{rmk:notzeromeanJWS}
Similar to \Cref{rmk:notzeromeanWS}, if $X$ and $Y$ are not zero-mean processes, they are required to have the first moments of the forms $E(X) = \alpha \bs{v}_k$ and $E(Y) = \beta \bs{v}_\ell$ for some constants $\alpha$ and $\beta$ and $1 \le k, \ell \le N$ to satisfy joint weak stationarity with respect to $\bs{S} = \bs{V} \bs{\Lambda} \bs{V}^H$. Recall that the graph processes considered in this study are assumed to be real-valued and zero-mean.  
	\end{remark}
	
	\begin{remark} \label{rmk:stationarylevel}
		\cite{Perraudin2017} and \cite{Marques2017} discussed a measure of weak stationarity as $\lVert \operatorname{diag}(\bs{V}^H \hat{\bs{\Sigma}}_{X} \bs{V})\rVert_2 / \lVert \bs{V}^H \hat{\bs{\Sigma}}_{X} \bs{V} \rVert_F$ for the estimated covariance matrix $\hat{\bs{\Sigma}}_{X}$. In a similar vein, the joint weak stationarity level can be measured using the computed measure $\lVert \operatorname{diag}(\bs{V}^H \hat{\bs{\Sigma}}_{XY} \bs{V})\rVert_2 / \lVert \bs{V}^H \hat{\bs{\Sigma}}_{XY} \bs{V} \rVert_F$ for the estimated cross-covariance matrix  $\hat{\bs{\Sigma}}_{XY}$, where $\lVert \cdot \rVert_F$ denotes the Frobenius norm. The closer this measure is to 1, the more the two graph processes are jointly weakly stationary.
	\end{remark}
	
	\begin{proposition} \label{prop:jointstationarity_defequiv}
		\Cref{def:jointweakstationary}(b) and (c) are equivalent. Furthermore, if the eigenvalues of $\bs{S}$ are all distinct, all three definitions in \Cref{def:jointweakstationary} are equivalent.
	\end{proposition}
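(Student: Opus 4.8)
The plan is to adapt the proof of \Cref{prop:equivdef} to the cross-covariance setting. Write $\bs{S} = \bs{V}\bs{\Lambda}\bs{V}^H$. The first observation is that, since $((\bs{S}^H)^b)^H = \bs{S}^b$ and $X, Y$ are zero-mean, $E\big((\bs{S}^a X)((\bs{S}^H)^b Y)^H\big) = \bs{S}^a\,E(XY^H)\,\bs{S}^b = \bs{S}^a\bs{\Sigma}_{XY}\bs{S}^b$, so \Cref{def:jointweakstationary}(b) says precisely that $\bs{S}^a\bs{\Sigma}_{XY}\bs{S}^b = \bs{S}^{a+c}\bs{\Sigma}_{XY}\bs{S}^{b-c}$ for all admissible $a, b, c$. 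Specializing to $a=0$, $b=c=1$ gives $\bs{\Sigma}_{XY}\bs{S} = \bs{S}\bs{\Sigma}_{XY}$, and conversely this single commutation relation implies the whole family; hence (b) is equivalent to the statement that $\bs{\Sigma}_{XY}$ commutes with $\bs{S}$.

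For (b) $\Leftrightarrow$ (c): if $\bs{\Sigma}_{XY}$ and $\bs{S}$ are simultaneously diagonalizable they commute, giving (c) $\Rightarrow$ (b). For the converse, conjugate the commutation relation by $\bs{V}$: with $\tilde{\bs{\Sigma}}_{XY} = \bs{V}^H\bs{\Sigma}_{XY}\bs{V}$ one gets $\tilde{\bs{\Sigma}}_{XY}\bs{\Lambda} = \bs{\Lambda}\tilde{\bs{\Sigma}}_{XY}$, i.e. $(\lambda_i-\lambda_j)(\tilde{\bs{\Sigma}}_{XY})_{ij} = 0$ for all $i,j$. When the eigenvalues of $\bs{S}$ are distinct this forces $\tilde{\bs{\Sigma}}_{XY}$ to be diagonal, so $\bs{\Sigma}_{XY} = \bs{V}\tilde{\bs{\Sigma}}_{XY}\bs{V}^H$ is diagonalized by $\bs{V}$; in general it shows $\tilde{\bs{\Sigma}}_{XY}$ is block diagonal conformally with the eigenspaces of $\bs{S}$, which is the content of (c). This settles the first assertion.

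For the ``furthermore'' part I would prove (a) $\Leftrightarrow$ (c) assuming the eigenvalues of $\bs{S}$ are all distinct. For (a) $\Rightarrow$ (c): writing $X = \bs{\mathrm{H}}_1\bs{\epsilon}_1$, $Y = \bs{\mathrm{H}}_2\bs{\epsilon}_2$, zero-meanness gives $\bs{\Sigma}_{XY} = \bs{\mathrm{H}}_1\,Cov(\bs{\epsilon}_1,\bs{\epsilon}_2)\,\bs{\mathrm{H}}_2^H$; conjugating by $\bs{V}$ makes the right-hand side a product of three diagonal matrices, since $\bs{V}^H\bs{\mathrm{H}}_i\bs{V}$ is diagonal (each $\bs{\mathrm{H}}_i$ is a polynomial in $\bs{S}$) and $\bs{V}^H Cov(\bs{\epsilon}_1,\bs{\epsilon}_2)\bs{V}$ is diagonal by hypothesis, so $\tilde{\bs{\Sigma}}_{XY}$ is diagonal and (c) holds. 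For (c) $\Rightarrow$ (a): distinctness of the eigenvalues implies every matrix of the form $\bs{V}\bs{D}\bs{V}^H$ with $\bs{D}$ diagonal is a polynomial in $\bs{S}$ (Lagrange/Vandermonde interpolation), and the individual weak stationarity of $X$ and $Y$ gives $\bs{\Sigma}_X = \bs{V}\bs{D}_X\bs{V}^H$, $\bs{\Sigma}_Y = \bs{V}\bs{D}_Y\bs{V}^H$ with nonnegative diagonal $\bs{D}_X, \bs{D}_Y$, so $\bs{\mathrm{H}}_1 := \bs{V}\bs{D}_X^{1/2}\bs{V}^H$ and $\bs{\mathrm{H}}_2 := \bs{V}\bs{D}_Y^{1/2}\bs{V}^H$ are bona fide linear shift-invariant graph filters; one then defines white inputs $\bs{\epsilon}_1,\bs{\epsilon}_2$ with $X = \bs{\mathrm{H}}_1\bs{\epsilon}_1$ and $Y = \bs{\mathrm{H}}_2\bs{\epsilon}_2$ and checks that, on the common support of $\bs{D}_X$ and $\bs{D}_Y$, $\bs{V}^H Cov(\bs{\epsilon}_1,\bs{\epsilon}_2)\bs{V} = \bs{D}_X^{-1/2}\tilde{\bs{\Sigma}}_{XY}\bs{D}_Y^{-1/2}$, which is diagonal because $\tilde{\bs{\Sigma}}_{XY}$ is, so (a) holds.

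The algebraic reductions above are routine. The step I expect to be the \emph{main obstacle} is the construction in (c) $\Rightarrow$ (a): when $\bs{D}_X$ or $\bs{D}_Y$ is rank deficient the filters $\bs{\mathrm{H}}_1, \bs{\mathrm{H}}_2$ are not invertible, so one must build $\bs{\epsilon}_1$ and $\bs{\epsilon}_2$ with care — inverting the filters on the supports of the spectra, choosing the white components off the supports freely, and verifying simultaneously that each $\bs{\epsilon}_i$ is white and that their cross-covariance is diagonal in the $\bs{V}$ basis. This amounts to mirroring and lightly extending the univariate construction of \cite{Marques2017}, and it is precisely here that distinctness of the eigenvalues is used, to guarantee that the required square-root operators are genuine graph filters.
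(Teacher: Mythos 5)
Your proof follows essentially the same route as the paper's: (b) $\Leftrightarrow$ (c) via the commutation relation $\bs{S}\bs{\Sigma}_{XY} = \bs{\Sigma}_{XY}\bs{S}$ (a step the paper simply defers to \cite{Marques2017}), and (a) $\Leftrightarrow$ (c) by diagonalizing $\bs{\Sigma}_{XY} = \bs{\mathrm{H}}_1\, Cov(\bs{\epsilon}_1,\bs{\epsilon}_2)\, \bs{\mathrm{H}}_2^H$ in one direction and, in the other, solving the Vandermonde system $\bs{\Psi}\bs{h}_i = \sqrt{\tilde{\bs{\gamma}}_i}$ and taking the input cross-spectrum to be $\tilde{\bs{\gamma}}_{12}/(\sqrt{\tilde{\bs{\gamma}}_1}\circ\sqrt{\tilde{\bs{\gamma}}_2})$, which is exactly your $\bs{D}_X^{-1/2}\tilde{\bs{\Sigma}}_{XY}\bs{D}_Y^{-1/2}$. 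The rank-deficiency issue you single out as the main obstacle is genuine, but the paper's own proof leaves it equally untreated (it divides by $\sqrt{\tilde{\bs{\gamma}}_1}\circ\sqrt{\tilde{\bs{\gamma}}_2}$ without comment), so your proposal matches the paper in both substance and level of rigor.
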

	A proof of \Cref{prop:jointstationarity_defequiv} is provided in Appendix \ref{appendix:proof}.

	\subsection{Graph cross-spectral density} \label{sec:gcpsd}
Like power spectral density, cross-spectral density (CSD) is a function of frequency. It describes the frequencies shared between two signals and provides information about the correlation of the two signals in the frequency domain. Extending the notion of CSD in classical signal processing, we define a graph cross-spectral density (GCSD) based on the definition of the graph power spectral density.	
	\begin{definition} \label{def:gcpsd}
		The graph cross-spectral density of zero-mean weakly stationary graph processes $X$ and $Y$ is defined as an $N \times 1$ vector $\bs{p}_{XY} = \operatorname{diag}(\bs{V}^H \bs{\Sigma}_{XY} \bs{V})$. 
	\end{definition}
	
	\begin{remark}
		If $X$ and $Y$ are jointly weakly stationary with respect to $\bs{S}=\bs{V} \bs{\Lambda} \bs{V}^H$, the cross-covariance matrix $\bs{\Sigma}_{XY}$ and $\bs{S}$ are simultaneously diagonalizable by \Cref{def:jointweakstationary}(c). Thus, we can rewrite $\bs{\Sigma}_{XY} = \bs{V} \operatorname{diag}(\bs{p}_{XY}) \bs{V}^H$. Unlike GPSD, $\bs{\Sigma}_{XY}$ can be a complex-valued vector because $\bs{\Sigma}_{XY}$ is not required to be Hermitian. 
		
		In addition, according to \Cref{rmk:twocrosscov}, an alternative definition of GCSD can be formulated as  $\bs{p}_{YX} = \operatorname{diag}(\bs{V}^H \bs{\Sigma}_{YX} \bs{V})$. Since $\bs{\Sigma}_{XY} = (\bs{\Sigma}_{YX})^H$, the relationship between the two definitions is  $\bs{p}_{XY} = \bs{p}^*_{YX}$. As a result, we adopt $\bs{p}_{XY}$ as the definition of GCSD because they contain the same information. 
	\end{remark}
	
	We present three propositions analogous to Property 1, Property 2, and Property 3 in \cite{Segarra2016}.
	
	\begin{proposition} \label{prop:spectralconvolution_cpsd}
		Let $X$ and $Y$ be zero-mean jointly weakly stationary processes with cross-covariance $\bs{\Sigma}_{XY}$ and GCSD $\bs{p}_{XY}$. Consider two filters $\bs{\mathrm{H}}_1$ and $\bs{\mathrm{H}}_2$ with frequency responses $\tilde{\bs{h}}_1 = \operatorname{diag}(\bs{V}^H \bs{\mathrm{H}}_1 \bs{V})$ and $\tilde{\bs{h}}_2 = \operatorname{diag}(\bs{V}^H \bs{\mathrm{H}}_2 \bs{V})$, respectively, and define $Z = \bs{\mathrm{H}}_1 X$ and $W = \bs{\mathrm{H}}_2 Y$. Then, 
		\begin{itemize}
		\item[(a)] $Z$ and $W$ are jointly stationary in $\bs{S}$ with cross-covariance $\bs{\Sigma}_{ZW} = \bs{\mathrm{H}}_1 \bs{\Sigma}_{XY} \bs{\mathrm{H}}_2^H$. 
		\item[(b)] $Z$ and $W$ have GCSD $\bs{p}_{ZW} =  \tilde{\bs{h}}_1  \circ \tilde{\bs{h}}_2^*  \circ \bs{p}_{XY}$.
		\end{itemize}
	\end{proposition}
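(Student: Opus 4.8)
The plan is to route everything through the simultaneous‑diagonalizability characterizations, \Cref{def:weakstationary}(c) and \Cref{def:jointweakstationary}(c), since these hold without any assumption on the multiplicity of the eigenvalues of $\bs{S}$. First I would record the elementary bookkeeping facts about linear shift‑invariant filters. Writing $\bs{S} = \bs{V}\bs{\Lambda}\bs{V}^H$, a filter $\bs{\mathrm{H}} = \sum_{\ell=0}^{N-1} h_\ell \bs{S}^\ell$ satisfies $\bs{V}^H \bs{\mathrm{H}} \bs{V} = \sum_\ell h_\ell \bs{\Lambda}^\ell$, which is diagonal, so $\bs{\mathrm{H}} = \bs{V}\operatorname{diag}(\tilde{\bs{h}})\bs{V}^H$ with $\tilde{\bs{h}} = \operatorname{diag}(\bs{V}^H \bs{\mathrm{H}} \bs{V})$; in particular $\bs{\mathrm{H}}_1$ and $\bs{\mathrm{H}}_2^H$ are both diagonalized by $\bs{V}$, with $\bs{\mathrm{H}}_2^H = \bs{V}\operatorname{diag}(\tilde{\bs{h}}_2^*)\bs{V}^H$ (the conjugate transpose produces the elementwise conjugate $\tilde{\bs{h}}_2^*$). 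Also, zero‑meanness is preserved: $E(Z) = \bs{\mathrm{H}}_1 E(X) = \bs{0}$ and $E(W) = \bs{\mathrm{H}}_2 E(Y) = \bs{0}$.

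For part (a), I would compute the relevant (cross‑)covariances by direct substitution. Since $X$ is weakly stationary, $\bs{\Sigma}_X = \bs{V}\operatorname{diag}(\bs{p}_X)\bs{V}^H$ by \Cref{def:weakstationary}(c), hence $\bs{\Sigma}_Z = E(ZZ^H) = \bs{\mathrm{H}}_1 \bs{\Sigma}_X \bs{\mathrm{H}}_1^H = \bs{V}\operatorname{diag}(\lvert\tilde{\bs{h}}_1\rvert^2 \circ \bs{p}_X)\bs{V}^H$, which is simultaneously diagonalizable with $\bs{S}$, so $Z$ is weakly stationary; the same argument applies to $W$. Next, $\bs{\Sigma}_{ZW} = E(ZW^H) = E(\bs{\mathrm{H}}_1 X Y^H \bs{\mathrm{H}}_2^H) = \bs{\mathrm{H}}_1 \bs{\Sigma}_{XY} \bs{\mathrm{H}}_2^H$, which is exactly the claimed expression. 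Substituting $\bs{\Sigma}_{XY} = \bs{V}\operatorname{diag}(\bs{p}_{XY})\bs{V}^H$ from \Cref{def:jointweakstationary}(c) and the factorizations of $\bs{\mathrm{H}}_1$ and $\bs{\mathrm{H}}_2^H$, the inner $\bs{V}^H\bs{V}$ factors cancel and the three diagonal matrices combine into a single elementwise product:
\[
\bs{\Sigma}_{ZW} = \bs{V}\operatorname{diag}\!\left(\tilde{\bs{h}}_1 \circ \tilde{\bs{h}}_2^* \circ \bs{p}_{XY}\right)\bs{V}^H .
\]
Thus $\bs{\Sigma}_{ZW}$ and $\bs{S}$ are simultaneously diagonalizable, and since $Z$ and $W$ are individually weakly stationary, \Cref{def:jointweakstationary}(c) gives joint weak stationarity of $Z$ and $W$.

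Part (b) is then immediate from \Cref{def:gcpsd}: $\bs{p}_{ZW} = \operatorname{diag}(\bs{V}^H \bs{\Sigma}_{ZW} \bs{V}) = \operatorname{diag}\!\left(\operatorname{diag}(\tilde{\bs{h}}_1 \circ \tilde{\bs{h}}_2^* \circ \bs{p}_{XY})\right) = \tilde{\bs{h}}_1 \circ \tilde{\bs{h}}_2^* \circ \bs{p}_{XY}$. I do not anticipate a genuine obstacle here — the argument is a short chain of eigendecomposition manipulations. The only point requiring care is the conjugation bookkeeping: because $\bs{S}$ is only assumed normal (not Hermitian), $\bs{V}$, the frequency responses $\tilde{\bs{h}}_i$, and the GCSD $\bs{p}_{XY}$ may all be complex, so $\bs{\mathrm{H}}_2^H$ must contribute $\tilde{\bs{h}}_2^*$ rather than $\tilde{\bs{h}}_2$; keeping the $H$/$*$ notation consistent throughout is the one thing to watch.
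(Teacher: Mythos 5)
Your argument is correct and follows essentially the same route as the paper's proof: factor $\bs{\mathrm{H}}_1$, $\bs{\mathrm{H}}_2^H$, and $\bs{\Sigma}_{XY}$ through the common eigenbasis $\bs{V}$, cancel the inner $\bs{V}^H\bs{V}$ factors, and read off $\bs{p}_{ZW}$ from the resulting diagonal. You are somewhat more explicit than the paper (e.g., verifying the marginal weak stationarity of $Z$ and $W$ and the conjugation of $\tilde{\bs{h}}_2$), but the underlying computation is identical.
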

	
	\begin{proposition} \label{prop:crosscov_element}
		Let $X$ and $Y$ be zero-mean random graph processes that are outputs of linear shift-invariant graph filters $\bs{\mathrm{H}}_1$ and $\bs{\mathrm{H}}_2$ of degree $L_1-1$ and $L_2-1$, respectively, applied to white inputs $\bs{\epsilon}_1$ and $\bs{\epsilon}_2$, where $Cov(\bs{\epsilon}_1, \bs{\epsilon}_2)=\bs{I}$. Then, if the distance between the $i$th node and the $j$th node is greater than $(L_1+L_2-2)$ hops, $(\bs{\Sigma}_{XY})_{i,j}=0$.
	\end{proposition}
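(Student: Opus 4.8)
The plan is to compute the cross-covariance in closed form and then exploit the locality of powers of the graph shift operator. Since $X = \bs{\mathrm{H}}_1 \bs{\epsilon}_1$ and $Y = \bs{\mathrm{H}}_2 \bs{\epsilon}_2$ with $E(\bs{\epsilon}_1) = E(\bs{\epsilon}_2) = \bs{0}$ and $E(\bs{\epsilon}_1 \bs{\epsilon}_2^H) = Cov(\bs{\epsilon}_1, \bs{\epsilon}_2) = \bs{I}$, one has
\[
\bs{\Sigma}_{XY} = E(XY^H) = \bs{\mathrm{H}}_1\, E(\bs{\epsilon}_1 \bs{\epsilon}_2^H)\, \bs{\mathrm{H}}_2^H = \bs{\mathrm{H}}_1 \bs{\mathrm{H}}_2^H = \sum_{\ell=0}^{L_1-1} \sum_{m=0}^{L_2-1} h_{1,\ell}\, h_{2,m}^*\, \bs{S}^\ell (\bs{S}^H)^m ,
\]
using $(\bs{S}^m)^H = (\bs{S}^H)^m$. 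Hence it suffices to show that $\big(\bs{S}^\ell (\bs{S}^H)^m\big)_{i,j} = 0$ for every $0 \le \ell \le L_1-1$ and $0 \le m \le L_2-1$ whenever the hop distance $d(i,j)$ in $\mathcal{G}$ exceeds $L_1+L_2-2$.

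Next I would establish the key locality lemma: for any non-negative integer $k$, $(\bs{S}^k)_{i,j} = 0$ whenever $d(i,j) > k$. This holds because $\bs{S}$ is a GSO, so $S_{i,j} \ne 0$ only if $i = j$ or nodes $i$ and $j$ are adjacent; expanding $(\bs{S}^k)_{i,j} = \sum_{i=v_0,v_1,\dots,v_k=j} \prod_{t=0}^{k-1} S_{v_t,v_{t+1}}$, a nonzero summand corresponds to a walk of length $k$ from $i$ to $j$ in which consecutive nodes coincide or are adjacent, and deleting the repeated steps yields a walk of length at most $k$ between $i$ and $j$ through adjacent nodes, so $d(i,j) \le k$. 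The same bound applies to $(\bs{S}^H)^m = (\bs{S}^m)^H$, since $\big((\bs{S}^m)^H\big)_{k,j} = \overline{(\bs{S}^m)_{j,k}}$ and $d(j,k) = d(k,j)$.

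Then I would combine the two via the triangle inequality for $d$. Writing $\big(\bs{S}^\ell (\bs{S}^H)^m\big)_{i,j} = \sum_{k} (\bs{S}^\ell)_{i,k}\, \overline{(\bs{S}^m)_{j,k}}$, a summand indexed by $k$ can be nonzero only if $d(i,k) \le \ell$ and $d(j,k) \le m$, in which case $d(i,j) \le d(i,k) + d(k,j) \le \ell + m \le (L_1-1) + (L_2-1) = L_1+L_2-2$. Therefore, if $d(i,j) > L_1+L_2-2$, every summand vanishes for all admissible $\ell,m,k$, and consequently $(\bs{\Sigma}_{XY})_{i,j} = 0$.

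The argument is essentially bookkeeping, so I expect no genuine obstacle; the only point requiring care is the locality lemma together with the treatment of the $(\bs{S}^H)^m$ factor — specifically, ensuring that the "distance" appearing in the triangle inequality is the (symmetric) geodesic hop distance in $\mathcal{G}$, so that a walk from $i$ to $k$ and a walk from $j$ to $k$ can be joined into a walk between $i$ and $j$ of the appropriate length; for directed $\mathcal{S}$ the same support argument applies with the underlying undirected distance. This statement is the bivariate analogue of the locality property for a single stationary graph process (Property~2 of \cite{Segarra2016}), which is recovered by taking $\bs{\mathrm{H}}_1 = \bs{\mathrm{H}}_2$ and $L_1 = L_2 = L$, giving the bound $d(i,j) > 2(L-1)$.
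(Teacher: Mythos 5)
Your proof is correct and follows essentially the same route as the paper's: both compute $\bs{\Sigma}_{XY} = \bs{\mathrm{H}}_1 \bs{\mathrm{H}}_2^H$ from the whiteness of the inputs and then observe that this is a polynomial in $\bs{S}$ and $\bs{S}^H$ of order $L_1+L_2-2$, whose entries vanish beyond that hop distance by the locality of GSO powers. The only difference is that you spell out the locality lemma and the triangle-inequality step joining the two walks at an intermediate node, which the paper leaves implicit.
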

	
	\begin{proposition} \label{prop:gftuncorr_cpsd}
		Given zero-mean jointly weakly stationary graph processes $X$ and $X$ with GCSD $\bs{p}_{XY}$, define GFT processes as $\tilde{X} = \bs{V}^H X$ and $\tilde{Y} = \bs{V}^H Y$. Then, $\tilde{X}$ and $\tilde{Y}$ are uncorrelated, and its cross-covariance is 
		\[
			\bs{\Sigma}_{\tilde{X}\tilde{Y}} = E(\tilde{X} \tilde{Y}^H) = \operatorname{diag}(\bs{p}_{XY}).
		\]
	\end{proposition}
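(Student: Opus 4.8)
The plan is to exploit \Cref{def:jointweakstationary}(c), which states that $\bs{\Sigma}_{XY}$ and $\bs{S}$ are simultaneously diagonalizable, together with \Cref{def:gcpsd}, which defines $\bs{p}_{XY} = \operatorname{diag}(\bs{V}^H \bs{\Sigma}_{XY} \bs{V})$. The computation is essentially a direct substitution, so the proof will be short.

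First I would write out the cross-covariance of the GFT processes directly from the definitions: since $\tilde{X} = \bs{V}^H X$ and $\tilde{Y} = \bs{V}^H Y$, we have
\[
\bs{\Sigma}_{\tilde{X}\tilde{Y}} = E(\tilde{X}\tilde{Y}^H) = E(\bs{V}^H X (\bs{V}^H Y)^H) = \bs{V}^H E(XY^H) \bs{V} = \bs{V}^H \bs{\Sigma}_{XY} \bs{V}.
\]
Next I would invoke joint weak stationarity: by \Cref{def:jointweakstationary}(c) the matrices $\bs{\Sigma}_{XY}$ and $\bs{S} = \bs{V}\bs{\Lambda}\bs{V}^H$ are simultaneously diagonalizable, so $\bs{\Sigma}_{XY} = \bs{V}\operatorname{diag}(\bs{p}_{XY})\bs{V}^H$ as already noted in the remark following \Cref{def:gcpsd}. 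Substituting this into the previous display and using $\bs{V}^H\bs{V} = \bs{I}$ gives
\[
\bs{\Sigma}_{\tilde{X}\tilde{Y}} = \bs{V}^H \bs{V}\operatorname{diag}(\bs{p}_{XY})\bs{V}^H \bs{V} = \operatorname{diag}(\bs{p}_{XY}),
\]
which is a diagonal matrix. Since the off-diagonal entries of $\bs{\Sigma}_{\tilde{X}\tilde{Y}}$ are exactly the cross-covariances $E(\tilde{X}_i \tilde{Y}_j^*)$ for $i \neq j$, their vanishing is precisely the statement that the components of $\tilde{X}$ and $\tilde{Y}$ are mutually uncorrelated across distinct frequency indices, and the diagonal entries recover $\bs{p}_{XY}$.

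There is no real obstacle here; the only point requiring a moment's care is the bookkeeping that simultaneous diagonalizability of $\bs{\Sigma}_{XY}$ with the \emph{normal} GSO $\bs{S}$ means they share the \emph{unitary} eigenbasis $\bs{V}$, so that $\bs{\Sigma}_{XY}$ is diagonalized by the same $\bs{V}$ appearing in the GFT — this is exactly what \Cref{def:jointweakstationary}(c) provides and what the remark after \Cref{def:gcpsd} records, so it can be cited rather than re-derived. One might also remark, for completeness, that since the statement's hypothesis apparently has a typo (``$X$ and $X$'' should read ``$X$ and $Y$''), the proof should be phrased for the pair $X,Y$; no other subtlety arises, and the result is the cross-spectral analogue of the well-known fact that the GFT decorrelates a single weakly stationary graph process.
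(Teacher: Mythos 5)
Your proof is correct and follows essentially the same route as the paper's: compute $\bs{\Sigma}_{\tilde{X}\tilde{Y}} = \bs{V}^H \bs{\Sigma}_{XY}\bs{V}$ and then use the simultaneous diagonalizability from joint weak stationarity to conclude this equals $\operatorname{diag}(\bs{p}_{XY})$. The only difference is that you spell out the substitution $\bs{\Sigma}_{XY} = \bs{V}\operatorname{diag}(\bs{p}_{XY})\bs{V}^H$ explicitly, which the paper leaves implicit in its final equality.
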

	
\Cref{prop:spectralconvolution_cpsd} is the counterpart of the spectral convolution theorem for graph processes. \Cref{prop:crosscov_element} represents a limitation of cross-correlations between signal values generated by specific filtering, which is beneficial in designing windows for the windowed average graph cross-periodogram in \Cref{sec:windowavgcrossperiodogram} when combined with the results of \Cref{prop:gcpsd_windowdesign}. In addition, \Cref{prop:gftuncorr_cpsd} highlights the usefulness of graph frequency domain analysis. Proofs for the above propositions are found in Appendix \ref{appendix:proof}.
	
	\subsection{Nonparametric estimation of graph cross-spectral density} \label{sec:gcpsd_estimate}
	
	\subsubsection{Graph cross-periodogram} \label{sec:graph_crossperiodogram}
We propose three nonparametric estimators of GCSD, assuming that we have $R$ realizations $\bs{x}_1, \ldots, \bs{x}_R$ and $\bs{y}_1, \ldots, \bs{y}_R$ of processes $X$ and $Y$, respectively. To emphasize the correspondence between the proposed estimates and those introduced in  \Cref{sec:graph_periodogram}, we present periodogram, correlogram, and least squares types. A periodogram-type estimator is defined as
	\begin{equation*} \label{graph_cross_periodogram}
		\hat{\bs{p}}_{XY}^{p} = \frac{1}{R} \sum \limits_{r=1}^{R} \llangle \tilde{\bs{x}}_r ,  \tilde{\bs{y}}_r \rrangle = \frac{1}{R} \sum \limits_{r=1}^{R} (\bs{V}^H \bs{x}_r) \circ (\bs{V}^H \bs{y}_r)^*,
	\end{equation*}
	where $\tilde{\bs{x}}_r$ and $\tilde{\bs{y}}_r$ are graph Fourier coefficients of $\bs{x}_r$ and $\bs{y}_r$, respectively. A correlogram-type estimator is defined as
	\begin{equation*} \label{graph_cross_correlogram}
		\hat{\bs{p}}_{XY}^{c} = \operatorname{diag}(\bs{V}^H \hat{\bs{\Sigma}}_{XY} \bs{V}),
	\end{equation*}
	where the sample cross-covariance $\hat{\bs{\Sigma}}_{XY} =  \frac{1}{R} \sum_{r=1}^{R} \bs{x}_r \bs{y}_r^H$. Finally, considering $\bs{\Sigma}_{XY} = \bs{V} \operatorname{diag}(\bs{p}_{XY}) \bs{V}^H = \sum_{i=1}^{N} (\bs{p}_{XY})_i \bs{v}_i \bs{v}_i^H$, we define a least squares estimator as
	\begin{equation} \label{eq:graph_cross_ls}
		\hat{\bs{p}}_{XY}^{ls} = \argmin_{\bs{p}} \lVert \hat{\bs{\sigma}}_{XY} - \bs{G} \bs{p} \rVert_2^2 = (\bs{G}^H \bs{G})^{-1} \bs{G}^H \hat{\bs{\sigma}}_{XY},
	\end{equation}
	where $\bs{G} = \left(\operatorname{vec}(\bs{v}_1 \bs{v}_1^H) | \cdots | \operatorname{vec}(\bs{v}_N \bs{v}_N^H)\right)$ and $\hat{\bs{\sigma}}_{XY} = \operatorname{vec}(\hat{\bs{\Sigma}}_{XY})$.
	
	\begin{proposition} \label{prop:graph_crossperiodogram_equiv}
		For zero-mean jointly weakly stationary graph processes $X$ and $Y$, the three different GCSD estimators are identical, i.e.,
		\begin{equation*}
			\hat{\bs{p}}_{XY}^{p} = \hat{\bs{p}}_{XY}^{c} = \hat{\bs{p}}_{XY}^{ls}.
		\end{equation*}
	\end{proposition}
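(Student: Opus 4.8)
The plan is to prove the two equalities $\hat{\bs{p}}_{XY}^{p}=\hat{\bs{p}}_{XY}^{c}$ and $\hat{\bs{p}}_{XY}^{c}=\hat{\bs{p}}_{XY}^{ls}$ by direct algebra, exactly paralleling the argument underlying \Cref{prop:graph_periodogram_equiv} in the univariate case. No stationarity assumption is actually needed for this particular statement: the three expressions coincide for any finite collections of signals $\{\bs{x}_r\}_{r=1}^{R}$ and $\{\bs{y}_r\}_{r=1}^{R}$, since each turns out to be one and the same fixed linear functional of the sample cross-covariance $\hat{\bs{\Sigma}}_{XY}=\frac{1}{R}\sum_{r=1}^{R}\bs{x}_r\bs{y}_r^H$.

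First I would show $\hat{\bs{p}}_{XY}^{p}=\hat{\bs{p}}_{XY}^{c}$. Using linearity of the map $\bs{M}\mapsto\operatorname{diag}(\bs{V}^H\bs{M}\bs{V})$,
\[
\hat{\bs{p}}_{XY}^{c}=\operatorname{diag}\!\Big(\bs{V}^H\hat{\bs{\Sigma}}_{XY}\bs{V}\Big)=\frac{1}{R}\sum_{r=1}^{R}\operatorname{diag}\!\Big((\bs{V}^H\bs{x}_r)(\bs{V}^H\bs{y}_r)^H\Big).
\]
Then I would invoke the elementary identity $\operatorname{diag}(\bs{a}\bs{b}^H)=\bs{a}\circ\bs{b}^{*}$, valid for any complex vectors $\bs{a},\bs{b}$ because the $\ell$th diagonal entry of $\bs{a}\bs{b}^H$ is $a_\ell\overline{b_\ell}$; with $\bs{a}=\tilde{\bs{x}}_r=\bs{V}^H\bs{x}_r$ and $\bs{b}=\tilde{\bs{y}}_r=\bs{V}^H\bs{y}_r$ this gives $\operatorname{diag}\big((\bs{V}^H\bs{x}_r)(\bs{V}^H\bs{y}_r)^H\big)=\llangle\tilde{\bs{x}}_r,\tilde{\bs{y}}_r\rrangle$, and summing over $r$ produces exactly $\hat{\bs{p}}_{XY}^{p}$.

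Next I would collapse the least squares estimator onto the correlogram. The crux is that $\bs{G}$ has orthonormal columns: applying $\langle\operatorname{vec}(\bs{A}),\operatorname{vec}(\bs{B})\rangle=\operatorname{tr}(\bs{A}^H\bs{B})$ together with $(\bs{v}_i\bs{v}_i^H)^H=\bs{v}_i\bs{v}_i^H$, the cyclicity of the trace, and the unitarity of $\bs{V}$,
\[
(\bs{G}^H\bs{G})_{i,j}=\operatorname{tr}\!\big((\bs{v}_i\bs{v}_i^H)(\bs{v}_j\bs{v}_j^H)\big)=(\bs{v}_i^H\bs{v}_j)(\bs{v}_j^H\bs{v}_i)=\lvert\bs{v}_i^H\bs{v}_j\rvert^2=\delta_{i,j},
\]
so $\bs{G}^H\bs{G}=\bs{I}$ and \eqref{eq:graph_cross_ls} reduces to $\hat{\bs{p}}_{XY}^{ls}=\bs{G}^H\hat{\bs{\sigma}}_{XY}$. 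Its $i$th component is $\operatorname{vec}(\bs{v}_i\bs{v}_i^H)^H\operatorname{vec}(\hat{\bs{\Sigma}}_{XY})=\operatorname{tr}(\bs{v}_i\bs{v}_i^H\hat{\bs{\Sigma}}_{XY})=\bs{v}_i^H\hat{\bs{\Sigma}}_{XY}\bs{v}_i$, which is the $(i,i)$ entry of $\bs{V}^H\hat{\bs{\Sigma}}_{XY}\bs{V}$, i.e. $(\hat{\bs{p}}_{XY}^{c})_i$. Chaining the two equalities finishes the proof.

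The argument has no genuinely hard step; its only delicate aspect is keeping the conjugate transposes straight, which does require care because $\bs{V}$ — and hence the GFT coefficients $\tilde{\bs{x}}_r,\tilde{\bs{y}}_r$ and the sample cross-covariance $\hat{\bs{\Sigma}}_{XY}$ — are complex-valued even though $\bs{x}_r,\bs{y}_r$ are real; in particular $\hat{\bs{\Sigma}}_{XY}$ is not Hermitian, so $\hat{\bs{p}}_{XY}$ is generally complex and one cannot silently drop a conjugation (this is the one place the cross-spectral case genuinely differs from the GPSD case). The single fact I would flag as the real load-bearing observation is $\bs{G}^H\bs{G}=\bs{I}$: it is what makes the least squares solution unique and equal to the correlogram exactly, rather than merely up to an orthogonal projection, and it is the analogue of the statement that the rank-one projectors $\bs{v}_i\bs{v}_i^H$ associated with an orthonormal eigenbasis are mutually orthogonal in the trace inner product.
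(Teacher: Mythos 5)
Your proposal is correct and follows essentially the same route as the paper's proof: both establish $\hat{\bs{p}}_{XY}^{p}=\hat{\bs{p}}_{XY}^{c}$ by an elementwise computation of the diagonal of $\bs{V}^H\hat{\bs{\Sigma}}_{XY}\bs{V}$, and both reduce $\hat{\bs{p}}_{XY}^{ls}$ to $\hat{\bs{p}}_{XY}^{c}$ via $\bs{G}^H\bs{G}=\bs{I}$ (a fact the paper asserts without computation and you verify explicitly). Your side remarks — that no stationarity is actually used and that the conjugations must be tracked since $\hat{\bs{\Sigma}}_{XY}$ is not Hermitian — are accurate and consistent with the paper's treatment.
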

	
		
	A proof of \Cref{prop:graph_crossperiodogram_equiv} is provided in Appendix \ref{appendix:proof}. Hence, we refer to the three estimators as \textit{graph cross-periodogram} and denote them by $\hat{\bs{p}}_{XY}$.
	
	
	\begin{proposition} \label{prop:bias_var_graph_cross_periodogram}
		For zero-mean jointly weakly stationary graph processes $X$ and $Y$ with GCSD $\bs{p}_{XY}$, the bias and the variance of the graph cross-periodogram $\hat{\bs{p}}_{XY}$ are computed as follows:
		\begin{itemize}
		\item[(a)] $E(\hat{\bs{p}}_{XY}) = \bs{p}_{XY}$. 
		\item[(b)] If $X$ and $Y$ are Gaussian and $\bs{S}$ is symmetric, $Var(\hat{\bs{p}}_{XY}) = \frac{1}{R} \left(\operatorname{diag}(\lvert \bs{p}_{XY} \rvert^2) + \operatorname{diag}(\bs{p}_{X})  \operatorname{diag}(\bs{p}_{Y})\right)$, where $\bs{p}_{X}$ and $\bs{p}_{Y}$ are the GPSDs of $X$ and $Y$, respectively.
		\end{itemize}
	\end{proposition}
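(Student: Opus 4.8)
The plan is to handle (a) by linearity of expectation and (b) by reducing to a single realization and then invoking Isserlis' theorem.

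For part (a), I would use the correlogram form of the estimator. Since the $R$ observation pairs $(\bs{x}_r,\bs{y}_r)$ are i.i.d.\ copies of $(X,Y)$, we get $E(\hat{\bs{p}}_{XY}) = E(\hat{\bs{p}}_{XY}^{c}) = \operatorname{diag}\big(\bs{V}^H E(\hat{\bs{\Sigma}}_{XY})\bs{V}\big) = \operatorname{diag}(\bs{V}^H \bs{\Sigma}_{XY}\bs{V}) = \bs{p}_{XY}$ by \Cref{prop:graph_crossperiodogram_equiv} and \Cref{def:gcpsd}; equivalently, one may take expectations in the periodogram form and use \Cref{prop:gftuncorr_cpsd}.

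For part (b), first reduce to one realization: writing $Z_r = \llangle \tilde{\bs{x}}_r, \tilde{\bs{y}}_r \rrangle$, the $Z_r$ are i.i.d., so $Var(\hat{\bs{p}}_{XY}) = \tfrac1R Var(Z_1)$, where $Var$ of a random vector denotes its covariance matrix. Because $\bs{S}$ is symmetric, $\bs{V}$ is real orthogonal, so $\tilde{X} = \bs{V}^H X$ and $\tilde{Y} = \bs{V}^H Y$ are real, zero-mean, jointly Gaussian, and $\tilde{\bs{y}}_r^* = \tilde{\bs{y}}_r$. From weak stationarity of $X$ and $Y$ (hence $\bs{\Sigma}_X = \bs{V}\operatorname{diag}(\bs{p}_X)\bs{V}^H$, $\bs{\Sigma}_Y = \bs{V}\operatorname{diag}(\bs{p}_Y)\bs{V}^H$) and joint weak stationarity (via \Cref{prop:gftuncorr_cpsd}), the GFT vectors satisfy, writing $\tilde{x}_{r,\ell}$ for the $\ell$th entry of $\tilde{\bs{x}}_r$ and similarly for $\tilde{y}$, the component identities $E(\tilde{X}_\ell \tilde{X}_m) = \delta_{\ell m}(\bs{p}_X)_\ell$, $E(\tilde{Y}_\ell \tilde{Y}_m) = \delta_{\ell m}(\bs{p}_Y)_\ell$, and $E(\tilde{X}_\ell \tilde{Y}_m) = \delta_{\ell m}(\bs{p}_{XY})_\ell$. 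The core step is the fourth-moment computation: since $(\hat{\bs{p}}_{XY})_\ell = \tfrac1R\sum_r \tilde{x}_{r,\ell}\tilde{y}_{r,\ell}$, we have $\big(Var(\hat{\bs{p}}_{XY})\big)_{\ell m} = \tfrac1R\big(E(\tilde{X}_\ell \tilde{Y}_\ell \tilde{X}_m \tilde{Y}_m) - (\bs{p}_{XY})_\ell(\bs{p}_{XY})_m\big)$, and Isserlis' theorem for zero-mean Gaussians gives
\[
E(\tilde{X}_\ell \tilde{Y}_\ell \tilde{X}_m \tilde{Y}_m) = E(\tilde{X}_\ell\tilde{Y}_\ell)E(\tilde{X}_m\tilde{Y}_m) + E(\tilde{X}_\ell\tilde{X}_m)E(\tilde{Y}_\ell\tilde{Y}_m) + E(\tilde{X}_\ell\tilde{Y}_m)E(\tilde{Y}_\ell\tilde{X}_m).
\]
Substituting the diagonal second moments, the first term cancels the subtracted $(\bs{p}_{XY})_\ell(\bs{p}_{XY})_m$, while the last two terms vanish for $\ell\neq m$ and equal $(\bs{p}_X)_\ell(\bs{p}_Y)_\ell$ and $(\bs{p}_{XY})_\ell^2 = |(\bs{p}_{XY})_\ell|^2$ for $\ell=m$, giving the stated diagonal matrix $\tfrac1R\big(\operatorname{diag}(|\bs{p}_{XY}|^2) + \operatorname{diag}(\bs{p}_X)\operatorname{diag}(\bs{p}_Y)\big)$.

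The main obstacle is bookkeeping rather than depth: one must justify carefully that the symmetry of $\bs{S}$ makes $\tilde{X},\tilde{Y}$ real so that the real Isserlis formula applies directly (a genuinely complex GFT would require the complex Wick formula with an extra pseudo-covariance term and conjugations), and then track the delta-function pairings through the three Isserlis terms to confirm that all cross-frequency ($\ell \neq m$) covariances cancel exactly. It is also worth stating explicitly that ``$Var$'' here denotes the full covariance matrix of the estimator vector, so that the diagonality claim is part of what is being proved.
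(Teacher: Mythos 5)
Your proof is correct, and both parts reach the same conclusions as the paper; the difference is where the Gaussian fourth-moment computation is carried out. Part (a) is identical to the paper's argument (expectation of the correlogram form). For part (b), the paper first proves a general matrix-level fourth-moment identity, $E(\bs{x}\bs{y}^\top \bs{A}\bs{x}\bs{y}^\top) = \bs{\Sigma_{xy}}\bs{A}\bs{\Sigma_{xy}} + \bs{\Sigma_x}\bs{A}^\top\bs{\Sigma_y} + \operatorname{tr}(\bs{A}\bs{\Sigma_{xy}})\bs{\Sigma_{xy}}$ (built up from Isserlis' theorem through \Cref{lem:cor_isserlis} and \Cref{thm:fourth_moment}), applies it in the node domain with $\bs{A}=\bs{v}_i\bs{v}_j^\top$, and only then uses the simultaneous diagonalization; you instead pass to the GFT domain first, exploit the diagonal covariance structure from \Cref{prop:gftuncorr_cpsd}, and apply scalar Isserlis to $E(\tilde{X}_\ell\tilde{Y}_\ell\tilde{X}_m\tilde{Y}_m)$. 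Your route is shorter and more elementary for this proposition; the paper's matrix formula buys reusability, since \Cref{cor:fourth_moment} is invoked again in the proof of \Cref{prop:bias_var_window_graph_cross}, where the windowing mixes frequencies and a purely componentwise argument would no longer suffice. Your attention to the two points you flag --- that symmetry of $\bs{S}$ makes $\bs{V}$ real so the real Wick formula applies without pseudo-covariance terms, and that $Var$ means the full covariance matrix so diagonality is part of the claim --- matches exactly the role these facts play in the paper's proof.
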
 
	
	\begin{remark}
		If $X=Y$, then the results in \Cref{prop:bias_var_graph_cross_periodogram} reduce to those of the periodogram described in \cite{Marques2017}.  
	\end{remark}
	
\Cref{prop:bias_var_graph_cross_periodogram} implies that the MSE of the graph cross-periodogram decreases as the number of realizations R increases. The bottom panel of \Cref{fig:cpsd_Rs} shows the effect of $R$ on the performance of the graph cross-periodogram evaluated using the Karate club network \citep{Zachary1977} in the top panel of \Cref{fig:cpsd_Rs}. To generate jointly weakly stationary graph processes, we utilize \Cref{def:jointweakstationary}(a). Specifically, we generate a white input $\bs{\epsilon}$ from $N(0,1)$ and then apply two linear shift-invariant filters, $\bs{\mathrm{H}}_{\text{Mex}}$ and $\bs{\mathrm{H}}_{\text{heat}}$, with frequency responses defined as $\tilde{h}_{\text{Mex}} (\lambda) = 5\lambda e^{-25\lambda^2}$ and $\tilde{h}_{\text{heat}} (\lambda) = e^{-10\lambda}$, respectively. Here, $\lambda$ denotes the normalized graph frequency in the interval $[0,1]$. For implementation, we use the Chebyshev polynomial approximation \cite{Phillips2006} to approximate these filters with a polynomial form $\bs{\mathrm{H}} = \sum_{\ell=1}^{N-1} h_\ell \bs{S}^{\ell}$, where the graph Laplacian serves as a graph shift operator $\bs{S}$ in the analysis. The true GCSD can be evaluated as $\tilde{h}_{\text{Mex}} (\lambda) \tilde{h}_{\text{heat}}^{*} (\lambda)$ by \Cref{prop:spectralconvolution_cpsd}.  As expected from \Cref{prop:bias_var_graph_cross_periodogram}, the accuracy of the graph cross-periodogram in estimating the true GCSD improves as $R$ increases. 
	\begin{figure}
		\centering
		\begin{subfigure}{0.49\textwidth}
			\includegraphics[width=\columnwidth]{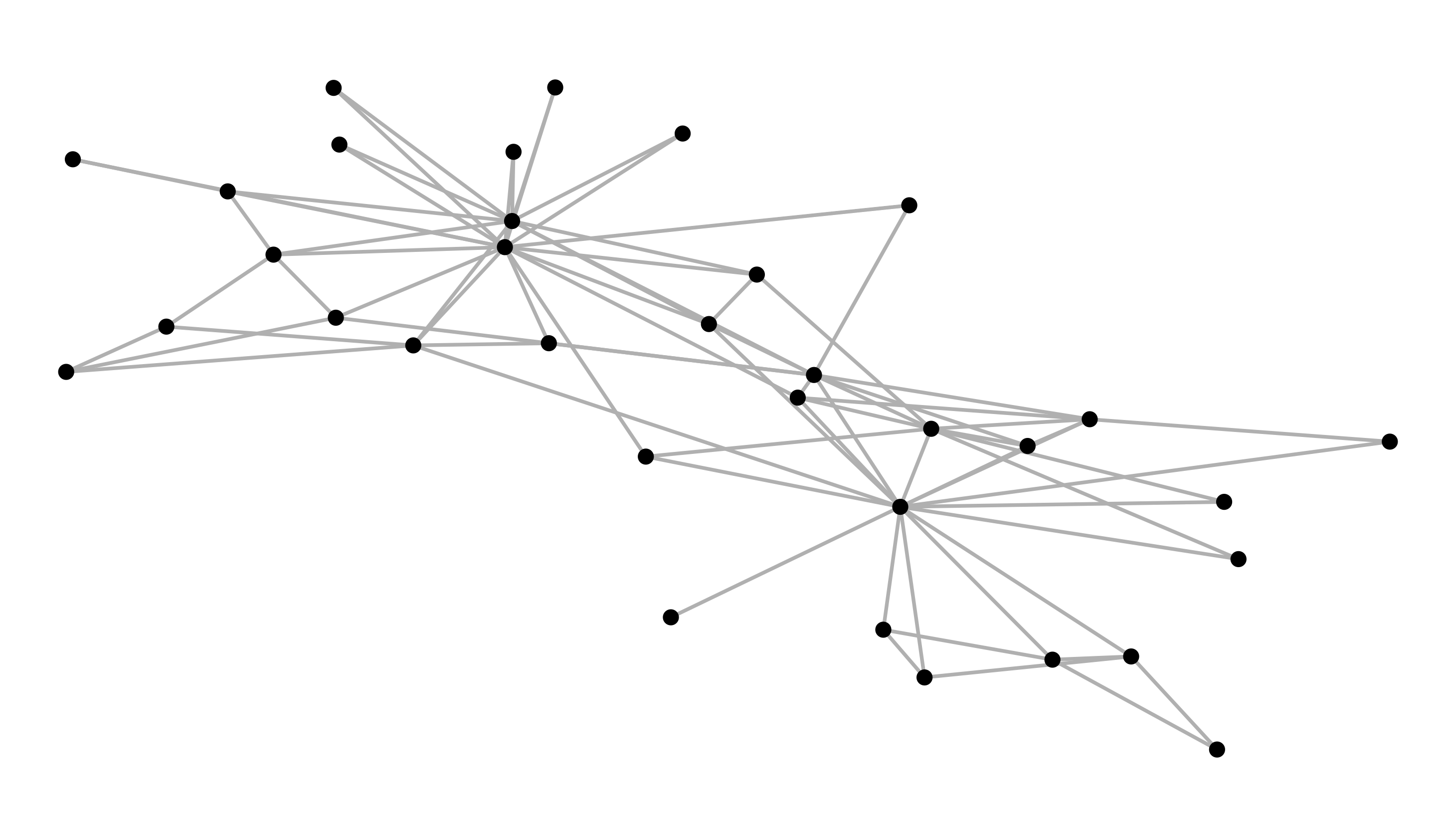}
		\end{subfigure}
		\begin{subfigure}{0.49\textwidth}
			\includegraphics[width=\columnwidth]{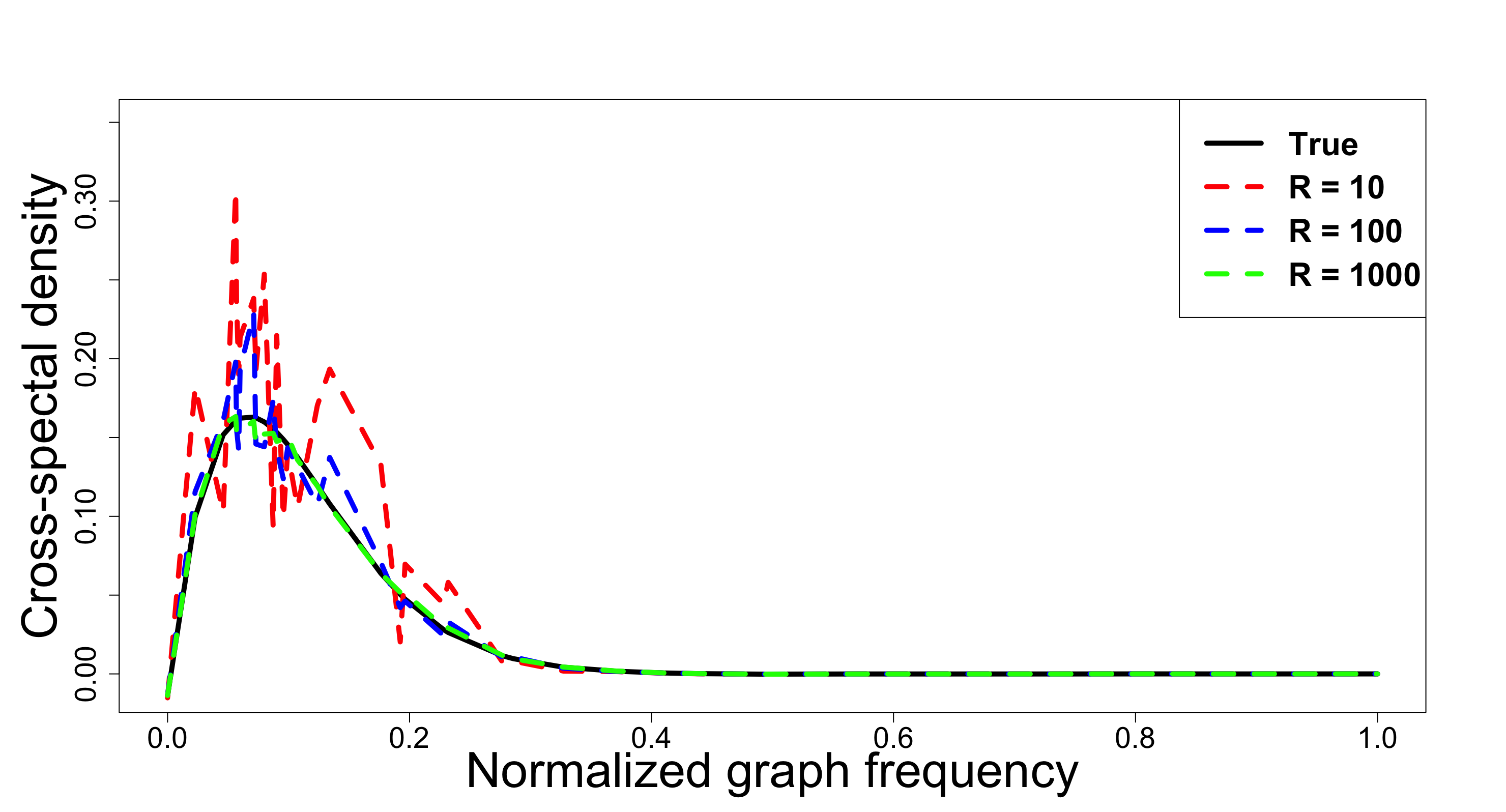}
		\end{subfigure}
		\caption{The Karate club network (left) and graph cross-periodograms for different values of $R$ on the Karate club network (right).}
		\label{fig:cpsd_Rs}
	\end{figure}
	
	
	\subsubsection{Windowed average graph cross-periodogram} \label{sec:windowavgcrossperiodogram}
	As discussed in \Cref{sec:windowagvgperiodogram}, utilizing multiple realizations of graph processes to estimate GCSD is impractical, so a data window can be adopted. Given realizations $\bs{x}$ and $\bs{y}$ of the processes $X$ and $Y$, with a data window $\bs{w} \in \mathbb{R}^N$ such that $\lVert \bs{w} \rVert_2^2=N$, we define a windowed graph cross-periodogram as
	\begin{align*}
		\hat{\bs{p}}_{\bs{\mathrm{xy}}}^{\bs{w}} = \llangle \bs{V}^H \bs{x_w}, \bs{V}^H \bs{y_w} \rrangle & = \llangle \bs{V}^H \operatorname{diag}(\bs{w})\bs{x}, \bs{V}^H \operatorname{diag}(\bs{w}) \bs{y} \rrangle  = \llangle \tilde{\bs{W}} \bs{V}^H \bs{x}, \tilde{\bs{W}} \bs{V}^H \bs{y} \rrangle,
	\end{align*}
	where  ~$\bs{x_w} = \operatorname{diag}(\bs{w})\bs{x}$~ and ~$\bs{y_w} = \operatorname{diag}(\bs{w})\bs{y}$~ denote windowed signals, and $\tilde{\bs{W}}=\bs{V}^H \operatorname{diag}(\bs{w}) \bs{V}$ is defined as the dual of the windowing operator $\operatorname{diag}(\bs{w})$ in the graph frequency domain.
	
The following proposition shows that the windowed graph cross-periodogram $\hat{\bs{p}}_{XY}^{\bs{w}}$ is biased unlike the graph cross-periodogram $\hat{\bs{p}}_{XY}$. 
	\begin{proposition} \label{prop:windowcrossperiodogram_expectation}
		Let $\bs{p}_{XY}$ be the GCSD of zero-mean jointly weakly stationary graph processes $X$ and $Y$. The expectation of the windowed graph cross-periodogram is
		\begin{equation*}
			E(\hat{\bs{p}}_{XY}^{\bs{w}}) = (\tilde{\bs{W}} \circ \tilde{\bs{W}}^*)\bs{p}_{XY}.
		\end{equation*}
	\end{proposition}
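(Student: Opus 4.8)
The plan is to expand the definition of the windowed graph cross-periodogram, take expectations term by term, and use the diagonalization of the cross-covariance matrix guaranteed by joint weak stationarity. First I would write $\hat{\bs{p}}_{XY}^{\bs{w}} = \llangle \tilde{\bs{W}}\bs{V}^H\bs{x}, \tilde{\bs{W}}\bs{V}^H\bs{y}\rrangle = (\tilde{\bs{W}}\bs{V}^H\bs{x}) \circ (\tilde{\bs{W}}\bs{V}^H\bs{y})^*$ and pass to the underlying processes $X$ and $Y$. Taking the expectation elementwise, the $\ell$th entry becomes $E\big[(\tilde{\bs{W}}\bs{V}^H X)_\ell \overline{(\tilde{\bs{W}}\bs{V}^H Y)_\ell}\big]$, which equals $(\tilde{\bs{W}} \, \bs{V}^H E(XY^H) \bs{V} \, \tilde{\bs{W}}^H)_{\ell\ell}$. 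Writing this in matrix form, $E(\hat{\bs{p}}_{XY}^{\bs{w}}) = \operatorname{diag}\!\big(\tilde{\bs{W}}\,(\bs{V}^H\bs{\Sigma}_{XY}\bs{V})\,\tilde{\bs{W}}^H\big)$.

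Next I would invoke joint weak stationarity. By \Cref{def:jointweakstationary}(c), $\bs{\Sigma}_{XY}$ and $\bs{S}$ are simultaneously diagonalizable, so $\bs{V}^H\bs{\Sigma}_{XY}\bs{V} = \operatorname{diag}(\bs{p}_{XY})$ by \Cref{def:gcpsd}. Substituting gives $E(\hat{\bs{p}}_{XY}^{\bs{w}}) = \operatorname{diag}\!\big(\tilde{\bs{W}}\operatorname{diag}(\bs{p}_{XY})\tilde{\bs{W}}^H\big)$. The final step is the routine identity $\operatorname{diag}(\bs{M}\operatorname{diag}(\bs{a})\bs{M}^H) = (\bs{M}\circ\bs{M}^*)\bs{a}$: the $\ell$th diagonal entry of $\bs{M}\operatorname{diag}(\bs{a})\bs{M}^H$ is $\sum_k M_{\ell k}\,a_k\,\overline{M_{\ell k}} = \sum_k |M_{\ell k}|^2 a_k = \sum_k (M_{\ell k}M_{\ell k}^*)a_k$, which is exactly the $\ell$th entry of $(\bs{M}\circ\bs{M}^*)\bs{a}$. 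Applying this with $\bs{M} = \tilde{\bs{W}}$ and $\bs{a} = \bs{p}_{XY}$ yields $E(\hat{\bs{p}}_{XY}^{\bs{w}}) = (\tilde{\bs{W}}\circ\tilde{\bs{W}}^*)\bs{p}_{XY}$, as claimed.

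None of the steps is a genuine obstacle; the proof is essentially bookkeeping. The one point requiring mild care is the passage from the realization-level quantity $\hat{\bs{p}}_{XY}^{\bs{w}}$ to an expectation over the processes — one must be careful that the window $\bs{w}$ is deterministic so that $\operatorname{diag}(\bs{w})$ and hence $\tilde{\bs{W}}$ come out of the expectation, leaving only $E(XY^H) = \bs{\Sigma}_{XY}$ inside. A second small point is keeping track of conjugates: since we work with real-valued signals but retain the $H$/$*$ notation, one should confirm that $\tilde{\bs{W}}^H = \tilde{\bs{W}}^*$ plays the role it should in the elementwise product, which is immediate from the elementwise action of $|\cdot|^2$. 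Beyond these, the argument is a direct computation, and I would present it compactly in three displayed equalities.
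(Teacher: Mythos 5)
Your proposal is correct and follows essentially the same route as the paper's proof: rewrite $\hat{\bs{p}}_{XY}^{\bs{w}}$ as $\operatorname{diag}(\tilde{\bs{W}} \bs{V}^H \bs{x}\bs{y}^H \bs{V}\tilde{\bs{W}}^H)$, take the expectation to obtain $\operatorname{diag}(\tilde{\bs{W}}\operatorname{diag}(\bs{p}_{XY})\tilde{\bs{W}}^H)$ via joint weak stationarity, and conclude with the elementwise identity $\operatorname{diag}(\bs{M}\operatorname{diag}(\bs{a})\bs{M}^H) = (\bs{M}\circ\bs{M}^*)\bs{a}$. The paper states the same chain of equalities in one display; your version merely makes the intermediate justifications explicit.
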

A proof is given in Appendix \ref{appendix:proof}. 	
	By utilizing multiple windows from a window bank $\mathcal{W} = \{\bs{w}_m\}_{m=1}^M$, we define a windowed average graph cross-periodogram as
	\begin{equation*} \label{eq:windowavg_graphcross}
		\hat{\bs{p}}_{XY}^{\mathcal{W}} = \frac{1}{M} \sum\limits_{m=1}^{M} \hat{\bs{p}}_{XY}^{\bs{w}_m} = \frac{1}{M} \sum\limits_{m=1}^{M} \llangle \tilde{\bs{W}}_m \bs{V}^H \bs{x}, \tilde{\bs{W}}_m \bs{V}^H \bs{y} \rrangle,
	\end{equation*}
	where  $\tilde{\bs{W}}_m=\bs{V}^H \operatorname{diag}(\bs{w}_m) \bs{V}$. 
	
	\begin{proposition} \label{prop:bias_var_window_graph_cross}
	Let $X$ and $Y$ be zero-mean jointly weakly stationary graph processes $X$ and $Y$ with GCSD $\bs{p}_{XY}$, and $\mathcal{W} = \{\bs{w}_m\}_{m=1}^M$. Then, the bias and the variance of the windowed average graph cross-periodogram $\hat{\bs{p}}_{XY}^{\mathcal{W}}$ are as follows: 
	\begin{itemize}
		\item[(a)] $E(\hat{\bs{p}}_{XY}^{\mathcal{W}}) = \frac{1}{M} \sum_{m=1}^{M} (\tilde{\bs{W}}_m \circ \tilde{\bs{W}}_{m}^*)\bs{p}_{XY}.$ 
		\item[(b)] If $X$ and $Y$ are Gaussian and $\bs{S}$ is symmetric,  
		\begin{alignat*}{2}
			\operatorname{tr}(Var(\hat{\bs{p}}_{XY}^{\mathcal{W}})) = \frac{1}{M^2} \sum\limits_{m=1}^{M} \sum\limits_{m'=1}^{M} \operatorname{tr}\big((\tilde{\bs{W}}_{m,m'}\bs{p}_{XY})(\tilde{\bs{W}}_{m,m'}\bs{p}_{XY})^H  + (\tilde{\bs{W}}_{m,m'}\bs{p}_{X})(\tilde{\bs{W}}_{m,m'}\bs{p}_{Y})^H \big),
		\end{alignat*}
		where $\tilde{\bs{W}}_{m,m'} = \tilde{\bs{W}}_{m} \circ \tilde{\bs{W}}_{m'}^*$.
		\end{itemize}
	\end{proposition}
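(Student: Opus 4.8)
The plan is to obtain (a) from linearity of expectation and (b) by reducing the variance of the average to a double sum of pairwise covariances, each of which is evaluated via Isserlis' (Wick's) theorem; this parallels the proof of \Cref{prop:bias_var_graph_cross_periodogram} with windows inserted. For (a), since $\hat{\bs{p}}_{XY}^{\mathcal{W}} = \frac1M\sum_{m=1}^M \hat{\bs{p}}_{XY}^{\bs{w}_m}$, linearity and \Cref{prop:windowcrossperiodogram_expectation} give at once $E(\hat{\bs{p}}_{XY}^{\mathcal{W}}) = \frac1M\sum_{m=1}^M(\tilde{\bs{W}}_m\circ\tilde{\bs{W}}_m^*)\bs{p}_{XY}$.

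For (b), decompose $\operatorname{tr}(Var(\hat{\bs{p}}_{XY}^{\mathcal{W}})) = \frac{1}{M^2}\sum_{m=1}^M\sum_{m'=1}^M \operatorname{tr}\!\big(Cov(\hat{\bs{p}}_{XY}^{\bs{w}_m},\hat{\bs{p}}_{XY}^{\bs{w}_{m'}})\big)$, so the task reduces to the pairwise trace. Set $\bs{a}_m := \tilde{\bs{W}}_m\bs{V}^H X$ and $\bs{b}_m := \tilde{\bs{W}}_m\bs{V}^H Y$, so that the $\ell$th entry of $\hat{\bs{p}}_{XY}^{\bs{w}_m}$ is $(\bs{a}_m)_\ell(\bs{b}_m)_\ell^*$. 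Since $\bs{S}$ is symmetric, $\bs{V}$ is real orthogonal and each $\tilde{\bs{W}}_m=\bs{V}^H\operatorname{diag}(\bs{w}_m)\bs{V}$ is real symmetric, so $\bs{a}_m,\bs{b}_m$ are real, all conjugations collapse, and (assuming, as is standard in cross-spectral analysis, that $(X^\top,Y^\top)^\top$ is jointly Gaussian) $(\bs{a}_m^\top,\bs{b}_m^\top,\bs{a}_{m'}^\top,\bs{b}_{m'}^\top)^\top$ is mean-zero Gaussian. Using $\bs{V}^H\bs{\Sigma}_X\bs{V}=\operatorname{diag}(\bs{p}_X)$, $\bs{V}^H\bs{\Sigma}_Y\bs{V}=\operatorname{diag}(\bs{p}_Y)$, $\bs{V}^H\bs{\Sigma}_{XY}\bs{V}=\operatorname{diag}(\bs{p}_{XY})$, the symmetry of $\tilde{\bs{W}}_{m'}$, and the elementary identity $(\bs{B}\operatorname{diag}(\bs{c})\bs{B}')_{\ell\ell}=((\bs{B}\circ\bs{B}'^\top)\bs{c})_\ell$, one reads off the $(\ell,\ell)$ second moments $E[(\bs{a}_m)_\ell(\bs{a}_{m'})_\ell]=(\tilde{\bs{W}}_{m,m'}\bs{p}_X)_\ell$, $E[(\bs{b}_m)_\ell(\bs{b}_{m'})_\ell]=(\tilde{\bs{W}}_{m,m'}\bs{p}_Y)_\ell$, and $E[(\bs{a}_m)_\ell(\bs{b}_{m'})_\ell]=E[(\bs{b}_m)_\ell(\bs{a}_{m'})_\ell]=(\tilde{\bs{W}}_{m,m'}\bs{p}_{XY})_\ell$, with $\tilde{\bs{W}}_{m,m'}=\tilde{\bs{W}}_m\circ\tilde{\bs{W}}_{m'}^*$.

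Next, apply Isserlis' theorem to $E[(\bs{a}_m)_\ell(\bs{b}_m)_\ell(\bs{a}_{m'})_\ell(\bs{b}_{m'})_\ell]$: the three pairings give $E[(\bs{a}_m)_\ell(\bs{b}_m)_\ell]E[(\bs{a}_{m'})_\ell(\bs{b}_{m'})_\ell]+E[(\bs{a}_m)_\ell(\bs{a}_{m'})_\ell]E[(\bs{b}_m)_\ell(\bs{b}_{m'})_\ell]+E[(\bs{a}_m)_\ell(\bs{b}_{m'})_\ell]E[(\bs{b}_m)_\ell(\bs{a}_{m'})_\ell]$; the first term equals $E[(\hat{\bs{p}}_{XY}^{\bs{w}_m})_\ell]E[(\hat{\bs{p}}_{XY}^{\bs{w}_{m'}})_\ell]$ and cancels when forming the covariance, leaving $Cov((\hat{\bs{p}}_{XY}^{\bs{w}_m})_\ell,(\hat{\bs{p}}_{XY}^{\bs{w}_{m'}})_\ell)=(\tilde{\bs{W}}_{m,m'}\bs{p}_{XY})_\ell(\tilde{\bs{W}}_{m,m'}\bs{p}_{XY})_\ell^*+(\tilde{\bs{W}}_{m,m'}\bs{p}_X)_\ell(\tilde{\bs{W}}_{m,m'}\bs{p}_Y)_\ell$. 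Summing over $\ell$ turns the two products into $\operatorname{tr}\!\big((\tilde{\bs{W}}_{m,m'}\bs{p}_{XY})(\tilde{\bs{W}}_{m,m'}\bs{p}_{XY})^H\big)$ and $\operatorname{tr}\!\big((\tilde{\bs{W}}_{m,m'}\bs{p}_X)(\tilde{\bs{W}}_{m,m'}\bs{p}_Y)^H\big)$, and substituting back into the double sum over $m,m'$ produces the stated formula. The routine parts are the bias, the variance-of-a-sum split, and the Isserlis expansion; the only step demanding care is the middle one of (b) — correctly identifying each diagonal entry of $\tilde{\bs{W}}_m\operatorname{diag}(\cdot)\tilde{\bs{W}}_{m'}$ with the appropriate component of $\tilde{\bs{W}}_{m,m'}$ acting on $\bs{p}_X$, $\bs{p}_Y$, or $\bs{p}_{XY}$, and tracking which Wick pairing feeds the $\bs{p}_{XY}$-term versus the $\bs{p}_X,\bs{p}_Y$-term. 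The hypotheses that $\bs{S}$ is symmetric and that $X,Y$ are jointly Gaussian are used precisely to make each $\tilde{\bs{W}}_m$ real symmetric, to collapse all conjugations, and to license Isserlis' theorem for the cross moments.
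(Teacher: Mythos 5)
Your proof is correct and follows essentially the same route as the paper: both reduce $\operatorname{tr}(Var(\hat{\bs{p}}_{XY}^{\mathcal{W}}))$ to the diagonal entries, expand the double sum over window pairs, and evaluate the Gaussian fourth moment of the windowed GFT coefficients via Isserlis' theorem, with the mean-product pairing cancelling against the squared bias. The only cosmetic difference is that the paper packages the Wick expansion into a matrix-level fourth-moment corollary while you apply Isserlis entrywise, and you make explicit the joint Gaussianity of $(X,Y)$ that the paper uses implicitly.
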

A proof is provided in Appendix \ref{appendix:proof}.	
	
	\begin{remark}
		If $X=Y$, the results in \Cref{prop:bias_var_window_graph_cross} reduce to those of the windowed average periodogram in \cite{Marques2017}.  
	\end{remark}
	
	\begin{proposition} \label{prop:gcpsd_windowdesign}
Consider two windows $\bs{w}_m$ and $\bs{w}_{m'}$ and assume that the distance between any node in the support of $\bs{w}_m$ and any node in the support of $\bs{w}_{m'}$ is larger than $(L_1 + L_2)$ hops. If two processes $X$ and $Y$ are responses of graph filters $\bs{\mathrm{H}}_1$ and $\bs{\mathrm{H}}_2$ of degrees $L_1$ and $L_2$ to white inputs $\bs{\epsilon}_1$ and $\bs{\epsilon}_2$ such that $Cov(\bs{\epsilon}_1, \bs{\epsilon}_2) = \bs{I}$, then it holds that $\operatorname{tr}( (\tilde{\bs{W}}_{m,m'} \bs{p}_{XY})(\tilde{\bs{W}}_{m,m'} \bs{p}_{XY})^H) = 0$, where $\tilde{\bs{W}}_{m,m'} = \tilde{\bs{W}}_{m} \circ \tilde{\bs{W}}_{m'}^*$.
	\end{proposition}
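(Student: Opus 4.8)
The plan is to reduce the trace quantity $\operatorname{tr}\big((\tilde{\bs{W}}_{m,m'}\bs{p}_{XY})(\tilde{\bs{W}}_{m,m'}\bs{p}_{XY})^H\big)$ to a node-domain statement and then invoke the support-separation hypothesis together with \Cref{prop:crosscov_element}. First I would translate everything back into the node domain. Since $\tilde{\bs{W}}_{m,m'} = \tilde{\bs{W}}_m \circ \tilde{\bs{W}}_{m'}^*$ with $\tilde{\bs{W}}_m = \bs{V}^H \operatorname{diag}(\bs{w}_m)\bs{V}$, and $\operatorname{diag}(\bs{p}_{XY}) = \bs{V}^H \bs{\Sigma}_{XY} \bs{V}$, I would look for a clean matrix identity expressing $\tilde{\bs{W}}_{m,m'}\bs{p}_{XY}$ (or rather $\operatorname{diag}(\tilde{\bs{W}}_{m,m'}\bs{p}_{XY})$, reading the vector as the diagonal of a matrix) in terms of $\operatorname{diag}(\bs{w}_m)\,\bs{\Sigma}_{XY}\,\operatorname{diag}(\bs{w}_{m'})$ conjugated by $\bs{V}$. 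The key observation is that an elementwise (Hadamard) product of frequency-domain matrices corresponds, after the appropriate conjugation, to a node-domain product of diagonal windowing matrices with $\bs{\Sigma}_{XY}$ sandwiched between them; this is exactly the computation already carried out in the proof of \Cref{prop:windowcrossperiodogram_expectation}, so I would reuse that bookkeeping.

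Second, once $\tilde{\bs{W}}_{m,m'}\bs{p}_{XY}$ is identified (up to a unitary change of basis that does not affect the trace norm) with $\operatorname{diag}(\bs{w}_m)\,\bs{\Sigma}_{XY}\,\operatorname{diag}(\bs{w}_{m'})$, the trace $\operatorname{tr}\big((\tilde{\bs{W}}_{m,m'}\bs{p}_{XY})(\tilde{\bs{W}}_{m,m'}\bs{p}_{XY})^H\big)$ becomes $\lVert \operatorname{diag}(\bs{w}_m)\,\bs{\Sigma}_{XY}\,\operatorname{diag}(\bs{w}_{m'}) \rVert_F^2$ (again up to the unitary invariance of $\lVert\cdot\rVert_F$). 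The $(i,j)$ entry of this matrix is $(\bs{w}_m)_i (\bs{\Sigma}_{XY})_{i,j} (\bs{w}_{m'})_j$, which vanishes unless $i$ lies in the support of $\bs{w}_m$ and $j$ lies in the support of $\bs{w}_{m'}$. By hypothesis, any such pair $(i,j)$ is separated by more than $L_1 + L_2$ hops. Since $X = \bs{\mathrm{H}}_1 \bs{\epsilon}_1$ and $Y = \bs{\mathrm{H}}_2 \bs{\epsilon}_2$ with $\operatorname{Cov}(\bs{\epsilon}_1,\bs{\epsilon}_2) = \bs{I}$ and the filters have degrees $L_1$ and $L_2$, \Cref{prop:crosscov_element} (with the distance bound $L_1 + L_2 - 2 < L_1 + L_2$) gives $(\bs{\Sigma}_{XY})_{i,j} = 0$ for every such pair. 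Hence every entry of $\operatorname{diag}(\bs{w}_m)\,\bs{\Sigma}_{XY}\,\operatorname{diag}(\bs{w}_{m'})$ is zero, its Frobenius norm is zero, and the trace is zero.

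The main obstacle I anticipate is the first step: pinning down the precise algebraic dictionary between the Hadamard product $\tilde{\bs{W}}_m \circ \tilde{\bs{W}}_{m'}^*$ acting on the vector $\bs{p}_{XY}$ and the node-domain triple product $\operatorname{diag}(\bs{w}_m)\,\bs{\Sigma}_{XY}\,\operatorname{diag}(\bs{w}_{m'})$, keeping careful track of conjugates, transposes, and which index of $\bs{V}$ appears where. One must verify that $\operatorname{diag}\big((\tilde{\bs{W}}_m \circ \tilde{\bs{W}}_{m'}^*)\,\bs{p}_{XY}\big) = \bs{V}^H \operatorname{diag}(\bs{w}_m)\,\bs{V}\operatorname{diag}(\bs{p}_{XY})\bs{V}^H\operatorname{diag}(\bs{w}_{m'})\bs{V}$ — or the correct variant thereof — which is a routine but error-prone manipulation using $(\bs{A}\circ\bs{B})\bs{c} = \operatorname{diag}\!\big(\bs{A}\operatorname{diag}(\bs{c})\bs{B}^\top\big)$-type identities, and the fact that $\bs{S}$ is symmetric (so $\bs{V}$ can be taken real, making $\tilde{\bs{W}}_{m'}^*$ and $\tilde{\bs{W}}_{m'}^H$ manageable). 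Everything after that is immediate from \Cref{prop:crosscov_element} and unitary invariance of the Frobenius norm.
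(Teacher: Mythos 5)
Your proposal is correct and follows essentially the route the paper intends: the paper's own proof is a one-line deferral to Proposition 5 of \cite{Marques2017} combined with \Cref{prop:crosscov_element}, and your argument simply spells out that extension in full (Hadamard identity to pull $\tilde{\bs{W}}_{m,m'}\bs{p}_{XY}$ back to $\operatorname{diag}\bigl(\bs{V}^H\operatorname{diag}(\bs{w}_m)\,\bs{\Sigma}_{XY}\,\operatorname{diag}(\bs{w}_{m'})\bs{V}\bigr)$, then support separation plus \Cref{prop:crosscov_element} to kill every entry of the sandwiched matrix). One small imprecision: the trace equals the squared $\ell_2$-norm of the \emph{diagonal} of $\bs{V}^H\operatorname{diag}(\bs{w}_m)\,\bs{\Sigma}_{XY}\,\operatorname{diag}(\bs{w}_{m'})\bs{V}$, which is bounded above by, not equal to, $\lVert\operatorname{diag}(\bs{w}_m)\,\bs{\Sigma}_{XY}\,\operatorname{diag}(\bs{w}_{m'})\rVert_F^2$; since you show that matrix is identically zero, the conclusion is unaffected.
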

	A proof is described in Appendix \ref{appendix:proof}.
	
	\begin{remark} \label{rmk:prop_windowdesign}
\Cref{prop:gcpsd_windowdesign} plays an important role in window design. The MSE of $\hat{\bs{p}}_{XY}^{\mathcal{W}}$ is equal to the sum of squared bias and $\operatorname{tr}(Var(\hat{\bs{p}}_{XY}^{\mathcal{W}}))$. From \Cref{prop:bias_var_window_graph_cross}(b), $\operatorname{tr}(Var(\hat{\bs{p}}_{XY}^{\mathcal{W}}))$ can be decomposed into two terms, one with $m=m'$ and the other with $m \neq m'$. The former term is  $\frac{1}{M^2}\sum_{m=1}^{M} \operatorname{tr}\big((\tilde{\bs{W}}_{m,m}\bs{p}_{XY})(\tilde{\bs{W}}_{m,m}\bs{p}_{XY})^H + (\tilde{\bs{W}}_{m,m}\bs{p}_{X})(\tilde{\bs{W}}_{m,m}\bs{p}_{Y})^H\big)$. By \Cref{prop:windowcrossperiodogram_expectation}, it follows that  $\operatorname{tr}\big( (\tilde{\bs{W}}_{m,m} \bs{p}_{XY})(\tilde{\bs{W}}_{m,m} \bs{p}_{XY})^H\big) = \lVert E(\hat{\bs{p}}_{XY}^{\bs{w}_m}) \rVert_2^2$. Given that $\lVert \bs{w}_m \rVert_2^2 = N$, it can be approximated by $\lVert \bs{p}_{XY} \rVert_2^2$. Similarly, we have $\operatorname{tr}( (\tilde{\bs{W}}_{m,m} \bs{p}_{X})(\tilde{\bs{W}}_{m,m} \bs{p}_{Y})^H) =  \langle E(\hat{\bs{p}}_{X}^{\bs{w}_m}), E(\hat{\bs{p}}_{Y}^{\bs{w}_m}) \rangle$, which can be approximated by $\langle \bs{p}_{X}, \bs{p}_{Y} \rangle$. Thus, the former term is close to $\frac{1}{M}(\lVert \bs{p}_{XY} \rVert_2^2 + \langle \bs{p}_{X}, \bs{p}_{Y} \rangle) = \frac{1}{M} \operatorname{tr} \left(\operatorname{diag}(\lvert \bs{p}_{XY} \rvert^2) + \operatorname{diag}(\bs{p}_{X})  \operatorname{diag}(\bs{p}_{Y})\right)$. This implies that the latter term (when $m \neq m'$) contains the effect of the correlation between $M$ windowed bivariate signals, referring to \Cref{prop:bias_var_graph_cross_periodogram}(b). \Cref{prop:crosscov_element} indicates that certain jointly weakly stationary graph processes exhibit localized cross-covariance structures. In this case,  \Cref{prop:gcpsd_windowdesign} suggests that designing non-overlapping windows can mitigate the effects of correlation between the windowed bivariate signals by reducing the latter term in the decomposition of $Var(\hat{\bs{p}}_{XY}^{\mathcal{W}})$. This leads to a reduction in the MSE of $\hat{\bs{p}}_{XY}^{\mathcal{W}}$. Moreover, according to Proposition 5 in \cite{Marques2017}, if  the distance between any node in the support of $\bs{w}_m$ and any node in the support of $\bs{w}_{m'}$ is larger than $\max(L_1, L_2)$ hops, $\tilde{\bs{W}}_{m,m'}\bs{p}_{X} = \tilde{\bs{W}}_{m,m'}\bs{p}_{Y}=\bs{0}$. Therefore, if the distance between any node in the support of $\bs{w}_m$ and any node in the support of $\bs{w}_{m'}$ is larger than  $\max(2L_1, 2L_2)$ hops, the latter term in the decomposition of $\operatorname{tr}(Var(\hat{\bs{p}}_{XY}^{\mathcal{W}}))$ becomes zero, which means that $Var(\hat{\bs{p}}_{XY}^{\mathcal{W}})$ behaves as if the windowed signals are independent.
	\end{remark}
		
	\subsection{Graph coherence}
From the definitions of graph power spectral density and graph cross-spectral density, we define the graph coherence (or magnitude-squared graph coherence) between two jointly weakly stationary processes $X$ and $Y$, denoted by $\bs{c}_{XY}$.

\begin{definition} \label{def:graph_coherence}
		The coherence between zero-mean jointly weakly stationary graph processes $X$ and $Y$ is defined as $\bs{c}_{XY} = \frac{\lvert \bs{p}_{XY} \rvert^2}{\bs{p}_{X} \bs{p}_{Y}}$, where the product and division operators are applied elementwise.
\end{definition}
	
	\begin{proposition} \label{prop:coherence_ineq}
		The graph coherence satisfies the following inequality,
		$
			\bs{0} \le \bs{c}_{XY} \le \bs{1},
		$
		where $\le$ indicates elementwise inequality. The equality $\bs{c}_{XY}=\bs{0}$ holds when $X$ and $Y$ are uncorrelated, and $\bs{c}_{XY}=\bs{1}$ holds when $Y = \bs{\mathrm{H}} X$ for a filter $\bs{\mathrm{H}}$.
	\end{proposition}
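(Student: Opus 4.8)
The plan is to reduce the entrywise vector inequality to a scalar inequality at each graph frequency $\lambda_\ell$ and then apply the Cauchy--Schwarz inequality. First I would invoke \Cref{prop:gftuncorr_cpsd}, which gives $E(\tilde{X}\tilde{Y}^H) = \operatorname{diag}(\bs{p}_{XY})$ for the GFT processes $\tilde{X} = \bs{V}^H X$ and $\tilde{Y} = \bs{V}^H Y$, so that the $\ell$th entries $\tilde{X}_\ell$, $\tilde{Y}_\ell$ satisfy $\bs{p}_{XY}(\lambda_\ell) = E(\tilde{X}_\ell \tilde{Y}_\ell^*)$, while $\bs{p}_{X}(\lambda_\ell) = E(|\tilde{X}_\ell|^2)$ and $\bs{p}_{Y}(\lambda_\ell) = E(|\tilde{Y}_\ell|^2)$ by the interpretation of GPSD recorded just after \Cref{def:gpsd}.

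The lower bound $\bs{c}_{XY} \ge \bs{0}$ is immediate, since $|\bs{p}_{XY}|^2$ is entrywise non-negative and $\bs{p}_{X}$, $\bs{p}_{Y}$ are non-negative (and positive wherever the elementwise ratio in \Cref{def:graph_coherence} is defined). For the upper bound I would apply Cauchy--Schwarz to $\tilde{X}_\ell$ and $\tilde{Y}_\ell$ in $L^2$, obtaining
\[
	|\bs{p}_{XY}(\lambda_\ell)|^2 = |E(\tilde{X}_\ell \tilde{Y}_\ell^*)|^2 \le E(|\tilde{X}_\ell|^2)\,E(|\tilde{Y}_\ell|^2) = \bs{p}_{X}(\lambda_\ell)\,\bs{p}_{Y}(\lambda_\ell),
\]
that is, $\bs{c}_{XY}(\lambda_\ell) \le 1$ for every $\ell$.

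For the equality cases: $\bs{c}_{XY} = \bs{0}$ is equivalent to $\bs{p}_{XY} = \bs{0}$, hence to $\bs{\Sigma}_{XY} = \bs{V}\operatorname{diag}(\bs{p}_{XY})\bs{V}^H = \bs{0}$, which (as $X$ and $Y$ are zero-mean) is precisely the statement that $X$ and $Y$ are uncorrelated. For $\bs{c}_{XY} = \bs{1}$, if $Y = \bs{\mathrm{H}}X$ with $\bs{\mathrm{H}} = \bs{V}\operatorname{diag}(\tilde{\bs{h}})\bs{V}^H$, then $\bs{\Sigma}_{XY} = E(X(\bs{\mathrm{H}}X)^H) = \bs{\Sigma}_{X}\bs{\mathrm{H}}^H$ yields $\bs{p}_{XY} = \bs{p}_{X} \circ \tilde{\bs{h}}^*$, while a direct computation (or \Cref{prop:spectralconvolution_cpsd} with $X = Y$) gives $\bs{p}_{Y} = |\tilde{\bs{h}}|^2 \circ \bs{p}_{X}$; combining, $|\bs{p}_{XY}|^2 = \bs{p}_{X} \circ \bs{p}_{Y}$ entrywise, so $\bs{c}_{XY} = \bs{1}$ at every frequency where the ratio is defined. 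I would also note the converse: equality in Cauchy--Schwarz at $\lambda_\ell$ forces $\tilde{Y}_\ell = a_\ell \tilde{X}_\ell$ almost surely, and collecting these gives $\tilde{Y} = \operatorname{diag}(\bs{a})\tilde{X}$, i.e. $Y = \bs{V}\operatorname{diag}(\bs{a})\bs{V}^H X$, a graph filter.

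The Cauchy--Schwarz step and the elementwise bookkeeping are routine; the delicate part is the equality characterization --- in particular, specifying in what sense $\bs{c}_{XY} = \bs{1}$ when some entries of $\bs{p}_{X}$, $\bs{p}_{Y}$ (equivalently of $\tilde{\bs{h}}$) vanish, so that the defining ratio is a $0/0$, and checking that the matrix $\bs{V}\operatorname{diag}(\bs{a})\bs{V}^H$ recovered in the converse is a genuine linear shift-invariant graph filter: this is automatic when the eigenvalues of $\bs{S}$ are distinct, but otherwise requires $\bs{a}$ to be constant across repeated eigenvalues, and the equality conditions only pin $\bs{a}$ down up to a null event.
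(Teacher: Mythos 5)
Your proposal is correct and follows essentially the same route as the paper: both reduce to the scalar identities $\bs{p}_{XY}(\lambda_\ell)=E(\tilde{X}_\ell\tilde{Y}_\ell^*)$ via \Cref{prop:gftuncorr_cpsd}, apply the Cauchy--Schwarz inequality for the upper bound, and characterize the equality cases through $\bs{\Sigma}_{XY}=\bs{0}$ and $\tilde{Y}_\ell=\alpha_\ell\tilde{X}_\ell$. Your treatment of the $\bs{c}_{XY}=\bs{1}$ case is in fact slightly more explicit than the paper's (which asserts rather than computes $\bs{p}_{XY}=\bs{p}_X\circ\tilde{\bs{h}}^*$ and $\bs{p}_Y=|\tilde{\bs{h}}|^2\circ\bs{p}_X$), and your caveats about $0/0$ entries and repeated eigenvalues are legitimate points the paper glosses over.
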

A proof is given in Appendix \ref{appendix:proof}.

	\section{Practical implementation} \label{sec:practical}
	\subsection{Window design} \label{sec:windowdesign}
Designing windows is crucial for the estimation performance of the windowed average graph cross-periodogram. \cite{Marques2017} addressed window design for the windowed average graph periodogram, but this remains a challenging and open problem. As suggested by \cite{Marques2017}, for each window $\bs{w}_m$, $\tilde{\bs{W}}_{m}$ should be close to the identity matrix so that the windowed signal and the original signal are similar in the graph frequency domain. In this regard, we can adopt random windows that are close to the identity matrix. Also, as mentioned in \Cref{rmk:prop_windowdesign}, if a community structure exists in the graph, we can design non-overlapping local windows corresponding to each community. 

In this study, we use random windows for practical implementation in the numerical experiments in \Cref{sec:numerical}. Specifically, we generate $\tilde{\bs{W}}_{m}$ as an identity matrix with random Gaussian noise added for each element. Then, we obtain $\bs{w}_m$ from $\operatorname{diag}(\bs{V} \tilde{\bs{W}}_{m} \bs{V}^H)$. However, it is important to note that this is not an optimal design and tends to degrade performance, especially when the number of nodes is small. Thus, we consider an alternative estimator for implementation. 
		
	\subsection{Alternative estimator for graph cross-spectral density} \label{sec:alternative}
\cite{Perraudin2017} provided a perspective on the PSD estimation of Bartlett's and Welch's methods in classical signal processing, interpreting them as the averaging over time of the squared coefficients derived from a windowed Fourier transform.  Based on this insight, they proposed an estimator of GPSD using the windowed graph Fourier transform introduced by \cite{Shuman2016}. Specifically, for a graph kernel $g$ that is concentrated at the origin (e.g., $g(\lambda) = e^{-\lambda^2 / \sigma^2}$ for a constant $\sigma^2$), $g_k$ is defined by shifting $g$ as
	\begin{equation} \label{eq:gk}
		g_k(\lambda_\ell) = g(\lambda_\ell - k \tau), \quad k=1,\ldots, K, \; \tau = \frac{\lambda_{max}}{K},
	\end{equation}
	where $\lambda_\ell$ ($1 \le \ell \le N)$ are the eigenvalues of $\bs{S}$ and $\lambda_{max}$ is the maximum value among them. Then, the windowed graph Fourier transform-based estimator is defined as
	\begin{equation} \label{eq:WFT_periodogram}
		\hat{\bs{p}}_{X}^{WF}(k \tau) = \frac{\lVert g_k(\bs{S}) \bs{x} \rVert_2^2}{\lVert g_k(\bs{S})\rVert_F^2} = \frac{\sum_{n=1}^{N} C_{n,k}^2}{\lVert g_k(\bs{S})\rVert_F^2},
	\end{equation}
	where $\lVert \cdot \rVert_F$ denotes the Frobenius norm, $g_k(\bs{S}) = \bs{V} \operatorname{diag}(g_k(\lambda_\ell)) \bs{V}^H$, and $C_{n,k}$ denote the coefficients of the windowed graph Fourier transform, defined as
	\begin{equation*}
		C_{n,k} = \langle \bs{x}, \mathcal{T}_n g_k \rangle = (g_k(\bs{S}) \bs{x})_n.
	\end{equation*}
Here, $\mathcal{T}_n$ represents the localization operator such that $\mathcal{T}_n g = \sum_{\ell=1}^{N} g(\lambda_\ell) (\bs{v}_\ell^*)_n \bs{v}_\ell$, and  $(\cdot)_n$ denotes the $n$th element operation on a vector. 
	
	Along the same line with the windowed graph Fourier transform-based estimator of GPSD in (\ref{eq:WFT_periodogram}), we propose a windowed graph Fourier transform-based estimator of GCSD as
	\begin{equation*} \label{eq:WFT_cross_periodogram}
		\hat{\bs{p}}^{WF}_{XY}(k\tau) = \frac{\langle g_k(\bs{S}) \bs{x}, g_k(\bs{S}) \bs{y} \rangle}{\lVert g_k(\bs{S}) 
			\rVert_F^2} = \frac{\sum_{n=1}^{N} \langle C^{(1)}_{n,k}, C^{(2)}_{n,k} \rangle}{\lVert g_k(\bs{S}) 
			\rVert_F^2}
	\end{equation*}
        for $k=1,\ldots, K$ and $\tau = \lambda_{max}/K$, where $C^{(1)}_{n,k} = \langle \bs{x}, \mathcal{T}_n g_k \rangle = (g_k(\bs{S}) \bs{x})_n$ and $C^{(2)}_{n,k} = \langle \bs{y}, \mathcal{T}_n g_k \rangle = (g_k(\bs{S}) \bs{y})_n$ for the localization operator $\mathcal{T}_n$ and $g_k$ defined in (\ref{eq:gk}). 
	
	We remark that as noted  in \cite{Perraudin2017}, the difference between $\hat{\bs{p}}^{WF}_{XY}$ and the true GCSD $\bs{p}_{XY}$ is insignificant when $\sigma \gg \lambda_{max} / N$. However, as $\sigma$ decreases, the difference may increase.
	
	We now investigate how the number of filters $K$ affects the performance of the windowed graph Fourier transform-based estimator on the Minnesota road network, comprising 2642 nodes and 6606 edges, all with weights equal to 1, which is shown in the top left panel of \Cref{fig:minnesota_cross_spectrum}. Jointly weakly stationary graph processes are generated using $\bs{\mathrm{H}}_{\text{ds}}$ and $\bs{\mathrm{H}}_{\text{high}}$ with frequency responses defined as $\tilde{h}_{\text{ds}} (\lambda) = \sin(15\lambda)e^{-5\lambda}$ and $\tilde{h}_{\text{high}} (\lambda) = \lambda e^{-\lambda}$, respectively, resulting in the true GCSD being $\tilde{h}_{\text{ds}} (\lambda) \tilde{h}_{\text{high}}^{*} (\lambda)$. As in \Cref{sec:graph_crossperiodogram}, we use the Chebyshev polynomial approximation and the graph Laplacian as a graph shift operator $\bs{S}$. Following \cite{Perraudin2017}, the graph kernel $g$ is defined as $g(\lambda) = e^{-\lambda^2 / \sigma^2}$, where $\sigma^2$ is chosen as $(K+1)\lambda_{max}/K^2$. The estimation results for the windowed graph Fourier transform-based estimators for different $K$'s are shown in \Cref{fig:cpsd_Ms}. As $K$ increases, estimation performance improves. However, if $K$ is too large, estimation becomes wiggly, and performance degrades. This is consistent with the discussion above about the effect of $\sigma$ on estimation performance. 

	\begin{figure}
		\centering
			\includegraphics[width=0.7\textwidth]{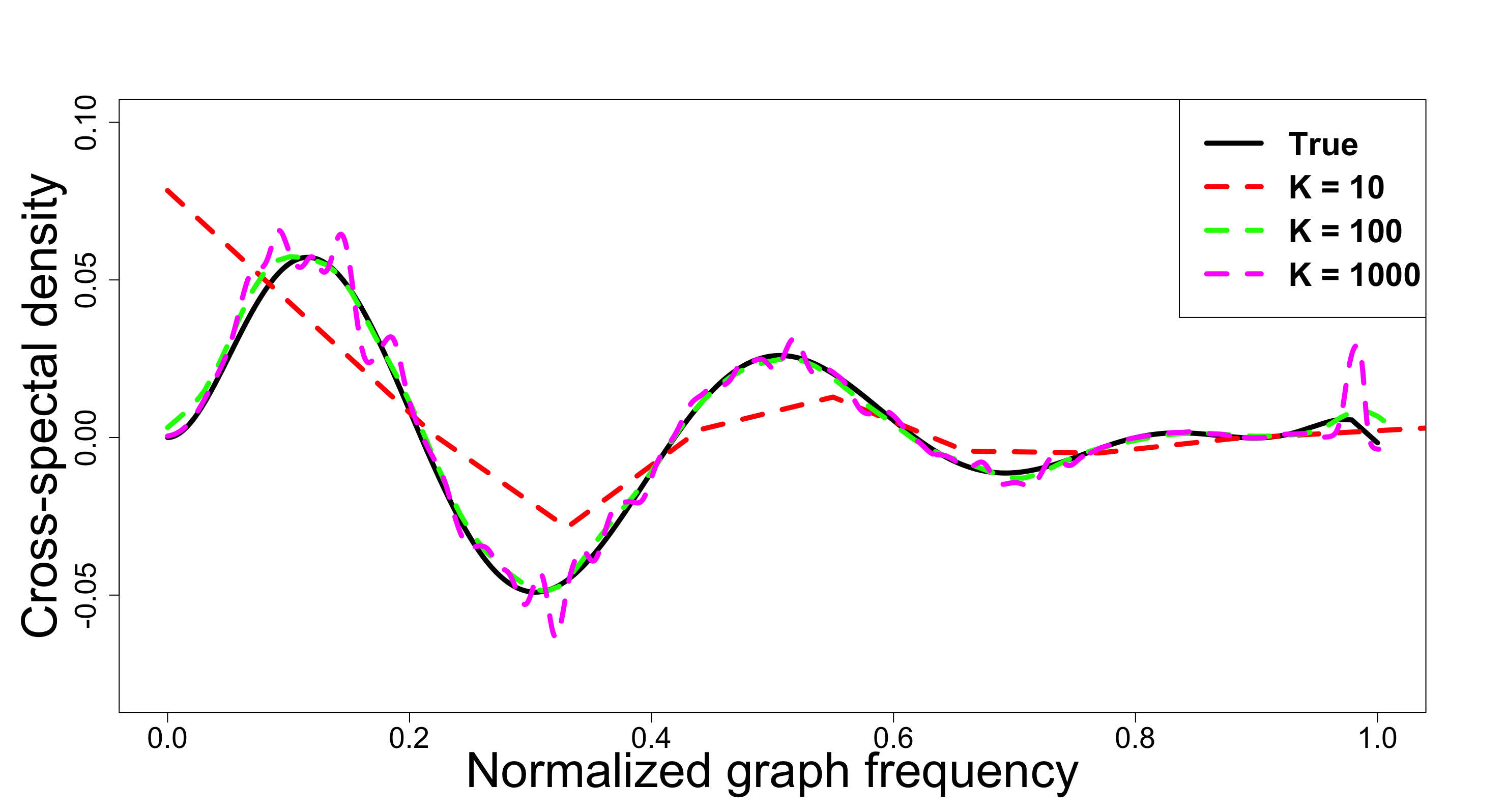}
		\caption{The windowed graph Fourier transform-based estimates for different values of $K$ on the Minnesota road network.}
		\label{fig:cpsd_Ms}
	\end{figure}
	

	\section{Numerical experiments} \label{sec:numerical}
	\subsection{Detection of shared components} \label{sec:sharedcomponent}
In this section, we demonstrate the importance of cross-spectral analysis on graphs by investigating the graph cross-periodogram for different irregular graphs. In particular, we generate two graph signals that share the same frequency, which is indistinguishable from the graph periodogram of each signal. However, in the cross-spectral analysis, this frequency can be revealed by the graph cross-periodogram. 	
	
We consider three irregular graphs: the Seoul Metropolitan railroad network, a random sensor network, and the Minnesota road network. In all scenarios, we compute the estimators of GPSD and GCSD using both windowed averaging and windowed graph Fourier-based methods. Here, the graph Laplacian serves as a graph shift operator. For the window averaging approach, we utilize random windows as described in \Cref{sec:windowdesign}, appropriately choosing the number of windows $M$. For the windowed graph Fourier-based approach, we use the same graph kernel $g$ in \Cref{sec:alternative}. The number of filters $K$ is chosen to ensure that $\sigma$ is significantly greater than $\lambda_{max} / N$, while still being large enough considering the number of nodes.  
	
We first consider the Seoul railroad network studied by \cite{Kim2023}. This network has 243 stations as nodes and is connected by railroads. The edge weight between the $i$th node and the $j$th node is determined by $w_{ij} = \exp(-d^2(i,j) / ave^2)$, where $d(i,j)$ represents the distance between the $i$th and $j$th nodes computed by summing the lengths of the railroads connecting the two nodes, and $ave$ denotes the average distance between the nodes. We generate two signals  as $\bs{x}_{\text{metro}} = 5\bs{v}_{100}^{\text{metro}}+100\bs{v}_{50}^{\text{metro}}$ and $ \bs{y}_{\text{metro}} = 5\bs{v}_{100}^{\text{metro}}+100\bs{v}_{150}^{\text{metro}}$, where $\bs{v}_{i}^{\text{metro}}$ denotes the $i$th eigenvector of the graph Laplacian, corresponding to the $i$th eigenvalue $\lambda_i^{\text{metro}}$. \Cref{fig:metro_cross_spectrum} shows that $\bs{x}_{\text{metro}}$ predominantly contains $\bs{v}_{50}^{\text{metro}}$, while $\bs{y}_{\text{metro}}$ primarily consists of $\bs{v}_{150}^{\text{metro}}$. This reasoning is consistent with what we observed in the graph periodograms of the two signals, as shown in Figures \ref{fig:cpsd_simul_wavg} and \ref{fig:cpsd_simul_wf}. For each estimator, the values of $M$ and $K$ are set to 100, respectively. However, based on these results, we cannot discern the shared component $\bs{v}_{100}^{\text{metro}}$ between the two signals; this identification becomes feasible through cross-spectral analysis. The GPSD estimates in Figures \ref{fig:cpsd_simul_wavg} and \ref{fig:cpsd_simul_wf} show peaks at the graph frequency corresponding roughly to $\lambda_{100}^{\text{metro}}$.  These observations indicate that the two signals share the component $\bs{v}_{100}^{\text{metro}}$.
	
	\begin{figure}
		\centering
		\includegraphics[width=0.9\textwidth]{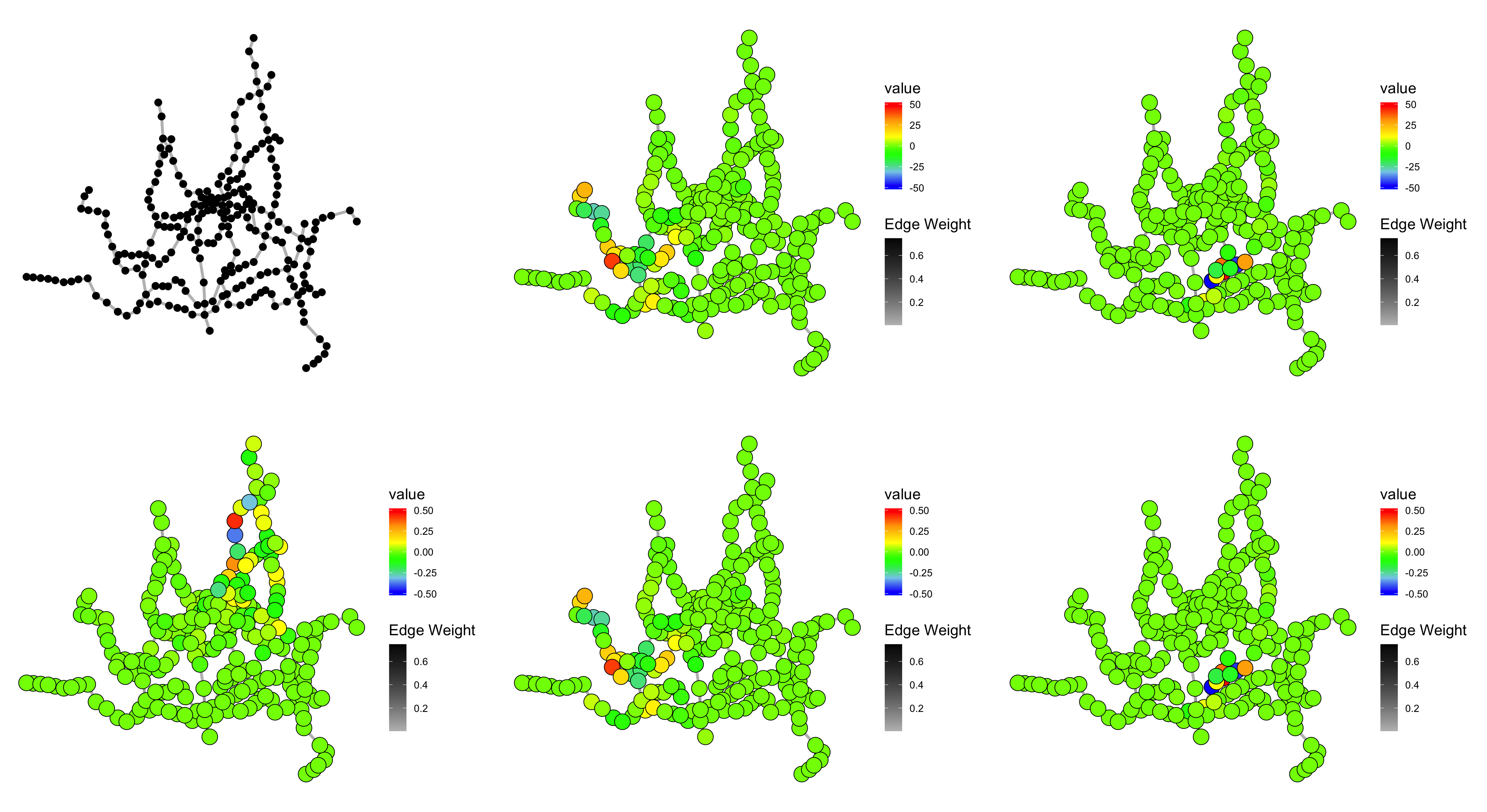}
		\caption{Visualization of the Seoul Metropolitan railroad network and signals: Seoul Metropolitan railroad network (top left), $\bs{x}_{\text{metro}}$ (top center), $\bs{y}_{\text{metro}}$ (top right), $\bs{v}_{100}^{\text{metro}}$ (bottom left), $\bs{v}_{50}^{\text{metro}}$ (bottom center), $\bs{v}_{150}^{\text{metro}}$ (bottom right).}
		\label{fig:metro_cross_spectrum}
	\end{figure}
	
	\begin{figure}
		\centering
		\includegraphics[width=0.9\textwidth]{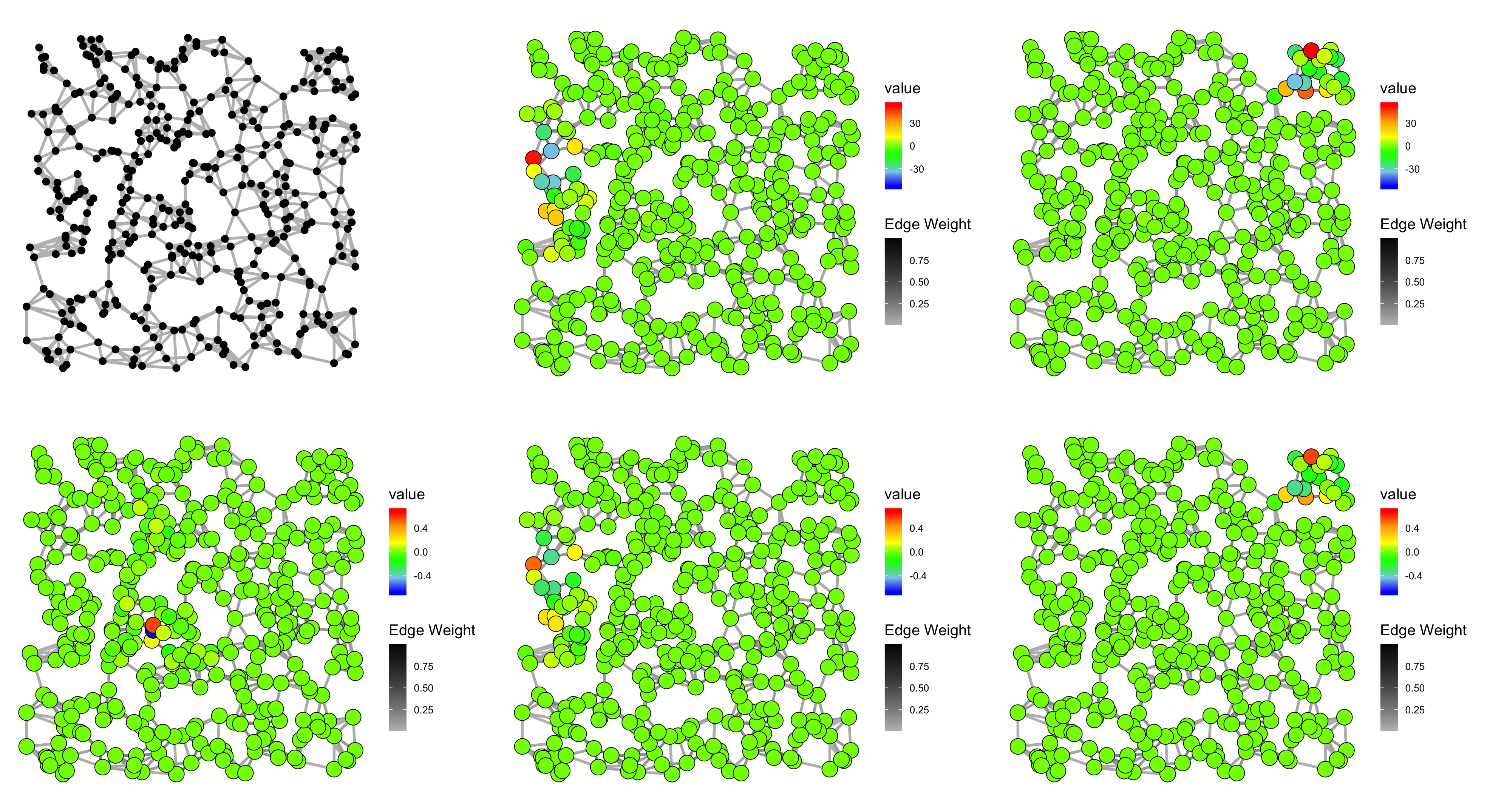}
		\caption{Visualization of a random sensor network and signals: random sensor network (top left), $\bs{x}_{\text{rsn}}$ (top center), $\bs{y}_{\text{rsn}}$ (top right), $\bs{v}_{300}^{\text{rsn}}$ (bottom left), $\bs{v}_{100}^{\text{rsn}}$ (bottom center), $\bs{v}_{200}^{\text{rsn}}$ (bottom right).}
		\label{fig:irregular_cross_spectrum}
	\end{figure}
	
	\begin{figure}[!htb]
		\centering
		\includegraphics[width=0.9\textwidth]{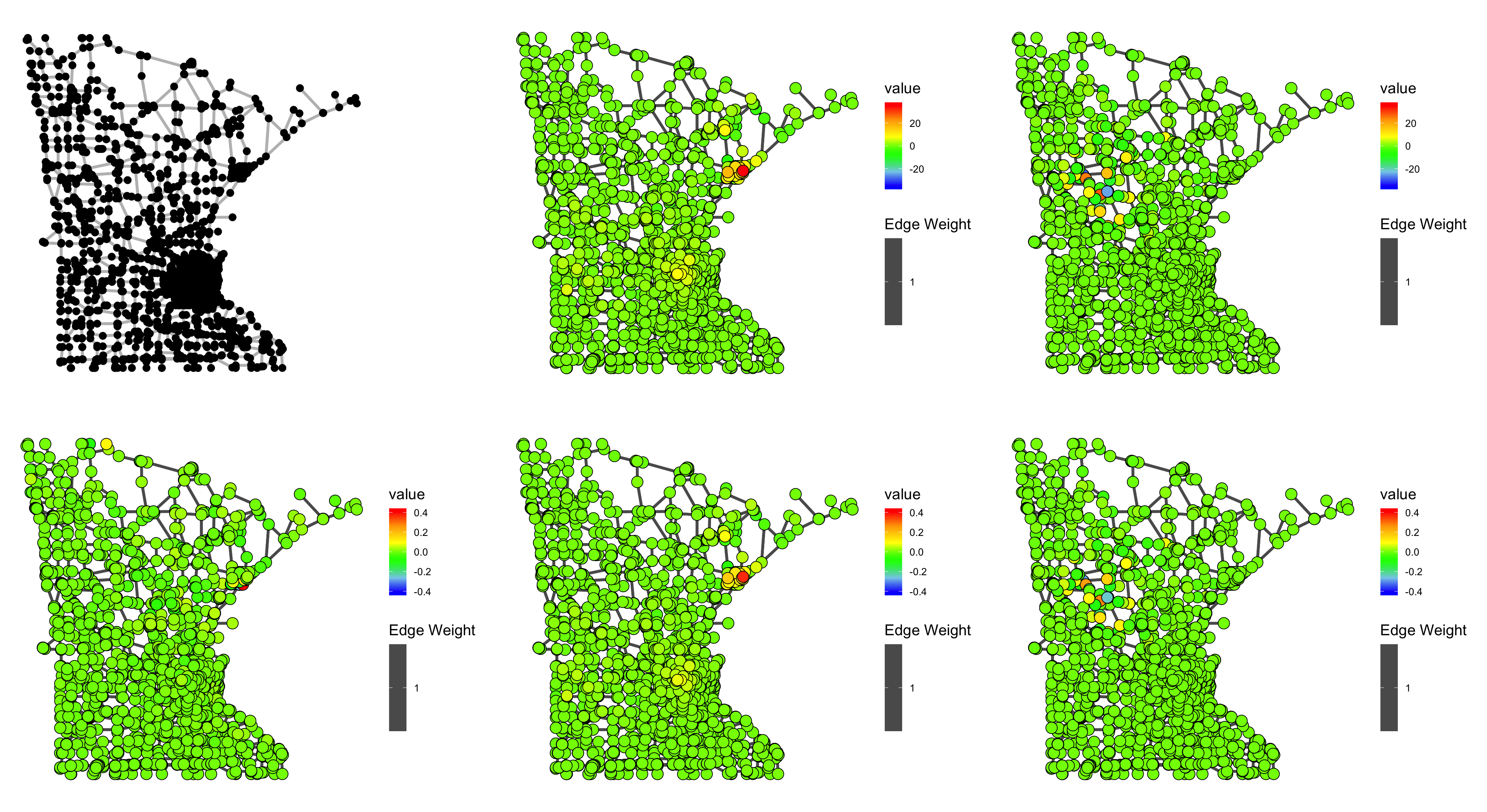}
		\caption{Visualization of the Minnesota road network and signals: Minnesota road network (top left), $\bs{x}_{\text{minne}}$ (top center), $\bs{y}_{\text{minne}}$ (top right), $\bs{v}_{500}^{\text{minne}}$ (bottom left), $\bs{v}_{1000}^{\text{minne}}$ (bottom center), $\bs{v}_{2000}^{\text{minne}}$ (bottom right).}
		\label{fig:minnesota_cross_spectrum}
	\end{figure}

Next, we conduct a similar analysis on a random sensor network shown in the top left panel of Figure \ref{fig:irregular_cross_spectrum}. The random sensor network is constructed by randomly selecting 400 nodes located in  $[0,20] \times [0,20]$, where the $x$- and $y$-coordinates are independently generated from a uniform distribution $\mathcal{U}(0,20)$. Each node is then connected to its five nearest neighbors. The edge weights are determined in the same way as the Seoul Metropolitan railroad network, where the distance represents the Euclidean distance. Two signals are generated as $\bs{x}_{\text{rsn}} = 5\bs{v}_{300}^{\text{rsn}}+100\bs{v}_{100}^{\text{rsn}}$ and $ \bs{y}_{\text{rsn}} = 5\bs{v}_{300}^{\text{rsn}}+100\bs{v}_{200}^{\text{rsn}}$, where $\bs{v}_{i}^{\text{rsn}}$ denotes the $i$th eigenvector of the graph Laplacian, corresponding to the $i$th eigenvalue $\lambda_i^{\text{rsn}}$. The common component between the two signals, $\bs{v}_{300}^{\text{rsn}}$, is uncovered in the GCSD estimates with $M=100$ and $K=50$, as shown in Figures \ref{fig:cpsd_simul_wavg} and \ref{fig:cpsd_simul_wf}. 
	
	Finally, we use the Minnesota road network for analysis. We generate two signals as $\bs{x}_{\text{minne}} = 5\bs{v}_{500}^{\text{minne}}+100\bs{v}_{1000}^{\text{minne}}$ and $ \bs{y}_{\text{minne}} = 5\bs{v}_{500}^{\text{minne}}+100\bs{v}_{2000}^{\text{minne}}$. Here, $\bs{v}_{i}^{\text{minne}}$ denotes the $i$th eigenvector of the graph Laplacian, corresponding to the $i$th eigenvalue $\lambda_i^{\text{minne}}$. The GCSD estimates obtained with $M=100$ and  $K=100$ reveal the common component between the two signals, $\bs{v}_{500}^{\text{minne}}$, as shown in Figures \ref{fig:cpsd_simul_wavg} and \ref{fig:cpsd_simul_wf}. 
		
	\begin{figure}[!htb]
		\centering
		\includegraphics[width=0.99\textwidth]{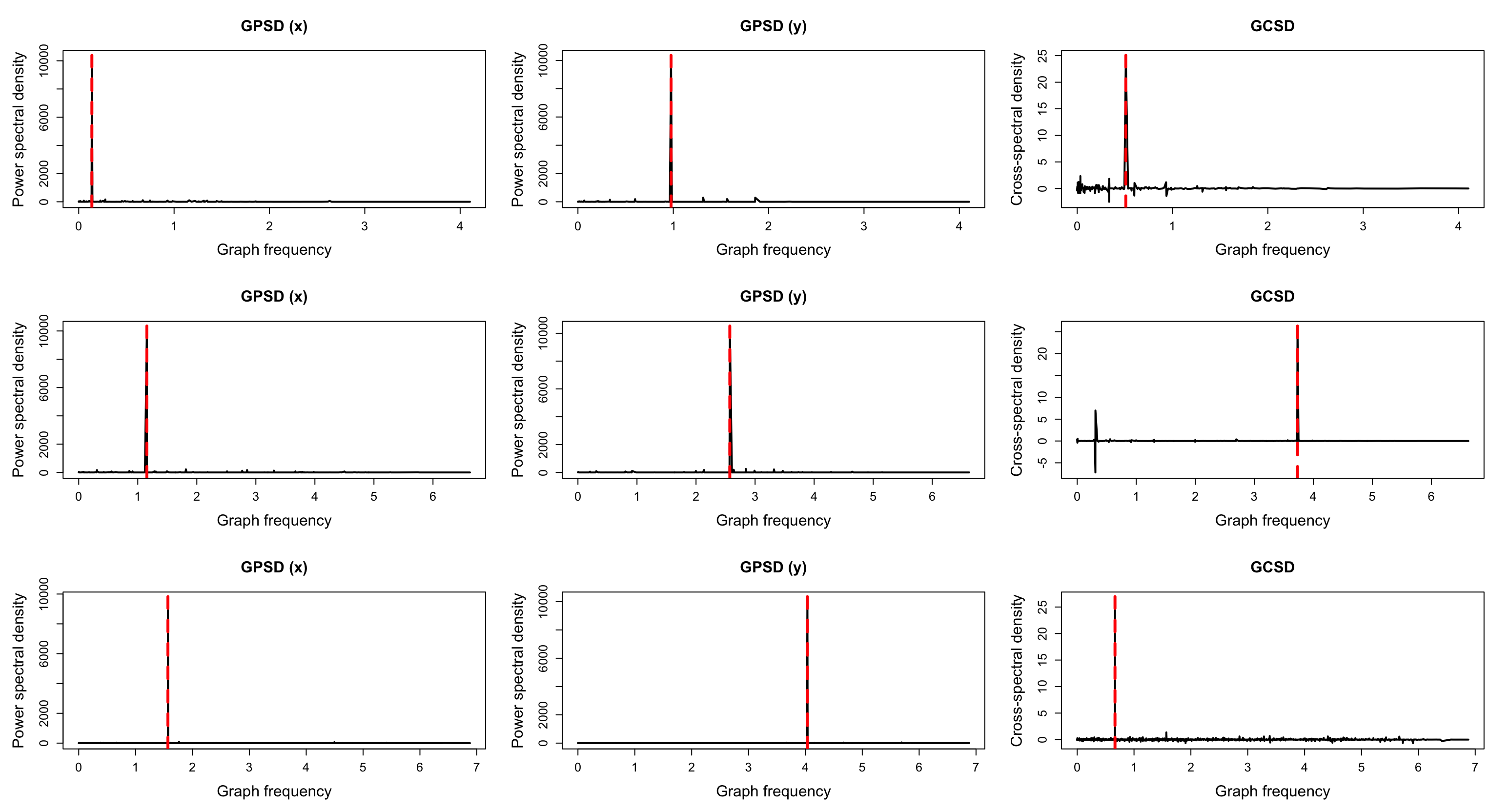}
		\caption{Cross-spectral analysis results on three networks: Seoul Metropolitan railroad network (top row), random sensor network (middle row), and Minnesota road network (bottom row). Each column, from left to right, illustrates the windowed average estimates of GPSD of $X$, $Y$, and GCSD, respectively.}
		\label{fig:cpsd_simul_wavg}
	\end{figure}
	
	\begin{figure}
		\centering
		\includegraphics[width=0.99\textwidth]{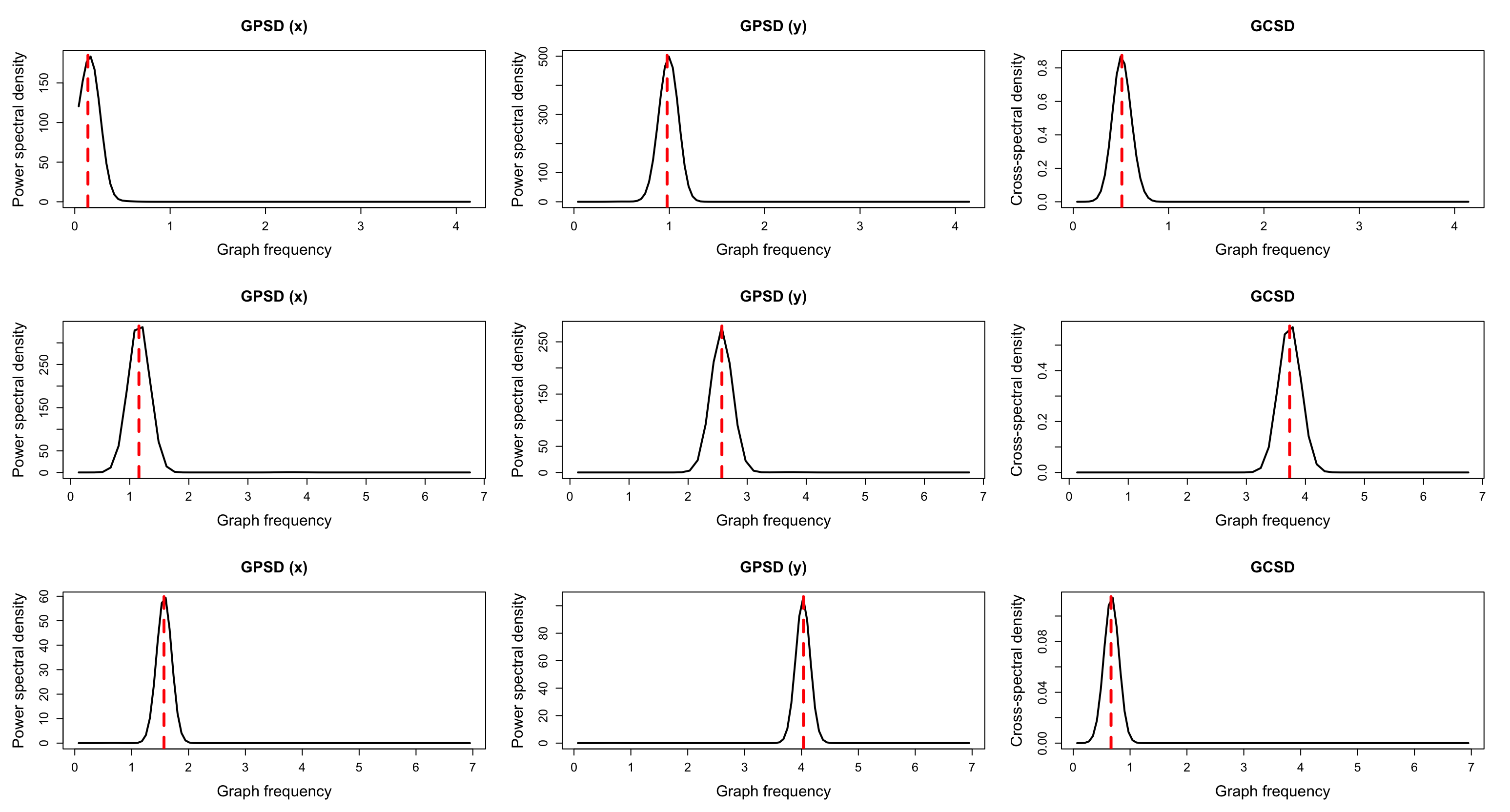}
		\caption{Cross-spectral analysis results on three networks: Seoul Metropolitan railroad network (top row), random sensor network (middle row), and Minnesota road network (bottom row). Each column, from left to right, illustrates the windowed graph Fourier transform-based estimates of GPSD of $X$, $Y$, and GCSD, respectively.}
		\label{fig:cpsd_simul_wf}
	\end{figure}

	\subsection{Cross-spectral analysis on meteorological dataset}
We apply the proposed cross-spectral analysis method to a meteorological dataset collected hourly during January 2014 in Brest, France. The dataset contains temperature and humidity measurements recorded at 22 weather stations. The original data are available at \url{https://donneespubliques.meteofrance.fr}. To construct the graph, each node is connected to its six nearest neighbors, and the edge weights are determined in the same manner as the random sensor network used in \Cref{sec:sharedcomponent}. When calculating the distance between two nodes, we use the weighted 3D coordinates of each station as suggested by \cite{Perraudin2017}, where we multiply the $z$-coordinate by 5 since the elevation is considered more important than other coordinates. The resultant graph is shown in \Cref{fig:france_meteo}. We average the hourly data to obtain the monthly average temperature and humidity for January. Subsequently, we subtract the mean signals across nodes to ensure that they are observations from zero-mean processes. For the graph shift operator, we use the graph Laplacian whose $i$th eigenvector corresponding to the $i$th eigenvalue $\lambda_i^{\text{brest}}$ is denoted by $\bs{v}_{i}^{\text{brest}}$. Since the graph has few nodes, we estimate GPSD and GCSD using the windowed graph Fourier transform-based estimator with the same graph kernel $g$ as in the previous section. The number of filters $K$ is set to 40.    
	\begin{figure}
		\centering
		\includegraphics[width=0.99\textwidth]{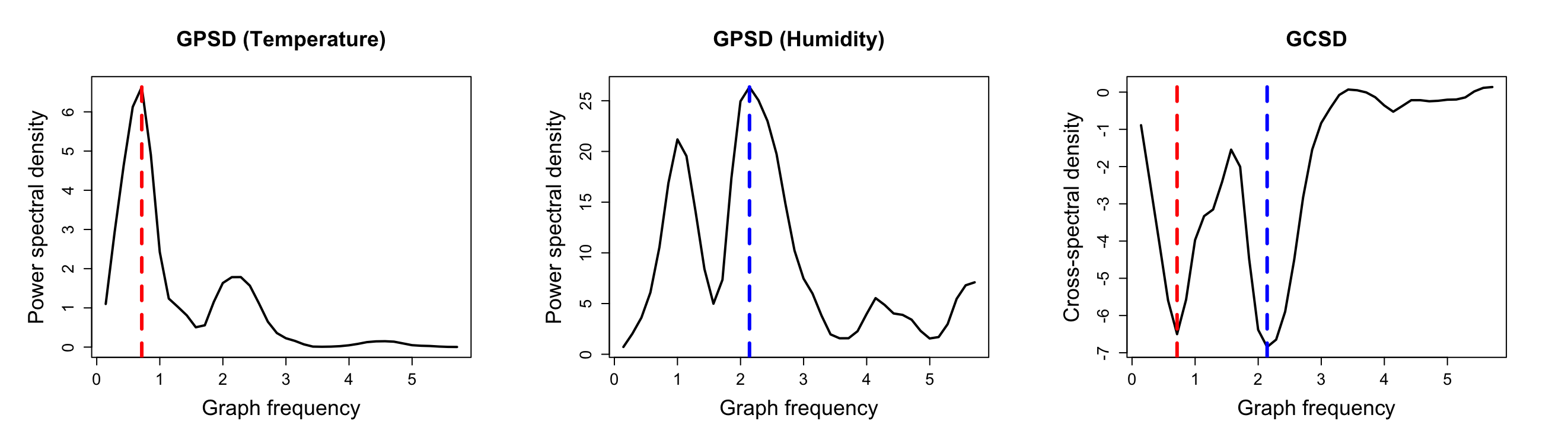}
		\caption{Cross-spectral analysis results on the Brest meteorological dataset: windowed graph Fourier transform-based estimates of GPSD for temperature (left), humidity (middle), and GCSD (right).}
		\label{fig:france_cpsd}
	\end{figure}
	
	\begin{figure}
		\centering
		\includegraphics[width=0.9\textwidth]{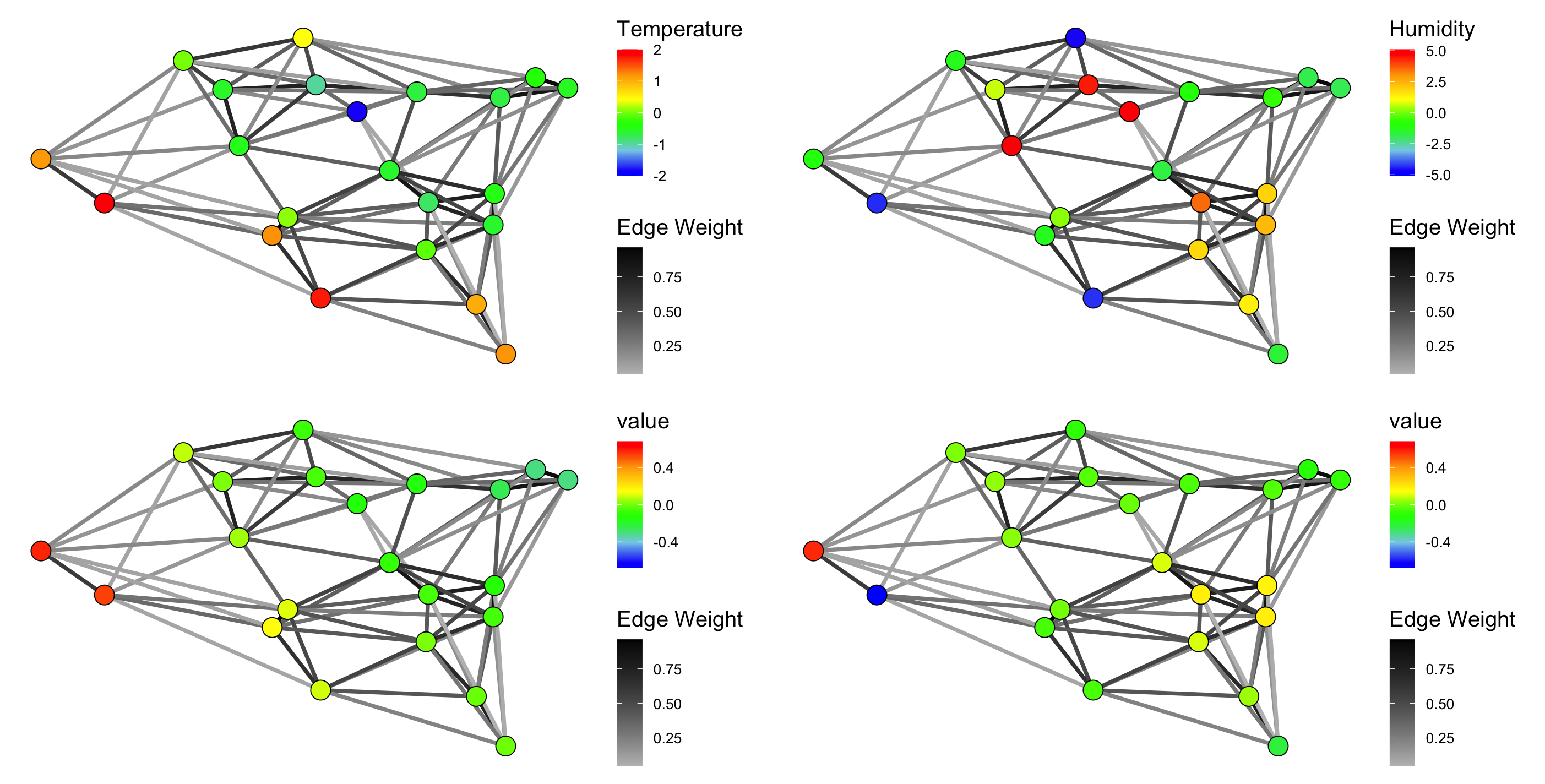}
		\caption{Visualization of signals on the Brest weather network: centered monthly average temperature for January  (top left), centered monthly average humidity for January (top right),  $\bs{v}_{3}^{\text{brest}}$ (bottom left), and $\bs{v}_{7}^{\text{brest}}$ (bottom right).}
		\label{fig:france_meteo}
	\end{figure}
	
	\Cref{fig:france_cpsd} shows the estimation results. The GPSD estimates for temperature and humidity peak at 0.714 and 2.142, aligning closely with $\lambda_3^{\text{brest}}$ and $\lambda_7^{\text{brest}}$. This suggests that the temperature exhibits a predominant pattern associated with $\bs{v}_3^{\text{brest}}$, while the humidity follows a predominant pattern linked to $\bs{v}_7^{\text{brest}}$. As shown in \Cref{fig:france_meteo}, this observation is plausible. Cross-spectral analysis provides additional insights into the relationship between two measurements. The negative peaks of 0.714 and 2.142 in the GCSD estimate imply that temperature and humidity share two primary components, $\bs{v}_{3}^{\text{brest}}$ and $\bs{v}_{7}^{\text{brest}}$, respectively, with opposite signs. This is consistent with a simple visual inspection plotting temperature and humidity across node index, which shows the opposite behavioral trends.

	\section{Robust spectral analysis on graphs} \label{sec:robustspectral}
	In this section, we discuss robust spectral analysis on graphs, proposing $M$-type estimators of GPSD and GCSD. It is natural to robustify the graph periodogram and the windowed average graph periodogram by utilizing the Huber loss function $\rho_c$ \citep{Huber1964} as 		
	\begin{equation*}
			\rho_c(t)=\begin{cases}
				t^2 & \text{if } \lvert t\rvert \le c, \\ 
				2c(\lvert t \rvert-c/2) & \text{if } \lvert t\rvert > c,
			\end{cases}
		\end{equation*}
		for a cutoff constant $c>0$. By replacing $L_2$ loss in (\ref{eq:graph_ls}) with the Huber loss, an $M$-type estimator $\hat{\bs{p}}_{X,c}$ of GPSD can be defined as
		\begin{equation*} \label{huberoptimizegraph_psd}
			\hat{\bs{p}}_{X,c} = \argmin_{\bs{p}} \sum\limits_{i=1}^{N^2}  \rho_c\left(( \hat{\bs{\sigma}}_{X} - \bs{G} \bs{p})_i \right).
		\end{equation*}
		
		In addition, the windowed average graph periodogram can be considered as a least squares solution. To be specific, for $\bs{x}_{\bs{w}_m}:= \operatorname{diag}(\bs{w}_m)\bs{x}$, $~~\hat{\bs{p}}^{\mathcal{W}}_{X}$ of (\ref{eq:window_avg_graph_periodogram}) is equal to $\frac{1}{M} \sum_{m=1}^M \lvert \bs{V}^H \bs{x}_{\bs{w}_m}\rvert^2$; thus, it is a least squares solution of $\bs{p}$ that minimizes $\lVert \hat{\bs{\sigma}}_{X}^{\mathcal{W}} - \bs{G} \bs{p} \rVert_2^2$, where $\hat{\bs{\sigma}}_{X}^{\mathcal{W}} = \operatorname{vec}\big( \hat{\bs{\Sigma}}_{X}^{\mathcal{W}}\big)$ and  $\hat{\bs{\Sigma}}_{X}^{\mathcal{W}} = \frac{1}{M}\sum_{m=1}^{M} \bs{x}_{\bs{w}_m} \bs{x}_{\bs{w}_m}^H$. Here, $\bs{x}_{\bs{w}_m}:= \operatorname{diag}(\bs{w}_m)\bs{x}$ and $\bs{y}_{\bs{w}_m}:= \operatorname{diag}(\bs{w}_m)\bs{y}$. Therefore, we define an $M$-type estimator of GPSD as 
		\begin{equation*} \label{huberoptimizegraph_psd_window}
			\hat{\bs{p}}_{X,c}^{\mathcal{W}} = \argmin_{\bs{p}} \sum\limits_{i=1}^{N^2} \rho_c(( \hat{\bs{\sigma}}_{X}^{\mathcal{W}} - \bs{G} \bs{p})_i).
		\end{equation*}

		Similarly, we obtain a robust graph cross-periodogram as
		\begin{equation*} \label{huberoptimizegraph_cpsd}
			\hat{\bs{p}}_{XY,c} = \argmin_{\bs{p}} \sum\limits_{i=1}^{N^2} \rho_c(( \hat{\bs{\sigma}}_{XY} - \bs{G} \bs{p})_i).
		\end{equation*}
		We also define an $M$-type estimator by modifying the windowed average graph cross-periodogram as 
		\begin{equation*} \label{huberoptimizegraph_cpsd_window}
			\hat{\bs{p}}_{XY,c}^{\mathcal{W}} = \argmin_{\bs{p}} \sum\limits_{i=1}^{N^2} \rho_c(( \hat{\bs{\sigma}}_{XY}^{\mathcal{W}} - \bs{G} \bs{p})_i),
		\end{equation*}
		where $\hat{\bs{\sigma}}_{XY}^{\mathcal{W}} = \operatorname{vec}\big( \hat{\bs{\Sigma}}_{XY}^{\mathcal{W}} \big)$ and $\hat{\bs{\Sigma}}_{XY}^{\mathcal{W}} = \sum_{m=1}^{M} \bs{x}_{\bs{w}_m} \bs{y}_{\bs{w}_m}^H$. 
	
We demonstrate the robustness of the proposed $M$-type windowed average graph periodogram and $M$-type windowed average graph cross-periodogram by simulating signals with outliers on the Karate club network. We compare $M$-type windowed average estimators with the original windowed average estimators. We use the graph Laplacian as the graph shift operator and utilize 100 random windows. When calculating $M$-type windowed average estimators, we use the Huber loss function $\rho_c$ with $c=0.25$, and the optimization solution is computed using the iteratively reweighted least squares method.	

The left panel of \Cref{fig:cpsd_simul_window_robust} shows the robustness of the $M$-type windowed average graph periodogram compared to the original windowed average graph periodogram. The signal is generated as $\bs{x}_{\text{karate}} = 3\bs{v}_{20}^{\text{karate}}$, where $\bs{v}_{i}^{\text{karate}}$ denotes the $i$th eigenvector of the graph Laplacian corresponding to the $i$th eigenvalue $\lambda_{i}^{\text{karate}}$. A signal with an outlier is obtained by setting the signal value of the 25th node to 4, which is greater than the maximum signal value of the original signal. As expected, the windowed average graph periodogram peaks at $\lambda_{20}^{\text{karate}}$ when computed using the data without outliers, but the estimates become inaccurate when calculated using the data with outliers. On the other hand, the $M$-type windowed average estimate is robust to the outlier, closely resembling the results of the windowed average graph periodogram obtained from the data without outliers. 

We now evaluate the robustness of the $M$-type windowed average graph cross-periodogram. To this end, we generate two signals are generated as $\bs{x}_{\text{karate}} = 5\bs{v}_{20}^{\text{karate}}+20\bs{v}_{30}^{\text{karate}}$ and $\bs{y}_{\text{karate}} = 5\bs{v}_{20}^{\text{karate}}+20\bs{v}_{10}^{\text{karate}}$. Signals with outliers are obtained by setting the signal value of $\bs{x}_{\text{karate}}$ at the 25th node to -10 and the signal value of $\bs{y}_{\text{karate}}$ at the 15th node to 10, respectively. The right panel of \Cref{fig:cpsd_simul_window_robust} shows the robustness of the $M$-type windowed average graph cross-periodogram. As one can see, the $M$-type windowed average estimate correctly captures the common graph frequency  $\lambda_{20}^{\text{karate}}$ between the two signals. 
		\begin{figure}
			     \centering
			     \includegraphics[width=0.99\textwidth]{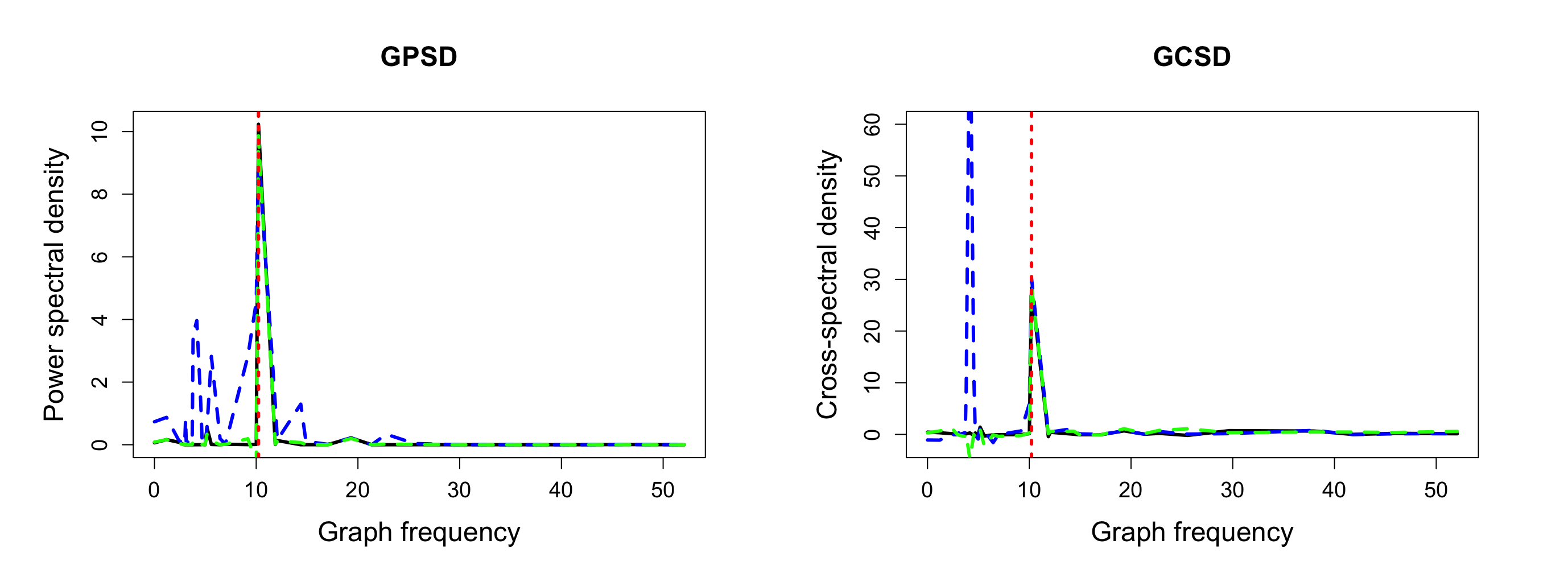}
			     \caption{Estimation results with outliers: GPSD estimation (left) and GCSD estimation (right). The windowed average estimate using data without outliers (black), the estimate using data with outliers (blue dashed), the $M$-type windowed average estimate using data with outliers (green dashed), and frequency corresponding to $\lambda_{20}^{\text{karate}}$ (red dotted).}
			     \label{fig:cpsd_simul_window_robust}
			 \end{figure}
	
	We leave the theoretical properties of the $M$-type estimators to future research. 
		
	\section{Concluding remarks}  \label{sec:conclusion}
In this study, we explored cross-spectral analysis tools for graph signals. We defined concepts such as joint weak stationarity, cross-spectral density, and coherence on graphs. We proposed several estimators for graph cross-spectral density and provided theoretical properties related to the definitions and estimators. Through numerical experiments on various irregular graphs, we investigated the usefulness and performance of these estimators. The results demonstrated the effectiveness of cross-spectral analysis in identifying shared behaviors between two graph processes that cannot be easily achieved solely through visualization or spectral analysis of a single graph process. We further suggested robustifying the proposed estimators to make them robust to outliers. This work contributes to the emerging field of multivariate graph signal processing, which has yet to be extensively studied.

There are several avenues for further development in this area. First, designing optimal windows for evaluating the windowed average graph cross-periodograms remains an open question. The study of non-stationary graph processes also presents an opportunity for further research. Locally stationary graph processes introduced by \cite{Canbolat2024} can be a promising starting point. Furthermore, the development of a wavelet-based graph periodogram or graph cross-periodogram for such processes can be pursued in future work. These approaches have the potential to advance our understanding of graph signal processing and are left as future work.	
	
	\section*{Acknowledgments}
This research was supported by the National Research Foundation of Korea (NRF) funded by the Korea government (2021R1A2C1091357).
	
	\clearpage
	\begin{appendices}
		\section{Proofs} \label{appendix:proof}
		\subsection{Proof of \Cref{prop:jointstationarity_defequiv}} \label{pf:jointstationarity_defequiv}
		\begin{proof}
			We follow similar processes of proofs outlined in \cite{Marques2017}. The equivalence between \Cref{def:jointweakstationary}(b) and (c) can be readily demonstrated by following the proof provided in \cite{Marques2017}. To prove the equivalence between \Cref{def:jointweakstationary}(a) and (c), let us first assume that \Cref{def:jointweakstationary}(a) holds. Given $\bs{S} = \bs{V} \bs{\Lambda} \bs{V}^H$, let us define $\tilde{\bs{h}}_i:=\operatorname{diag}(\sum_{\ell=0}^{N-1} h_{i,\ell} \bs{S}^\ell)$ and $\bs{h}_i:= (h_{i,0}, \ldots, h_{i,N-1})^\top$ ($i=1,2$). We further let $\lambda_j = (\bs{\Lambda})_{j,j}$ and $\bs{\Psi}$ be the $N \times N$ Vandermonde matrix with entries $\psi_{jk} = \lambda_j^{k-1}$. Then, we can rewrite $\bs{\mathrm{H}}_i$ ($i=1,2$) as
			\begin{equation*}
				\bs{\mathrm{H}}_i = \sum\limits_{\ell=0}^{N-1} h_{i,\ell} \bs{S}^\ell = \bs{V} \operatorname{diag}(\tilde{\bs{h}}_i) \bs{V}^H = \bs{V} \operatorname{diag}(\bs{\Psi}\bs{h}_i) \bs{V}^H.
			\end{equation*}
			Consequently, by letting $\bs{V}^H Cov(\bs{\epsilon}_1, \bs{\epsilon}_2) \bs{V} = \operatorname{diag}(\tilde{\bs{\epsilon}})$,
                \begin{equation*}
                \begin{aligned}
				\bs{\Sigma}_{X} &= E\left((\bs{\mathrm{H}}_1\bs{\epsilon}_1) (\bs{\mathrm{H}}_1\bs{\epsilon}_1)^H\right) = \bs{\mathrm{H}}_1 E(\bs{\epsilon}_1 \bs{\epsilon}_1^H)\bs{\mathrm{H}}_1^H = \bs{\mathrm{H}}_1 \bs{\mathrm{H}}_1^H = \bs{V} \operatorname{diag}(\lvert \tilde{\bs{h}}_1 \rvert^2) \bs{V}^H, \\
				\bs{\Sigma}_{Y} &= E\left((\bs{\mathrm{H}}_2\bs{\epsilon}_2) (\bs{\mathrm{H}}_2\bs{\epsilon}_2)^H\right) = \bs{\mathrm{H}}_2 E(\bs{\epsilon}_2 \bs{\epsilon}_2^H)\bs{\mathrm{H}}_2^H = \bs{\mathrm{H}}_2 \bs{\mathrm{H}}_2^H = \bs{V} \operatorname{diag}(\lvert \tilde{\bs{h}}_2 \rvert^2) \bs{V}^H, \\
				\bs{\Sigma}_{XY} &= E\left((\bs{\mathrm{H}}_1\bs{\epsilon}_1) (\bs{\mathrm{H}}_2\bs{\epsilon}_2)^H\right) = \bs{\mathrm{H}}_1 E(\bs{\epsilon}_1 \bs{\epsilon}_2^H)\bs{\mathrm{H}}_2^H =  \bs{V} \operatorname{diag}(\tilde{\bs{\gamma}})\bs{V}^H,
                \end{aligned}
                \end{equation*}
			where $\tilde{\bs{\gamma}} = \tilde{\bs{h}}_1 \circ \tilde{\bs{\epsilon}} \circ \tilde{\bs{h}}_2^*$. This leads to \Cref{def:jointweakstationary}(c). Conversely, if \Cref{def:jointweakstationary}(c) holds, there exist $N \times 1$ vectors $\tilde{\bs{\gamma}}_{1}$, $\tilde{\bs{\gamma}}_{2}$, and $\tilde{\bs{\gamma}}_{12}$ satisfying $\bs{\Sigma}_{X} = \bs{V}\operatorname{diag}(\tilde{\bs{\gamma}}_{1}) \bs{V}^H$, $\bs{\Sigma}_{Y} = \bs{V}\operatorname{diag}(\tilde{\bs{\gamma}}_{2}) \bs{V}^H$, and $\bs{\Sigma}_{XY} = \bs{V}\operatorname{diag}(\tilde{\bs{\gamma}}_{12}) \bs{V}^H$. Here, $\tilde{\bs{\gamma}}_{1}$ and $\tilde{\bs{\gamma}}_{2}$ are elementwise nonnegative vectors because $\bs{\Sigma}_{X}$ and $\bs{\Sigma}_{Y}$ are positive semi-definite matrices. Since the eigenvalues of $\bs{S}$ are all distinct and $\bs{\Psi}$ is a full rank $N \times N$ Vandermonde matrix, there exist unique solutions $\bs{h}_1$ and $\bs{h}_2$ satisfying $\bs{\Psi} \bs{h}_1 = \sqrt{\tilde{\bs{\gamma}}_{1}}$ and $\bs{\Psi} \bs{h}_2 = \sqrt{\tilde{\bs{\gamma}}_{2}}$, where the square root is applied elementwise. Then, by choosing $\tilde{\bs{\epsilon}}$ to satisfy $\tilde{\bs{\epsilon}}= \frac{\tilde{\bs{\gamma}}_{12}}{\sqrt{\tilde{\bs{\gamma}}_{1}} \circ \sqrt{\tilde{\bs{\gamma}}_{2}}}$, \Cref{def:jointweakstationary}(a) holds. 
		\end{proof}
		
		\subsection{Proof of \Cref{prop:spectralconvolution_cpsd}} \label{pf:spectralconvolution_cpsd}
		\begin{proof}
			We have $\bs{\mathrm{H}}_i = \bs{V} \operatorname{diag}(\tilde{\bs{h}}_i) \bs{V}^H$ ($i=1,2$) and $\bs{\Sigma}_{XY} = \bs{V}\operatorname{diag}(\bs{p}_{XY}) \bs{V}^H$. By using these three equations, we can derive 
			\begin{equation*}
				\bs{\Sigma}_{ZW} = \bs{\mathrm{H}}_1 \bs{\Sigma}_{XY} \bs{\mathrm{H}}_2^H = \bs{V} \operatorname{diag}(\tilde{\bs{h}}_1) \operatorname{diag}(\bs{p}_{XY}) \operatorname{diag}(\tilde{\bs{h}}_2^*) \bs{V}^H.
			\end{equation*} It completes the proofs for (a) and (b).
		\end{proof}
		
		\subsection{Proof of \Cref{prop:crosscov_element}} \label{pf:crosscov_element}
		\begin{proof}
			Since $\bs{\mathrm{H}}_1$ and $\bs{\mathrm{H}}_2$ are linear shift-invariant, by \cite{Sandryhaila2013}, we can rewrite $\bs{\mathrm{H}}_1$ and $\bs{\mathrm{H}}_2$ as
			\begin{equation*}
				\bs{\mathrm{H}}_i = \sum\limits_{\ell=0}^{L_i-1} h_{i,\ell} \bs{S}^\ell, \quad i=1,2.
			\end{equation*}
			Since $S_{i,j}$ is nonzero only if $i=j$ or the $i$th node and the $j$th node are connected, it is clear that $(\bs{S}^\ell)_{i,j}$ is nonzero only if the distance between the $i$th node and the $j$th node is equal to or less than $\ell$. Based on the fact that $\bs{\Sigma}_{XY} = \bs{\mathrm{H}}_1 \bs{\mathrm{H}}_2^H$, and that it is a polynomial of $\bs{S}$ and $\bs{S}^H$ of order $L_1 + L_2 -2$, the proof is completed.
		\end{proof}

		\subsection{Proof of \Cref{prop:gftuncorr_cpsd}} \label{pf:gftuncorr_cpsd}
		\begin{proof}
			We can compute the covariance matrix of $\tilde{X}$ as follows, and the proof is completed.
			\begin{equation*}
                \begin{aligned}
				\bs{\Sigma}_{\tilde{X}\tilde{Y}} = E(\tilde{X}\tilde{Y}^H) = \bs{V}^H E(XY^H)\bs{V} = \bs{V}^H \bs{\Sigma}_{XY}\bs{V} = \operatorname{diag}(\bs{p}_{XY}).
                \end{aligned}
			\end{equation*}
		\end{proof}
		
		\subsection{Proof of \Cref{prop:graph_crossperiodogram_equiv}} \label{pf:graph_crossperiodogram_equiv}
		\begin{proof}
			Denote the $i$th column and the $(i,j)$th element of the matrix $\bs{V}$ by $\bs{v}_i$ and $v_{ij}$, respectively.
			We show that the $i$th element of $\hat{\bs{p}}_{XY}^{p}$ is equal to that of $\hat{\bs{p}}_{XY}^{c}$ as 
			\begin{equation*}
                \begin{aligned}
				(\hat{\bs{p}}_{XY}^{c})_i = \left(\bs{V}^H\left(\frac{1}{R}\sum\limits_{r=1}^{R} \bs{x}_r \bs{y}_r^H\right) \bs{V}\right)_{i,i} = \frac{1}{R}\sum\limits_{r=1}^{R} \bs{v}_i^H \bs{x}_r \bs{y}_r^H \bs{v}_i = \frac{1}{R}\sum\limits_{r=1}^{R}  (\bs{v}_i^H \bs{x}_r) (\bs{v}_i^H \bs{y}_r)^* = (\hat{\bs{p}}_{XY}^{p})_i. 
                \end{aligned}
			\end{equation*}
			Therefore, $\hat{\bs{p}}_{XY}^{p} = \hat{\bs{p}}_{XY}^{c}$. Now, we show that the $i$th element of $\hat{\bs{p}}_{XY}^{c}$ is equal to that of $\hat{\bs{p}}_{XY}^{ls}$.
			It can be easily shown that $\bs{G}^H \bs{G} = I$. Thus, $\hat{\bs{p}}_{XY}^{ls} = \bs{G}^H \hat{\bs{\sigma}}_{XY}$. Then, 
			\begin{equation*}
                \begin{aligned}
				(\hat{\bs{p}}_{XY}^{ls})_i = (\bs{G}^H \hat{\bs{\sigma}}_{XY})_i = \left(v_{i1}\bs{v}_i^H, \ldots,v_{iN}\bs{v}_i^H\right)\hat{\bs{\sigma}}_{XY} = \bs{v}_i^H \hat{\bs{\Sigma}}_{XY} \bs{v}_i = (\hat{\bs{p}}_{XY}^{c})_i.
                \end{aligned}
			\end{equation*}
			Hence, $\hat{\bs{p}}_{XY}^{c} = \hat{\bs{p}}_{XY}^{ls}$. The proof is completed. 
		\end{proof}

		\subsection{Proof of \Cref{prop:bias_var_graph_cross_periodogram}} \label{pf:bias_var_graph_cross_periodogram}
		To prove \Cref{prop:bias_var_graph_cross_periodogram}, we first present and prove several lemmas and theorems.
		\begin{lemma} \label{lem:isserlis}
			(Isserlis' Theorem, \citealp{Isserlis1918}) \\
			If $(Z_1, \ldots, Z_n)^\top$ is a zero-mean real (or complex)-valued multivariate normal random vector, then the expectation of the product of random vector elements can be computed as
			\begin{equation*}
				E(Z_1Z_2\cdots Z_n) = \sum\limits_{P \in \mathbb{P}_n} \prod \limits_{\{i,j\} \in P} E(Z_i Z_j),
			\end{equation*}
			where $\mathbb{P}_n$ is the set of all distinct ways of partitioning $\{1,\ldots, n\}$ into pairs. Specifically, $\mathbb{P}_n = \{\{P_1, \ldots, P_{n/2}\} \, \vert \, P_1,\ldots, P_{n/2}$ are disjoint sets, each containing two elements, and  $P_1 \cup \cdots \cup P_{n/2} = \{1,\ldots, n\}\}$ if $n$ is even; otherwise, $\mathbb{P}_n$ is an empty set.
		\end{lemma}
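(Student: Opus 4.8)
The plan is to prove Isserlis' theorem first in the real case via the moment generating function (MGF) of the multivariate normal distribution, and then to obtain the complex case from the real one by linearity. In the real case, write $\sigma_{ij}=E(Z_iZ_j)$, so that $(Z_1,\ldots,Z_n)^\top$ is $N(0,\bs{\Sigma})$ with $\bs{\Sigma}=(\sigma_{ij})$, and recall that its MGF is $M(t)=E(e^{t^\top Z})=\exp\!\big(\tfrac12\sum_{i,j}\sigma_{ij}t_it_j\big)$, which is real-analytic near the origin. Differentiating under the expectation gives $E(Z_1Z_2\cdots Z_n)=\frac{\partial^n M}{\partial t_1\cdots\partial t_n}(0)$, which is exactly the coefficient of the square-free monomial $t_1t_2\cdots t_n$ in the Taylor expansion of $M$ (its multi-index has factorial $1$), so it suffices to read off that coefficient.

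Next, expand $M(t)=\sum_{\ell\ge0}\tfrac{1}{\ell!\,2^\ell}\big(\sum_{i,j}\sigma_{ij}t_it_j\big)^\ell$. Every monomial appearing here has even total degree, so when $n$ is odd the coefficient of $t_1\cdots t_n$, hence $E(Z_1\cdots Z_n)$, is $0$, consistent with the convention $\mathbb{P}_n=\emptyset$. When $n=2m$ is even, only the term $\ell=m$ can contribute, and expanding $\big(\sum_{i,j}\sigma_{ij}t_it_j\big)^m$ as a sum over ordered $m$-tuples of ordered index pairs $((i_1,j_1),\ldots,(i_m,j_m))$, the monomial $t_1\cdots t_n$ is produced exactly by those tuples whose underlying unordered pairs $\{i_1,j_1\},\ldots,\{i_m,j_m\}$ form a pair partition $P\in\mathbb{P}_n$ of $\{1,\ldots,n\}$. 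Each such $P$ arises from exactly $m!\,2^m$ ordered tuples — choosing the order of the $m$ blocks and the order within each block — and each contributes $\prod_{\{i,j\}\in P}\sigma_{ij}$ by the symmetry $\sigma_{ij}=\sigma_{ji}$. The combinatorial factor $m!\,2^m$ cancels the analytic prefactor $\tfrac{1}{m!\,2^m}$, leaving $E(Z_1\cdots Z_n)=\sum_{P\in\mathbb{P}_n}\prod_{\{i,j\}\in P}\sigma_{ij}$, which is the claim.

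For the complex case, write each $Z_i=\sum_k a_{ik}W_k$ with $(W_k)$ a real zero-mean Gaussian vector (e.g. the real and imaginary parts of the $Z_i$) and $a_{ik}\in\mathbb{C}$; expanding the product and applying the real result to $E(W_{k_1}\cdots W_{k_n})$, then resumming with the bilinearity identity $\sum_{k,\ell}a_{ik}a_{j\ell}E(W_kW_\ell)=E(Z_iZ_j)$ (no conjugation enters, since the pairwise factors in the statement are $E(Z_iZ_j)$ rather than $E(Z_i\overline{Z_j})$), gives $E(Z_1\cdots Z_n)=\sum_{P\in\mathbb{P}_n}\prod_{\{i,j\}\in P}E(Z_iZ_j)$.

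The only real obstacle is the bookkeeping in the combinatorial step: correctly counting the multiplicity $m!\,2^m$ of ordered tuples collapsing onto a given pair partition and checking that it cancels the Taylor prefactor; nothing here is conceptually deep. An alternative that avoids the generating-function combinatorics is induction on $n$ using the Gaussian integration-by-parts (Stein) identity $E\big(Z_1\,g(Z_2,\ldots,Z_n)\big)=\sum_{j=2}^{n} E(Z_1Z_j)\,E\big(\partial_j g(Z_2,\ldots,Z_n)\big)$ applied to $g=Z_2\cdots Z_n$: the resulting sum groups the pair partitions of $\{1,\ldots,n\}$ according to the block containing $1$, with the odd case collapsing to $0$ through the base case $E(Z_1)=0$. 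I would present the MGF argument as the primary proof and could mention this alternative.
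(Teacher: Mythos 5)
Your proposal is correct, but it cannot be compared against the paper's argument in any substantive way because the paper gives no proof at all: its ``proof'' of \Cref{lem:isserlis} is the single line ``Refer to \cite{Isserlis1918}.'' Your MGF argument is the standard self-contained derivation and the details check out: the mixed partial $\partial^n M/\partial t_1\cdots\partial t_n$ at the origin is indeed the coefficient of the square-free monomial $t_1\cdots t_n$ (all multi-index factorials equal $1$); only the $\ell=m$ term of the exponential series can produce a degree-$n$ square-free monomial, which also disposes of the odd case; and the multiplicity count $m!\,2^m$ for ordered tuples of ordered pairs collapsing onto a fixed pair partition is right and exactly cancels the prefactor $1/(m!\,2^m)$, using $\sigma_{ij}=\sigma_{ji}$. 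The complex extension is also sound, provided one reads ``complex-valued multivariate normal'' in the usual sense that the real and imaginary parts are jointly real Gaussian, so that each $Z_i$ is a complex-linear combination of a real Gaussian vector; the resummation step works because each index $i$ lies in exactly one block of the partition, so the $n$-fold sum factorizes over blocks, and, as you note, no conjugates appear since the statement pairs via $E(Z_iZ_j)$. The Stein/integration-by-parts induction you sketch as an alternative is equally valid and is arguably the cleaner route if one wants to avoid the tuple-counting; either would serve as a legitimate replacement for the bare citation. The only thing worth adding for completeness is a one-line justification that differentiation under the expectation is permitted, which follows from the Gaussian MGF being finite and analytic on all of $\mathbb{R}^n$.
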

		\begin{proof}
			Refer to \cite{Isserlis1918}.
		\end{proof}
		
		\begin{lemma} \label{lem:cor_isserlis}
			Let $\bs{u}$ and $\bs{v}$ be real-valued multivariate normal random vectors with $\bs{u} \sim N(\bs{0}, \bs{\Sigma_u})$, $\bs{v} \sim N(\bs{0}, \bs{\Sigma_v})$ and $Cov(\bs{u}, \bs{v}) = \bs{\Sigma_{uv}}$. For real (or complex) matrices $\bs{A}, \bs{B}, \bs{C}, \bs{D}$, and real (or complex) vectors $\bs{a}, \bs{b}, \bs{c}, \bs{d}$, the following equation holds.
			\begin{equation*}
                \begin{aligned}
				E(\bs{A}\bs{u}\bs{v}^\top \bs{B}^\top \bs{C}\bs{u}\bs{v}^\top \bs{D}^\top) = \bs{A}\bs{\Sigma_{uv}}\bs{B}^\top \bs{C} \bs{\Sigma_{uv}} \bs{D}^\top + \bs{A}\bs{\Sigma_{u}}\bs{C}^\top \bs{B} \bs{\Sigma_{v}} \bs{D}^\top + \bs{A}\bs{\Sigma_{uv}}\bs{D}^\top \operatorname{tr}(\bs{B}^\top \bs{C}\bs{\Sigma_{uv}}).
                \end{aligned}			
   \end{equation*}
		\end{lemma}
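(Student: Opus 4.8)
The plan is to reduce the matrix identity to a single scalar fourth-moment computation and then invoke Isserlis' theorem (\Cref{lem:isserlis}) entrywise. Writing $\bs{B}^\top$ and $\bs{D}^\top$ in coordinates, the $(p,q)$ entry of the left-hand side is
\[
\bigl(E(\bs{A}\bs{u}\bs{v}^\top \bs{B}^\top \bs{C}\bs{u}\bs{v}^\top \bs{D}^\top)\bigr)_{p,q}
= \sum_{i,j,k,l,m} A_{p,i}\, B_{k,j}\, C_{k,l}\, D_{q,m}\; E(u_i v_j u_l v_m),
\]
so everything hinges on the scalar moment $E(u_i v_j u_l v_m)$. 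Since $(u_i,v_j,u_l,v_m)^\top$ is a zero-mean jointly Gaussian vector, \Cref{lem:isserlis} with $n=4$ yields the three-pairing expansion
\[
E(u_i v_j u_l v_m) = E(u_i v_j)E(u_l v_m) + E(u_i u_l)E(v_j v_m) + E(u_i v_m)E(v_j u_l),
\]
into which I would substitute $E(u_i v_j)=(\bs{\Sigma_{uv}})_{i,j}$, $E(u_i u_l)=(\bs{\Sigma_u})_{i,l}$, $E(v_j v_m)=(\bs{\Sigma_v})_{j,m}$, $E(u_i v_m)=(\bs{\Sigma_{uv}})_{i,m}$, and, crucially, $E(v_j u_l)=(\bs{\Sigma_{uv}})_{l,j}$ rather than $(\bs{\Sigma_{uv}})_{j,l}$.

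Next I would re-sum each of the three resulting multi-index sums back into matrix products. In the first pairing the indices $i,j$ contract through $\bs{A}\bs{\Sigma_{uv}}\bs{B}^\top$ and $l,m$ through $\bs{C}\bs{\Sigma_{uv}}\bs{D}^\top$, leaving a single contraction over $k$ and hence $(\bs{A}\bs{\Sigma_{uv}}\bs{B}^\top\bs{C}\bs{\Sigma_{uv}}\bs{D}^\top)_{p,q}$; in the second pairing $i,l$ contract through $\bs{A}\bs{\Sigma_u}\bs{C}^\top$ and $j,m$ through $\bs{B}\bs{\Sigma_v}\bs{D}^\top$, giving $(\bs{A}\bs{\Sigma_u}\bs{C}^\top\bs{B}\bs{\Sigma_v}\bs{D}^\top)_{p,q}$. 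In the third pairing the free indices $p,q$ enter only through $\bs{A}$ and $\bs{D}$, so it factors as $(\bs{A}\bs{\Sigma_{uv}}\bs{D}^\top)_{p,q}$ times the scalar $\sum_{j,k,l} B_{k,j} C_{k,l} (\bs{\Sigma_{uv}})_{l,j} = \operatorname{tr}(\bs{B}^\top\bs{C}\bs{\Sigma_{uv}})$; adding the three contributions reproduces the claimed identity. The complex-coefficient case is immediate, since expectation is linear over complex linear combinations of the real Gaussian entries and no conjugation is involved.

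The only delicate point is the index bookkeeping: one must track which factors pick up a transpose, get $E(v_j u_l)=(\bs{\Sigma_{uv}})_{l,j}$ right, and then notice that in the third term the $\bs{B}$--$\bs{C}$ indices close into a trace while the $\bs{A}$--$\bs{D}$ indices remain free. Beyond that the computation is routine, with no structural obstacle expected.
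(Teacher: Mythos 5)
Your proof is correct and follows essentially the same route as the paper's: reduce to a scalar fourth moment, apply Isserlis' theorem (\Cref{lem:isserlis}) to get the three-pairing expansion, and re-sum the pairings into the three matrix products (the paper verifies the $(1,1)$ entry via the components of $\bs{Au},\bs{Bv},\bs{Cu},\bs{Dv}$, while you expand fully in the raw indices of $\bs{u},\bs{v}$ — an immaterial difference in bookkeeping). Your index contractions, including $E(v_j u_l)=(\bs{\Sigma_{uv}})_{l,j}$ and the closure of the $\bs{B}$--$\bs{C}$ indices into the trace, all check out.
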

		\begin{proof}
			It is sufficient to show that the $(1,1)$th element of both sides of the equation are equal. Denote the $i$th rows of $\bs{A}, \bs{B}, \bs{C}, \bs{D}$ by $\bs{A}_i^\top, \bs{B}_i^\top, \bs{C}_i^\top, \bs{D}_i^\top$, respectively, and the $i$th element of $\bs{A}\bs{u}, \bs{B}\bs{v}, \bs{C}\bs{u}, \bs{D}\bs{v}$ by $a_i, b_i, c_i, d_i$, respectively. Additionally, let $\bs{w} = (\bs{u}^\top, \bs{v}^\top)^\top$. Then, the $(1,1)$th element of the left-hand side becomes $\sum_{i=1}^n a_1 b_i c_i d_1$. We note that 
			\begin{equation*}
				(a_1, b_i, c_i, d_1)^\top = \left(
				\begin{array}{cc}
					\bs{A}_1^\top & \bs{0}^\top  \\ [0.05cm] \hline \\ [-0.35cm]
					\bs{0}^\top & \bs{B}_i^\top \\ [0.05cm] \hline \\ [-0.35cm]
					\bs{C}_i^\top & \bs{0}^\top \\ [0.05cm] \hline \\ [-0.35cm]
					\bs{0}^\top & \bs{D}_1^\top
				\end{array}
				\right) \bs{w} \sim N(\bs{0}, \cdot).
			\end{equation*}
			Thus, by \Cref{lem:isserlis}, $E(a_1 b_i c_i d_1) = E(a_1 b_i)E(c_i d_1) + E(a_1 c_i)E(b_i d_1) + E(a_1 d_1)E(b_i c_i)$. The summation of the first term in the right-hand side over $i$, $\sum_{i=1}^{n} E(a_1 b_i)E(c_i d_1)$, becomes the $(1,1)$th element of $E(\bs{A}\bs{u}\bs{v}^\top \bs{B}^\top)E(\bs{C}\bs{u}\bs{v}^\top \bs{D}^\top)$, which is equal to the $(1,1)$th element of $\bs{A}\bs{\Sigma_{uv}}\bs{B}^\top \bs{C} \bs{\Sigma_{uv}} \bs{D}^\top$. Similarly, $\sum_{i=1}^{n} E(a_1 c_i)E(b_i d_1)$ is equal to the $(1,1)$th element of $\bs{A}\bs{\Sigma_{u}}\bs{C}^\top \bs{B} \bs{\Sigma_{v}} \bs{D}^\top$. Finally, $\sum_{i=1}^{n} E(a_1 d_1)E(b_i c_i) = E(a_1 d_1)\sum_{i=1}^{n}E(b_i c_i)$ is equal to the $(1,1)$th element of $E(\bs{A}\bs{u} \bs{v}^\top \bs{D}^\top) E(\bs{v}^\top \bs{B}^\top \bs{C} \bs{u})$, which is equal to the $(1,1)$th element of $\bs{A}\bs{\Sigma_{uv}}\bs{D}^\top \operatorname{tr}(\bs{B}^\top \bs{C}\bs{\Sigma_{uv}})$. Therefore, the given equation holds.
		\end{proof}
		
		\begin{theorem} \label{thm:fourth_moment}
			Let $\bs{x}$ and $\bs{y}$ be real-valued multivariate normal random vectors with $\bs{x} \sim N(\bs{\mu_{x}}, \bs{\Sigma_x})$, $\bs{y} \sim N(\bs{\mu_{y}}, \bs{\Sigma_y})$, and $Cov(\bs{x}, \bs{y}) = \bs{\Sigma_{xy}}$. For real (or complex) matrices $\bs{A}, \bs{B}, \bs{C}, \bs{D}$, and real (or complex) vectors $\bs{a}, \bs{b}, \bs{c}, \bs{d}$, the following equation holds.
			\begin{equation*}
				\begin{aligned}
					E&\left((\bs{A}\bs{x}+\bs{a})(\bs{B}\bs{y}+\bs{b})^\top (\bs{C}\bs{x}+\bs{c})(\bs{D}\bs{y}+\bs{d})^\top\right) \\
					& = \; (\bs{A}\bs{\Sigma_{xy}}\bs{B}^\top + (\bs{A}\bs{\mu_x}+\bs{a})(\bs{B}\bs{\mu_y}+\bs{b})^\top)(\bs{C} \bs{\Sigma_{xy}} \bs{D}^\top + (\bs{C}\bs{\mu_x} + \bs{c}) (\bs{D}\bs{\mu_y} + \bs{d})^\top) \\
					& \quad \; + (\bs{A}\bs{\Sigma_{x}}\bs{C}^\top + (\bs{A}\bs{\mu_x} + \bs{a})(\bs{C}\bs{\mu_x} + \bs{c})^\top)(\bs{B} \bs{\Sigma_{y}} \bs{D}^\top + (\bs{B}\bs{\mu_y} + \bs{b}) (\bs{D}\bs{\mu_y} + \bs{d})^\top) \\
					&\quad \; + \operatorname{tr}(\bs{B}^\top \bs{C}\bs{\Sigma_{xy}})(\bs{A}\bs{\Sigma_{xy}}\bs{D}^\top + (\bs{A}\bs{\mu_x} + \bs{a})(\bs{D}\bs{\mu_y} + \bs{d})^\top) \\
					&\quad \; + (\bs{B}\bs{\mu_y} + \bs{b})^\top (\bs{C}\bs{\mu_x} + \bs{c}) (\bs{A}\bs{\Sigma_{xy}}\bs{D}^\top - (\bs{A}\bs{\mu_x} + \bs{a})(\bs{D}\bs{\mu_y} + \bs{d})^\top).
				\end{aligned}
			\end{equation*}
		\end{theorem}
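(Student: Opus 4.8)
The plan is to reduce to the centered situation already resolved by \Cref{lem:cor_isserlis} and then expand by multilinearity; since the matrices and vectors $\bs{A},\bs{B},\bs{C},\bs{D},\bs{a},\bs{b},\bs{c},\bs{d}$ play no probabilistic role, it costs nothing to allow them complex. First I would write $\bs{x} = \bs{u} + \bs{\mu_x}$ and $\bs{y} = \bs{v} + \bs{\mu_y}$, where $\bs{u} \sim N(\bs{0}, \bs{\Sigma_x})$, $\bs{v} \sim N(\bs{0}, \bs{\Sigma_y})$, and $Cov(\bs{u}, \bs{v}) = \bs{\Sigma_{xy}}$, and introduce the shorthand $\bs{a}' := \bs{A}\bs{\mu_x} + \bs{a}$, $\bs{b}' := \bs{B}\bs{\mu_y} + \bs{b}$, $\bs{c}' := \bs{C}\bs{\mu_x} + \bs{c}$, $\bs{d}' := \bs{D}\bs{\mu_y} + \bs{d}$, so that $\bs{A}\bs{x} + \bs{a} = \bs{A}\bs{u} + \bs{a}'$ and likewise for the other three factors. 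With this substitution the claimed right-hand side is literally the desired expression, so it suffices to evaluate $E\big((\bs{A}\bs{u} + \bs{a}')(\bs{B}\bs{v} + \bs{b}')^\top (\bs{C}\bs{u} + \bs{c}')(\bs{D}\bs{v} + \bs{d}')^\top\big)$.

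Next I would expand this product by multilinearity into the $2^4 = 16$ terms obtained by choosing, in each of the four factors, either its random part ($\bs{A}\bs{u}$, $(\bs{B}\bs{v})^\top$, $\bs{C}\bs{u}$, $(\bs{D}\bs{v})^\top$) or its constant part ($\bs{a}'$, $\bs{b}'^\top$, $\bs{c}'$, $\bs{d}'^\top$). Since $(\bs{u}^\top, \bs{v}^\top)^\top$ is jointly zero-mean Gaussian, \Cref{lem:isserlis} forces every term containing an odd number of random factors to have zero expectation, which removes the $4 + 4 = 8$ terms with one or three random factors. What survives is: the purely constant term $(\bs{b}'^\top \bs{c}')\,\bs{a}'\bs{d}'^\top$; six terms with exactly two random factors; and the single four-random-factor term $E\big((\bs{A}\bs{u})(\bs{B}\bs{v})^\top(\bs{C}\bs{u})(\bs{D}\bs{v})^\top\big)$, which is precisely the quantity evaluated in \Cref{lem:cor_isserlis} with $\bs{\Sigma_u} = \bs{\Sigma_x}$, $\bs{\Sigma_v} = \bs{\Sigma_y}$, $\bs{\Sigma_{uv}} = \bs{\Sigma_{xy}}$, yielding the three Wick contributions $\bs{A}\bs{\Sigma_{xy}}\bs{B}^\top\bs{C}\bs{\Sigma_{xy}}\bs{D}^\top$, $\bs{A}\bs{\Sigma_x}\bs{C}^\top\bs{B}\bs{\Sigma_y}\bs{D}^\top$, and $\operatorname{tr}(\bs{B}^\top\bs{C}\bs{\Sigma_{xy}})\,\bs{A}\bs{\Sigma_{xy}}\bs{D}^\top$.

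For the six two-random-factor terms I would compute each in closed form using $E(\bs{u}\bs{u}^\top) = \bs{\Sigma_x}$, $E(\bs{v}\bs{v}^\top) = \bs{\Sigma_y}$, $E(\bs{u}\bs{v}^\top) = \bs{\Sigma_{xy}}$ together with cyclicity of the trace; for example the random factors in positions $\{2,3\}$ give $\operatorname{tr}(\bs{B}^\top\bs{C}\bs{\Sigma_{xy}})\,\bs{a}'\bs{d}'^\top$, those in positions $\{1,4\}$ give $(\bs{b}'^\top\bs{c}')\,\bs{A}\bs{\Sigma_{xy}}\bs{D}^\top$, and the remaining four give $\bs{A}\bs{\Sigma_{xy}}\bs{B}^\top\bs{c}'\bs{d}'^\top$, $\bs{a}'\bs{b}'^\top\bs{C}\bs{\Sigma_{xy}}\bs{D}^\top$, $\bs{A}\bs{\Sigma_x}\bs{C}^\top\bs{b}'\bs{d}'^\top$, and $\bs{a}'\bs{c}'^\top\bs{B}\bs{\Sigma_y}\bs{D}^\top$. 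Finally I would gather all eight surviving contributions and reorganize them into the four bracketed products of the claim, checking the identification by expanding those four products into their own monomials in $\bs{\Sigma_x}, \bs{\Sigma_y}, \bs{\Sigma_{xy}}, \bs{a}', \bs{b}', \bs{c}', \bs{d}'$ and matching term by term.

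The main obstacle is purely bookkeeping: confirming that the eight surviving terms are in exact bijection with the sixteen monomials produced by expanding the four bracketed products, including the placement of the trace factor and the signs. The one genuinely delicate point is the minus sign in the last bracket, $(\bs{A}\bs{\Sigma_{xy}}\bs{D}^\top - (\bs{A}\bs{\mu_x}+\bs{a})(\bs{D}\bs{\mu_y}+\bs{d})^\top)$: the purely constant term $(\bs{b}'^\top\bs{c}')\,\bs{a}'\bs{d}'^\top$ is produced once by the first bracket (as $\bs{a}'\bs{b}'^\top\!\cdot\bs{c}'\bs{d}'^\top$), once by the second (as $\bs{a}'\bs{c}'^\top\!\cdot\bs{b}'\bs{d}'^\top$, equal to the former because $\bs{b}'^\top\bs{c}'$ is a scalar), and once with a minus sign by the fourth bracket, so its net coefficient is $1 + 1 - 1 = 1$, matching the left-hand side; the minus sign is exactly what cancels the double over-counting.
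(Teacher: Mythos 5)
Your proposal is correct and follows essentially the same route as the paper's proof: center the variables, expand the quartic by multilinearity, kill the odd-order terms via Isserlis' theorem, evaluate the fully random term with \Cref{lem:cor_isserlis}, and reassemble the eight survivors into the bracketed form (the paper merely performs the substitution $\bs{a}\mapsto\bs{A}\bs{\mu_x}+\bs{a}$, etc.\ at the end rather than the beginning). Your remark on the minus sign compensating the double-counted constant term $(\bs{b}'^\top\bs{c}')\,\bs{a}'\bs{d}'^\top$ is exactly the right consistency check.
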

		
		\begin{proof}
			We follow a similar process of proof as described in \cite{brookes2020}. Let $\bs{u} = \bs{x} - \bs{\mu_x}$ and $\bs{v} = \bs{y} - \bs{\mu_y}$. Then, $\bs{u} \sim N(\bs{0}, \bs{\Sigma_x})$, $\bs{v} \sim N(\bs{0}, \bs{\Sigma_y})$, and $Cov(\bs{u}, \bs{v}) = \bs{\Sigma_{xy}}$. Let us compute $E\left((\bs{A}\bs{u}+\bs{a})(\bs{B}\bs{v}+\bs{b})^\top (\bs{C}\bs{u}+\bs{c})(\bs{D}\bs{v}+\bs{d})^\top\right)$ first. When we multiply out the quartic expression and compute expectations of each term, we can remove odd-powered terms involving $\bs{u}$ and $\bs{v}$ by applying \Cref{lem:isserlis}. Consequently, we get
			\begin{equation*}
				\begin{aligned}
					E&\left((\bs{A}\bs{u}+\bs{a})(\bs{B}\bs{v}+\bs{b})^\top (\bs{C}\bs{u}+\bs{c})(\bs{D}\bs{v}+\bs{d})^\top\right) \\ & = \; E(\bs{A}\bs{u}\bs{v}^\top \bs{B}^\top \bs{C}\bs{u}\bs{v}^\top \bs{D}^\top) + E(\bs{A}\bs{u}\bs{v}^\top \bs{B}^\top \bs{c} \bs{d}^\top) + E(\bs{A}\bs{u}\bs{b}^\top \bs{C} \bs{u} \bs{d}^\top) 
					+ E(\bs{A}\bs{u}\bs{b}^\top \bs{c}\bs{v}^\top \bs{D}^\top) \\ & \quad \; + E(\bs{a}\bs{v}^\top \bs{B}^\top \bs{C}\bs{u}\bs{d}^\top) + E(\bs{a}\bs{v}^\top \bs{B}^\top \bs{c}\bs{v}^\top \bs{D}^\top) 
					+ E(\bs{a}\bs{b}^\top \bs{C}\bs{u}\bs{v}^\top \bs{D}^\top) + \bs{a}\bs{b}^\top \bs{c}\bs{d}^\top.
				\end{aligned}
			\end{equation*}
			By \Cref{lem:cor_isserlis}, the first term in the right-hand side of the above equation is equal to $\bs{A}\bs{\Sigma_{xy}}\bs{B}^\top \bs{C} \bs{\Sigma_{xy}} \bs{D}^\top + \bs{A}\bs{\Sigma_{x}}\bs{C}^\top \bs{B} \bs{\Sigma_{y}} \bs{D}^\top + \bs{A}\bs{\Sigma_{xy}}\bs{D}^\top \operatorname{tr}(\bs{B}^\top \bs{C}\bs{\Sigma_{xy}})$. By appropriately moving and transposing scalar factors and utilizing the property of the trace,  the other terms on the right-hand side can be easily computed. As a result, we obtain
			\begin{equation*}
				\begin{aligned}
					E&\left((\bs{A}\bs{u}+\bs{a})(\bs{B}\bs{v}+\bs{b})^\top (\bs{C}\bs{u}+\bs{c})(\bs{D}\bs{v}+\bs{d})^\top\right)  \\ &= \;  \bs{A}\bs{\Sigma_{xy}}\bs{B}^\top \bs{C} \bs{\Sigma_{xy}} \bs{D}^\top + \bs{A}\bs{\Sigma_{x}}\bs{C}^\top \bs{B} \bs{\Sigma_{y}} \bs{D}^\top + \bs{A}\bs{\Sigma_{xy}}\bs{D}^\top \operatorname{tr}(\bs{B}^\top \bs{C}\bs{\Sigma_{xy}}) \\ & \quad \; + \bs{A}\bs{\Sigma_{xy}} \bs{B}^\top \bs{c} \bs{d}^\top + \bs{A}\bs{C_x} \bs{C}^\top \bs{b} \bs{d}^\top 
					+ \bs{b}^\top \bs{c}\bs{A}\bs{\Sigma_{xy}} \bs{D}^\top
					\\ & \quad \; + \bs{a}\bs{d}^\top \operatorname{tr}(\bs{B}^\top \bs{C}\bs{\Sigma_{xy}}) + \bs{a}\bs{c}^\top \bs{B} \bs{C_y} \bs{D}^\top 
					+ \bs{a}\bs{b}^\top \bs{C}\bs{\Sigma_{xy}} \bs{D}^\top + \bs{a}\bs{b}^\top \bs{c}\bs{d}^\top \\ 
					& = \;  (\bs{A}\bs{\Sigma_{xy}}\bs{B}^\top + \bs{a}\bs{b}^\top)(\bs{C} \bs{\Sigma_{xy}} \bs{D}^\top + \bs{c} \bs{d}^\top) + (\bs{A}\bs{\Sigma_{x}}\bs{C}^\top + \bs{a}\bs{c}^\top)(\bs{B} \bs{\Sigma_{y}} \bs{D}^\top + \bs{b} \bs{d}^\top) \\
					& \quad \; + \operatorname{tr}(\bs{B}^\top \bs{C}\bs{\Sigma_{xy}})(\bs{A}\bs{\Sigma_{xy}}\bs{D}^\top + \bs{a}\bs{d}^\top) + \bs{b}^\top \bs{c} (\bs{A}\bs{\Sigma_{xy}}\bs{D}^\top - \bs{a}\bs{d}^\top).
				\end{aligned}
			\end{equation*}
			Now, to prove the original statement, we replace $\bs{a}, \bs{b}, \bs{c}, \bs{d}$ by $\bs{A} \bs{\mu_x} + \bs{a}, \bs{B} \bs{\mu_y} + \bs{b}, \bs{C} \bs{\mu_x} + \bs{c}, \bs{D} \bs{\mu_y} + \bs{d}$, respectively. By utilizing the fact that $\bs{A} \bs{x} + \bs{a} = \bs{A} \bs{u} + (\bs{A} \bs{\mu_x} + \bs{a})$, we complete the proof.
		\end{proof}
		
		\begin{corollary} \label{cor:fourth_moment}
			Let $\bs{x}$ and $\bs{y}$ be real-valued multivariate normal random vectors with $\bs{x} \sim N(\bs{0}, \bs{\Sigma_x})$, $\bs{y} \sim N(\bs{0}, \bs{\Sigma_y})$, and $Cov(\bs{x}, \bs{y}) = \bs{\Sigma_{xy}}$. For real (or complex) matrices $\bs{A}$, the following equation holds.
			\begin{equation*}
				\begin{aligned}
					E(\bs{x}\bs{y}^\top \bs{A} \bs{x}\bs{y}^\top) = & \bs{\Sigma_{xy}}\bs{A} \bs{\Sigma_{xy}}  + \bs{\Sigma_{x}}\bs{A}^\top \bs{\Sigma_{y}} 
					+ \operatorname{tr}(\bs{A}\bs{\Sigma_{xy}})\bs{\Sigma_{xy}}.
				\end{aligned}
			\end{equation*}
		\end{corollary}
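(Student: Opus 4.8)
The plan is to read off \Cref{cor:fourth_moment} as a one-line specialization of \Cref{thm:fourth_moment}, so the ``proof'' amounts to a careful choice of the free parameters in that theorem. First I would set $\bs{\mu_x} = \bs{\mu_y} = \bs{0}$ and $\bs{a} = \bs{b} = \bs{c} = \bs{d} = \bs{0}$, and then take the four generic matrices of \Cref{thm:fourth_moment} to be $\bs{A} = \bs{B} = \bs{D} = \bs{I}$ while letting its matrix $\bs{C}$ play the role of the matrix denoted $\bs{A}$ in \Cref{cor:fourth_moment} (a harmless clash of names). Under these substitutions the quartic form on the left-hand side of \Cref{thm:fourth_moment} becomes $\bs{x}\,\bs{y}^\top(\bs{A}\bs{x})\,\bs{y}^\top = \bs{x}\bs{y}^\top\bs{A}\bs{x}\bs{y}^\top$ by associativity, which is exactly the expression whose expectation is required.

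Next I would simplify the four summands on the right-hand side of \Cref{thm:fourth_moment} under this substitution. Since every mean vector and every additive vector is zero, each parenthetical block of the form $(\,\cdot\,\bs{\mu}+\cdot\,)$ drops out, so the first summand collapses to $\bs{\Sigma_{xy}}\bs{A}\bs{\Sigma_{xy}}$, the second to $\bs{\Sigma_x}\bs{A}^\top\bs{\Sigma_y}$ (using $\bs{B} = \bs{D} = \bs{I}$ and $\bs{C} = \bs{A}$), and the third to $\operatorname{tr}(\bs{A}\bs{\Sigma_{xy}})\bs{\Sigma_{xy}}$ (using $\bs{B}^\top\bs{C} = \bs{A}$). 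The fourth summand vanishes identically because its scalar prefactor $(\bs{B}\bs{\mu_y}+\bs{b})^\top(\bs{C}\bs{\mu_x}+\bs{c})$ equals $0$. Adding the three surviving terms yields precisely the asserted identity.

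There is essentially no obstacle here: the argument is entirely mechanical once the substitution is fixed, and the only point requiring attention is verifying that exactly three of the four terms of \Cref{thm:fourth_moment} survive the zero-mean specialization while the last one disappears. (The vectors $\bs{a},\bs{b},\bs{c},\bs{d}$ listed in the hypotheses of \Cref{cor:fourth_moment} do not actually appear in its conclusion; they are inherited from the statement of \Cref{thm:fourth_moment} and may safely be ignored.)
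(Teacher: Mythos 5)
Your proposal is correct and is exactly the route the paper takes: the paper's proof of \Cref{cor:fourth_moment} simply says it is clear from \Cref{thm:fourth_moment}, and your explicit substitution ($\bs{\mu_x}=\bs{\mu_y}=\bs{0}$, $\bs{a}=\bs{b}=\bs{c}=\bs{d}=\bs{0}$, the theorem's $\bs{A},\bs{B},\bs{D}$ set to $\bs{I}$ and its $\bs{C}$ playing the corollary's $\bs{A}$) is the verification the paper leaves implicit. Your bookkeeping of which three terms survive and why the fourth vanishes is accurate.
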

		\begin{proof}
			It is clear from \Cref{thm:fourth_moment}.
		\end{proof}
		
		We can now prove \Cref{prop:bias_var_graph_cross_periodogram}.
		\begin{proof}
			(a) It is easy to show by the following.
			\begin{equation*}
				E(\hat{\bs{p}}_{XY}) = \operatorname{diag}(\bs{V}^H E(\hat{\bs{\Sigma}}_{XY}) \bs{V}) = \operatorname{diag}(\bs{V}^H \bs{\Sigma}_{XY} \bs{V}) = \bs{p}_{XY}. 
			\end{equation*}
			(b) We start by computing
			\begin{equation*}
				E\left((\hat{\bs{p}}_{XY})(\hat{\bs{p}}_{XY})^H\right)= \frac{1}{R^2} \sum\limits_{r=1}^{R} \sum\limits_{r'=1}^{R} E\left(\operatorname{diag}(\bs{V}^H \bs{x}_r\bs{y}_r^H \bs{V}) \operatorname{diag}(\bs{V}^H \bs{x}_{r'}\bs{y}_{r'}^H \bs{V})^H\right).
			\end{equation*}
			If $r \neq r'$, we obtain that 
			\begin{equation*}
				\begin{aligned}
					E\left(\operatorname{diag}(\bs{V}^H \bs{x}_r\bs{y}_r^H \bs{V}) \operatorname{diag}(\bs{V}^H \bs{x}_{r'}\bs{y}_{r'}^H \bs{V})^H\right)
					&= E\left(\operatorname{diag}(\bs{V}^H \bs{x}_r\bs{y}_r^H \bs{V})\right) E\left(\operatorname{diag}(\bs{V}^H \bs{x}_{r'}\bs{y}_{r'}^H \bs{V})^H\right) \\ 
					& = \operatorname{diag}(\bs{V}^H \bs{\Sigma}_{XY} \bs{V})\operatorname{diag}(\bs{V}^H \bs{\Sigma}_{XY} \bs{V})^H \\
					& = \bs{p}_{XY} \bs{p}_{XY}^H.
				\end{aligned}
			\end{equation*}
			If $r = r'$, $E(\operatorname{diag}(\bs{V}^H \bs{x}_r\bs{y}_r^H \bs{V}) \operatorname{diag}(\bs{V}^H \bs{x}_{r'}\bs{y}_{r'}^H \bs{V})^H)$ has its $(i,j)$th element  equal to $$E((\bs{v}_i^H \bs{x}_r\bs{y}_r^H \bs{v}_i)(\bs{v}_j^H \bs{x}_r\bs{y}_r^H \bs{v}_j)^*)=\bs{v}_i^H E(\bs{x}_r\bs{y}_r^H \bs{v}_i \bs{v}_j^\top \bs{x}_r^*\bs{y}_r^\top)\bs{v}_j^*,$$ where $\bs{v}_i$ is the $i$th column of $\bs{V}$. Since $\bs{x}_r$ and $\bs{y}_r$ are real-valued, by \Cref{cor:fourth_moment},          
			\begin{equation*}
				E(\bs{x}_r\bs{y}_r^H \bs{v}_i \bs{v}_j^\top \bs{x}_r^*\bs{y}_r^\top) = E(\bs{x}_r\bs{y}_r^\top \bs{v}_i \bs{v}_j^\top \bs{x}_r \bs{y}_r^\top) = \bs{\Sigma}_{XY} \bs{v}_i \bs{v}_j^\top \bs{\Sigma}_{XY} + \bs{\Sigma}_{X}  \bs{v}_j \bs{v}_i^\top \bs{\Sigma}_{Y} + \operatorname{tr}(\bs{v}_i \bs{v}_j^\top \bs{\Sigma}_{XY})\bs{\Sigma}_{XY}.
			\end{equation*}
			Since $\bs{S}$ is symmetric, $\bs{V}$ is real matrix; thus,   
			\begin{equation*}
				E\left((\bs{v}_i^H \bs{x}_r\bs{y}_r^H \bs{v}_i)(\bs{v}_j^H \bs{x}_r\bs{y}_r^H \bs{v}_j)^* \right) = 
				\begin{cases} 
					2 \lvert(\bs{p}_{XY})_i \rvert^2 + (\bs{p}_X)_i(\bs{p}_Y)_i & \text{if } i = j, \\ 
					(\bs{p}_{XY})_i (\bs{p}_{XY}^*)_j & \text{if } i \neq j.
				\end{cases}
			\end{equation*}
			Combining these results together, we can compute $E\left((\hat{\bs{p}}_{XY})(\hat{\bs{p}}_{XY})^H\right)$, and  obtain
			\begin{equation*}
				Var(\hat{\bs{p}}_{XY}) = E\left((\hat{\bs{p}}_{XY})(\hat{\bs{p}}_{XY})^H\right) - \bs{p}_{XY} \bs{p}_{XY}^H = \frac{1}{R} \left(\operatorname{diag}(\lvert \bs{p}_{XY} \rvert^2) + \operatorname{diag}(\bs{p}_X)  \operatorname{diag}(\bs{p}_Y)\right).
			\end{equation*}
		\end{proof}
		
		\subsection{Proof of \Cref{prop:windowcrossperiodogram_expectation}} \label{pf:windowcrossperiodogram_expectation}
		\begin{proof}
			We can rewrite $\hat{\bs{p}}_{XY}^{\bs{w}} = \operatorname{diag}(\tilde{\bs{W}} \bs{V}^H \bs{x}  \bs{y}^H \bs{V} \tilde{\bs{W}}^H)$. Then, the following equation can be obtained by taking expectation
			\begin{equation*}
				E(\hat{\bs{p}}_{XY}^{\bs{w}}) = \operatorname{diag}(\tilde{\bs{W}} \bs{V}^H \bs{\Sigma}_{XY} \bs{V} \tilde{\bs{W}}^H) = \operatorname{diag}(\tilde{\bs{W}} \operatorname{diag}(\bs{p}_{XY}) \tilde{\bs{W}}^H) = (\tilde{\bs{W}} \circ \tilde{\bs{W}}^*)\bs{p}_{XY}.
			\end{equation*}
			The proof is completed.
		\end{proof}
		
		\subsection{Proof of \Cref{prop:bias_var_window_graph_cross}} \label{pf:bias_var_window_graph_cross}
		\begin{proof}
			(a) The proof for the expectation follows the same approach to the proof of \Cref{prop:windowcrossperiodogram_expectation}. \\
			(b) To compute the trace of the covariance matrix, we focus on the diagonal elements of $Var(\hat{\bs{p}}_{XY}^{\mathcal{W}})$. For $1 \le i \le N$, compute $E\Big(\Big((\hat{\bs{p}}_{XY}^{\mathcal{W}})(\hat{\bs{p}}_{XY}^{\mathcal{W}})^H\Big)_{i,i}\Big) = E\left((\hat{\bs{p}}_{XY}^{\mathcal{W}})_i(\hat{\bs{p}}_{XY}^{\mathcal{W}})^*_i\right)$ as
			\begin{equation*}
				\begin{aligned}
					E\left((\hat{\bs{p}}_{XY}^{\mathcal{W}})_i(\hat{\bs{p}}_{XY}^{\mathcal{W}})^*_i\right) &= E \bigg( 
					\frac{1}{M}\sum\limits_{m=1}^{M} (\tilde{\bs{W}}_{m,i}^\top \tilde{\bs{x}}) (\tilde{\bs{W}}_{m,i}^\top \tilde{\bs{y}})^* \frac{1}{M}\sum\limits_{m'=1}^{M} (\tilde{\bs{W}}_{m',i}^\top \tilde{\bs{x}})^* (\tilde{\bs{W}}_{m',i}^\top \tilde{\bs{y}}) \bigg) \\
					& = \frac{1}{M^2} \sum\limits_{m=1}^{M} \sum\limits_{m'=1}^{M} \tilde{\bs{W}}_{m,i}^\top E(\tilde{\bs{x}} \tilde{\bs{y}}^H \tilde{\bs{W}}_{m,i}^* \tilde{\bs{W}}_{m',i}^H \tilde{\bs{x}}^* \tilde{\bs{y}}^\top) \tilde{\bs{W}}_{m',i},
                \end{aligned}
			\end{equation*}
			where $\tilde{\bs{W}}_{m,i}^\top$ is the $i$th row vector of $\tilde{\bs{W}}_{m}$, $\tilde{\bs{x}} = \bs{V}^H \bs{x}$, and $\tilde{\bs{y}} = \bs{V}^H \bs{y}$. Since $\bs{S}$ is symmetric, $\bs{V}$ is real matrix. Therefore, $\tilde{\bs{x}}$ and $\tilde{\bs{y}}$ are zero-mean real-valued multivariate normal random vectors, allowing us to utilize \Cref{cor:fourth_moment} to evaluate $E(\tilde{\bs{x}} \tilde{\bs{y}}^H \tilde{\bs{W}}_{m,i}^* \tilde{\bs{W}}_{m',i}^H \tilde{\bs{x}}^* \tilde{\bs{y}}^\top)$, in conjunction with the results from Property 3 in \cite{Segarra2016} and \Cref{prop:gftuncorr_cpsd}.  
			\begin{equation*}
				\begin{aligned}
                    E(\tilde{\bs{x}} \tilde{\bs{y}}^H \tilde{\bs{W}}_{m,i}^* \tilde{\bs{W}}_{m',i}^H \tilde{\bs{x}}^* \tilde{\bs{y}}^\top)  = &\operatorname{diag}(\bs{p}_{XY}) \tilde{\bs{W}}_{m,i}^* \tilde{\bs{W}}_{m',i}^H \operatorname{diag}(\bs{p}_{XY}) + \operatorname{diag}(\bs{p}_{X}) \tilde{\bs{W}}_{m',i}^* \tilde{\bs{W}}_{m,i}^H \operatorname{diag}(\bs{p}_{Y}) \\
					& + \operatorname{tr}(\tilde{\bs{W}}_{m,i}^* \tilde{\bs{W}}_{m',i}^H \operatorname{diag}(\bs{p}_{XY})) \operatorname{diag}(\bs{p}_{XY}).
				\end{aligned}
			\end{equation*}
			Then, by appropriately transposing scalar factors and utilizing the property of the trace, we obtain
			\begin{equation*}
                \begin{aligned}
					\tilde{\bs{W}}_{m,i}^\top E(\tilde{\bs{x}} \tilde{\bs{y}}^H \tilde{\bs{W}}_{m,i}^* \tilde{\bs{W}}_{m',i}^H \tilde{\bs{x}}^* \tilde{\bs{y}}^\top) \tilde{\bs{W}}_{m',i}  = &\tilde{\bs{W}}_{m,i}^\top \operatorname{diag}(\bs{p}_{XY}) \tilde{\bs{W}}_{m,i}^* \tilde{\bs{W}}_{m',i}^H \operatorname{diag}(\bs{p}_{XY})\tilde{\bs{W}}_{m',i} \\
					& + \tilde{\bs{W}}_{m,i}^\top \operatorname{diag}(\bs{p}_{X}) \tilde{\bs{W}}_{m',i}^* \tilde{\bs{W}}_{m,i}^H \operatorname{diag}(\bs{p}_{Y}) \tilde{\bs{W}}_{m',i}\\
					& + \lvert \tilde{\bs{W}}_{m,i}^\top \operatorname{diag}(\bs{p}_{XY}) \tilde{\bs{W}}_{m',i}^*\rvert^2.
				\end{aligned}
			\end{equation*}
			Now, let us evaluate $\big\lvert E((\hat{\bs{p}}_{XY}^{\mathcal{W}})_i) \big\rvert^2$. 
			\begin{equation*}
				\begin{aligned}
					\big\lvert E((\hat{\bs{p}}_{XY}^{\mathcal{W}})_i) \big\rvert^2 
					& = E \left( 
					\frac{1}{M}\sum\limits_{m=1}^{M} (\tilde{\bs{W}}_{m,i}^\top \tilde{\bs{x}}) (\tilde{\bs{W}}_{m,i}^\top \tilde{\bs{y}})^* \right) E \left(\frac{1}{M}\sum\limits_{m'=1}^{M} (\tilde{\bs{W}}_{m',i}^\top \tilde{\bs{x}})^* (\tilde{\bs{W}}_{m',i}^\top \tilde{\bs{y}}) \right) \\
					& = \frac{1}{M^2} \sum\limits_{m=1}^{M} \sum\limits_{m'=1}^{M} \tilde{\bs{W}}_{m,i}^\top \operatorname{diag}(\bs{p}_{XY}) \tilde{\bs{W}}_{m,i}^* \tilde{\bs{W}}_{m',i}^H \operatorname{diag}(\bs{p}_{XY}) \tilde{\bs{W}}_{m',i}.
				\end{aligned}
			\end{equation*}
			Thus,
			\begin{equation*}
				\begin{aligned}
					\left(Var(\hat{\bs{p}}_{XY}^{\mathcal{W}})\right)_{i,i} & = E\left((\hat{\bs{p}}_{XY}^{\mathcal{W}})_i(\hat{\bs{p}}_{XY}^{\mathcal{W}})^*_i\right) - \big\lvert E((\hat{\bs{p}}_{XY}^{\mathcal{W}})_i) \big\rvert^2  \\
					& = \frac{1}{M^2} \sum\limits_{m=1}^{M} \sum\limits_{m'=1}^{M} \left(\tilde{\bs{W}}_{m,i}^\top \operatorname{diag}(\bs{p}_{X}) \tilde{\bs{W}}_{m',i}^* \tilde{\bs{W}}_{m,i}^H \operatorname{diag}(\bs{p}_{Y}) \tilde{\bs{W}}_{m',i} + \lvert \tilde{\bs{W}}_{m,i}^\top \operatorname{diag}(\bs{p}_{XY}) \tilde{\bs{W}}_{m',i}^*\rvert^2 \right) \\
					& = \frac{1}{M^2} \sum\limits_{m=1}^{M} \sum\limits_{m'=1}^{M} \big(\langle (\tilde{\bs{W}}_{m,m'}\bs{p}_{X})_i, (\tilde{\bs{W}}_{m,m'}\bs{p}_{Y})_i \rangle + \lvert (\tilde{\bs{W}}_{m,m'}\bs{p}_{XY})_i\rvert^2 \big).
				\end{aligned}
			\end{equation*}
			Therefore, the proof is completed as follows.
			\begin{equation*}
				\operatorname{tr}(Var(\hat{\bs{p}}_{XY}^{\mathcal{W}})) = \frac{1}{M^2} \sum\limits_{m=1}^{M} \sum\limits_{m'=1}^{M} \operatorname{tr}\big((\tilde{\bs{W}}_{m,m'}\bs{p}_{XY})(\tilde{\bs{W}}_{m,m'}\bs{p}_{XY})^H + (\tilde{\bs{W}}_{m,m'}\bs{p}_{X})(\tilde{\bs{W}}_{m,m'}\bs{p}_{Y})^H\big).
			\end{equation*}
		\end{proof}
		
		\subsection{Proof of \Cref{prop:gcpsd_windowdesign}} \label{pf:gcpsd_windowdesign}
		\begin{proof}
			It is directly extended from the proof of Proposition 5 in \cite{Marques2017} by \Cref{prop:crosscov_element}.
		\end{proof}
		
		\subsection{Proof of \Cref{prop:coherence_ineq}} \label{pf:coherence_ineq}
		\begin{proof}
			For a given $1 \le i \le N$, we know that $\lvert (\bs{p}_{XY})_i \rvert^2 = \lvert (E(\tilde{X} \tilde{Y}^H))_{i,i} \rvert^2 = \lvert E(\tilde{X}_i \tilde{Y}_i^*) \rvert^2$ from \Cref{prop:gftuncorr_cpsd}, where $\tilde{X}_i$ and $\tilde{Y}_i$ represent the $i$th elements of the GFT processes $\tilde{X}$ and $\tilde{Y}$, respectively. Similarly, $(\bs{p}_{X})_i = E(\tilde{X}_i \tilde{X}_i^*)$ and $(\bs{p}_{Y})_i = E(\tilde{Y}_i \tilde{Y}_i^*)$. Therefore, it is clear that $0 \le \bs{c}_{XY}$. Moreover, by the Cauchy-Schwarz inequality, it also holds that $\bs{c}_{XY} \le 1$. 
			
			To satisfy the equality $\bs{c}_{XY}=\bs{0}$, $E(\tilde{X}_i \tilde{Y}_i^*)$ should be zero for all $i$. It is equivalent to $\bs{p}_{XY}=\bs{0}$ from \Cref{prop:gftuncorr_cpsd}, which is also equivalent to $\bs{\Sigma}_{XY}$ being a zero matrix. Thus, the equality  $\bs{c}_{XY}=\bs{0}$ holds when $X$ and $Y$ are uncorrelated. To satisfy the  equality $\bs{c}_{XY}=\bs{1}$, $\tilde{Y}_i = \alpha_i \tilde{X}_i$ for some constant $\alpha_i$ for all $i$. It is satisfied when $Y = \bs{\mathrm{H}} X$ for a filter $\bs{\mathrm{H}}$ with frequency response $\tilde{\bs{h}}_i = \alpha_i$, and the proof is complete.
		\end{proof}
		
	\end{appendices}
	\clearpage

	\bibliographystyle{apalike}
	\bibliography{refs}

\end{document}